\theoremstyle{plain}
\newtheorem{theorem}{Theorem}[section]
\newtheorem{lemma}[theorem]{Lemma}
\newtheorem{proposition}[theorem]{Proposition}
\newtheorem{corollary}[theorem]{Corollary}
\theoremstyle{definition}
\newtheorem{notation}[theorem]{Notation}
\newtheorem{example}[theorem]{Example}
\newtheorem{definition}[theorem]{Definition}
\theoremstyle{remark}
\newtheorem{remark}[theorem]{Remark}
\newcommand{\dref}[1]{Definition \ref{#1}}
\newcommand{\lref}[1]{Lemma \ref{#1}}
\newcommand{\tref}[1]{Theorem \ref{#1}}
\newcommand{\pref}[1]{Proposition \ref{#1}}
\newcommand{\cref}[1]{Corollary \ref{#1}}
\newcommand{\rref}[1]{Remark \ref{#1}}
\newcommand{\nref}[1]{Notation \ref{#1}}
\newcommand{\eref}[1]{Example \ref{#1}}
\newcommand{\sref}[1]{Section \ref{#1}}
\newcommand{\ssref}[1]{Subsection \ref{#1}}
\newcommand{\powerset}{\raisebox{.15\baselineskip}{\Large\ensuremath{\wp}}}
\begin{document}


\title[On the structure of $C$-algebras]{On the structure of $C$-algebras through atomicity and {\tt if-then-else}}

\author[Gayatri Panicker]{Gayatri Panicker}
\address{Department of Mathematics,  Indian Institute of Technology Guwahati, Guwahati, India}
\email{p.gayatri@iitg.ac.in}
\author[K. V. Krishna]{K. V. Krishna}
\address{Department of Mathematics, Indian Institute of Technology Guwahati, Guwahati, India}
\email{kvk@iitg.ac.in}
\author[Purandar Bhaduri]{Purandar
 Bhaduri}
\address{Department of Computer Science and Engineering, Indian Institute of Technology Guwahati, Guwahati, India}
\email{pbhaduri@iitg.ac.in}


\begin{abstract}
This paper introduces the notions of atoms and atomicity in $C$-algebras and obtains a characterisation of atoms in the $C$-algebra of transformations. Further, this work presents some necessary conditions and sufficient conditions for the atomicity of $C$-algebras and shows that the class of finite atomic $C$-algebras is precisely that of finite adas. This paper also uses the {\tt if-then-else} action to study the structure of $C$-algebras and classify the elements of the $C$-algebra of transformations.
\end{abstract}

\subjclass[2010]{08A70, 03G25 and 68N15.}

\keywords{$C$-algebra, atoms, if-then-else, annihilator}

\maketitle

\section*{Introduction}

The concept of atoms in Boolean algebras is extremely useful for achieving a structural representation of Boolean algebras. When moving from two-valued Boolean logic to one that is three-valued, there are multiple such logics available depending on the interpretation of the third truth value, {\tt undefined} (e.g., see \cite{belnap70}, \cite{bergstra95}, \cite{bochvar38}, \cite{heyting34}, \cite{kleene38}, \cite{lukasiewicz20}). The three-valued logic proposed by McCarthy in \cite{mccarthy63} models the short-circuit evaluation exhibited by programming languages that evaluate expressions in sequential order, from left to right. In \cite{guzman90}, Guzm\'{a}n and Squier gave a complete axiomatization of McCarthy's three-valued logic and called the corresponding algebra a $C$-algebra, or the algebra of conditional logic. While studying {\tt if-then-else} algebras, Manes in \cite{manes93} defined an ada (algebra of disjoint alternatives) which is essentially a $C$-algebra equipped with an oracle for the halting problem. In this work, using the partial order defined by Chang in \cite{chang58} for $MV$-algebras, we adopt the notion of atoms in Boolean algebras to $C$-algebras in order to study their structure and characterise the class of finite atomic $C$-algebras.

In order to address a problem posed by Jackson and Stokes in \cite{jackson15}, present authors introduced the notion of $C$-sets and studied axiomatization of {\tt if-then-else} over $C$-algebras in \cite{panicker16,panicker17a}. Every $C$-algebra has an inbuilt {\tt if-then-else} action using which we introduce a notion of annihilators in a natural manner, which aid in studying various structural properties of $C$-algebras.

The organisation of this paper is given as follows. In \sref{SectionCAlgAdas} we recall the formal definitions of $C$-algebras and adas along with various results that will be useful to us. In \sref{SecAtomicity} we adopt the notion of atoms in Boolean algebras to $C$-algebras to study structural properties of $C$-algebras. In \ssref{SectionDef} a partial order is given on the $C$-algebra $M$, following which the notions of atoms and atomic $C$-algebras are introduced. We also state some properties related to atomicity in \ssref{SectionPropsAtoms}. On studying the $C$-algebra $\mathbb{3}^{X}$, in \ssref{SectionAtom3X} we obtain a characterisation of all atoms in $\mathbb{3}^{X}$ (cf. \tref{Thm-Atoms-3-X}), using which we establish that the $C$-algebra $\mathbb{3}^{X}$ for finite $X$ is atomic (cf. \tref{ThmFin3XAtomic}). We introduce the notion of $M$ being globally closed in $\mathbb{3}^{X}$, or g-closed in short, and observe that such finite $C$-algebras are precisely $\mathbb{3}^{X}$ (cf. \tref{ThmGClosed3X}). Subsequently, we present some necessary or sufficient conditions for the atomicity of $C$-algebras in \sref{SectionNonAtomCAlg} (cf. Theorems \ref{ThmMHashMNonAtomic}, \ref{ThmAtomless}, \ref{ThmNotAtomic}). Finally in \sref{SectionFinAtomCAlg} we obtain a characterisation of all finite atomic $C$-algebras and establish that they are precisely adas (cf. \tref{ThmAtomicAda}).

We then recall the notion of a $C$-set and of closure operators in \sref{SecCsetsandclosureop}. In \ssref{SectionAnnihilators} we introduce a notion of annihilators in $C$-algebras with $T, F, U$ through the {\tt if-then-else} action. The notion of Galois connection yields a closure operator in terms of annihilator, which in turn, yields closed sets. Further, in \sref{SectionClosedSet3X} we characterise the closed sets in the $C$-algebra of transformations $\mathbb{3}^{X}$. Additionally, we show that the collection of closed sets in $\mathbb{3}^{X}$ forms a complete Boolean algebra (cf. \tref{BooleanAlg}). Moreover, we obtain a classification of the elements of $\mathbb{3}^{X}$ where the elements of the Boolean algebra $\mathbb{2}^{X}$ form a distinct class (cf. \tref{ThmPartition}). We conclude this work with \sref{SectionChpt6Concl}.


\section{$C$-algebras and adas} \label{SectionCAlgAdas}

In this section we consider McCarthy's ternary logic and the algebra associated with this logic, viz., $C$-algebra, as defined by Guzm\'{a}n and Squier in \cite{guzman90}. We then present material on adas, defined by Manes in \cite{manes93}, which is a special class of $C$-algebras equipped with an oracle for the halting problem.

In \cite{kleene52}, Kleene discussed various three-valued logics that are extensions of Boolean logic. McCarthy in \cite{mccarthy63} first studied the three-valued non-commutative logic in the context of programming languages. This is the non-commutative regular extension of Boolean logic to three truth values, where the third truth value $U$ denotes the {\tt undefined} state. In this context, the evaluation of expressions is carried out sequentially from left to right, mimicking that of a majority of programming languages. A complete axiomatization for the class of algebras associated with this logic was given by Guzm\'{a}n and Squier in \cite{guzman90} and they called the algebra associated with this logic a \emph{$C$-algebra}.

\begin{definition} \label{DefCAlgebra}
 A \emph{$C$-algebra} is an algebra $\langle M, \vee, \wedge, \neg \rangle$ of
type $(2, 2, 1)$, which satisfies the following axioms for all $\alpha, \beta, \gamma \in M$:

\begin{align}
  \neg \neg \alpha & = \alpha \label{C1} \\
   \neg (\alpha \wedge \beta) & = \neg \alpha \vee \neg \beta \label{C2} \\
   (\alpha \wedge \beta) \wedge \gamma & = \alpha \wedge (\beta \wedge \gamma) \label{C3} \\
   \alpha \wedge (\beta \vee \gamma) & = (\alpha \wedge \beta) \vee (\alpha \wedge \gamma) \label{C4} \\
   (\alpha \vee \beta) \wedge \gamma & = (\alpha \wedge \gamma) \vee (\neg \alpha \wedge \beta \wedge \gamma) \label{C5} \\
   \alpha \vee (\alpha \wedge \beta) & = \alpha \label{C6} \\
   (\alpha \wedge \beta) \vee (\beta \wedge \alpha) & = (\beta \wedge \alpha) \vee (\alpha \wedge \beta) \label{C7}
\end{align}
\end{definition}

\begin{example}
Every Boolean algebra is a $C$-algebra. In particular, the two-element Boolean algebra, $\mathbb{2}$ is a $C$-algebra.
\end{example}

\begin{example}
 Let $\mathbb{3}$ denote the $C$-algebra with the universe $\{ T, F, U \}$ and the following operations. This is, in fact, McCarthy's three-valued logic. \newline
 \begin{center}
  \begin{tabular}{c|c}
  $\neg$ & \\
  \hline
  $T$ & $F$ \\
  $F$ & $T$ \\
  $U$ & $U$
 \end{tabular}
 \quad
 \begin{tabular}{c|ccc}
  $\wedge$ & $T$ & $F$ & $U$ \\
  \hline
  $T$ & $T$ & $F$ & $U$ \\
  $F$ & $F$ & $F$ & $F$ \\
  $U$ & $U$ & $U$ & $U$
 \end{tabular}
 \quad
 \begin{tabular}{c|ccc}
  $\vee$ & $T$ & $F$ & $U$ \\
  \hline
  $T$ & $T$ & $T$ & $T$ \\
  $F$ & $T$ & $F$ & $U$ \\
  $U$ & $U$ & $U$ & $U$
 \end{tabular}
 \end{center}
 \quad
\end{example}

\begin{remark} \label{RemPairsOfSets}
In view of the fact that the class of $C$-algebras is a variety, for any set $X$, $\mathbb{3}^{X}$ is a $C$-algebra with the operations defined pointwise. In fact, in \cite{guzman90} Guzm\'{a}n and Squier showed that elements of $\mathbb{3}^{X}$ along with the $C$-algebra operations may be viewed in terms of \emph{pairs of sets}. This is a pair $(A, B)$ where $A, B \subseteq X$ and $A \cap B = \emptyset$. Akin to the well-known correlation between $\mathbb{2}^{X}$ and the power set $\powerset(X)$ of $X$, for any element $\alpha \in \mathbb{3}^{X}$, associate the pair of sets $(\alpha^{-1}(T), \alpha^{-1}(F))$. Conversely, for any pair of sets $(A, B)$ where $A, B \subseteq X$ and $A \cap B = \emptyset$ associate the function $\alpha$ where $\alpha(x) = T$ if $x \in A$, $\alpha(x) = F$ if $x \in B$ and $\alpha(x) = U$ otherwise. With this correlation, the operations can be expressed as follows:
\begin{align*}
 \neg (A_{1}, A_{2}) & = (A_{2}, A_{1}) \\
 (A_{1}, A_{2}) \wedge (B_{1}, B_{2}) & = (A_{1} \cap B_{1}, A_{2} \cup (A_{1} \cap B_{2})) \\
 (A_{1}, A_{2}) \vee (B_{1}, B_{2}) & = ((A_{1} \cup (A_{2} \cap B_{1}), A_{2} \cap B_{2})
\end{align*}
\end{remark}

Further, Guzm\'{a}n and Squier showed that every $C$-algebra is a subalgebra of $\mathbb{3}^{X}$ for some $X$ as stated below.

\begin{theorem}[\cite{guzman90}] \label{SubdirIrredCAlg}
$\mathbb{3}$ and $\mathbb{2}$ are the only subdirectly irreducible $C$-algebras. Hence, every $C$-algebra is a subalgebra of a product of copies of $\mathbb{3}$.
\end{theorem}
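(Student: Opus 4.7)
The plan is to deduce both sentences from a single \emph{separation lemma}: for every $C$-algebra $M$ and every pair of distinct elements $a, b \in M$, there exists a homomorphism $h_{a,b} \colon M \to \mathbb{3}$ with $h_{a,b}(a) \neq h_{a,b}(b)$. Granted this, the map $m \mapsto (h_{a,b}(m))_{a \neq b}$ is an injective homomorphism $M \hookrightarrow \mathbb{3}^{X}$, proving the second sentence. For the first sentence one uses the following: in such a subdirect representation the kernels of the projections intersect to the identity congruence, so if $M$ is subdirectly irreducible with monolith $\mu$ then some single kernel fails to contain $\mu$ and is therefore trivial, forcing that projection to be injective. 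Hence $M$ embeds into one copy of $\mathbb{3}$; since the subalgebras of $\mathbb{3}$ are only $\mathbb{3}$ itself, $\mathbb{2} = \{T, F\}$, and the trivial $\{U\}$, a subdirectly irreducible $M$ is necessarily $\mathbb{2}$ or $\mathbb{3}$.

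The remaining easy half is to verify that $\mathbb{2}$ and $\mathbb{3}$ really are subdirectly irreducible; in fact both are simple. For $\mathbb{3}$ this is a three-case check using the operation tables: if a non-trivial congruence identifies $T$ and $F$, then meeting both with $U$ yields $U = T \wedge U \sim F \wedge U = F$, collapsing everything, while the cases $T \sim U$ and $F \sim U$ reduce to this via the negation axiom \eqref{C1}. For $\mathbb{2}$ simplicity is immediate from its size.

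The main obstacle is the separation lemma. I would fix $a \neq b$ in $M$ and use Zorn's lemma to produce a congruence $\theta$ maximal with respect to $(a, b) \notin \theta$. The quotient $S = M/\theta$ is subdirectly irreducible by construction, since any non-trivial congruence on $S$ corresponds to a congruence on $M$ strictly larger than $\theta$, which must by maximality identify the images of $a$ and $b$. The task then reduces to showing that every such subdirectly irreducible $S$ embeds into $\mathbb{3}$. To this end, I would exploit axioms \eqref{C5}--\eqref{C7} to extract from $S$ a Boolean skeleton of ``classical'' elements together with an ``undefined'' region, arguing that subdirect irreducibility forces the skeleton to have at most two elements and the undefined region to have at most one. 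The partial commutativity provided by \eqref{C7} is precisely what makes this extraction possible; its failure to be full commutativity is what blocks a direct transplant of Stone's representation theorem for Boolean algebras, and is, I expect, the delicate point of the argument.
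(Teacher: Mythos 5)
This theorem is not proved in the paper at all: it is quoted verbatim from Guzm\'{a}n and Squier \cite{guzman90}, where the substantial work is done. So the only question is whether your proposal actually constitutes a proof, and it does not. The genuine gap is the last paragraph. Your reductions are standard universal algebra and are fine as far as they go: Zorn's lemma produces, for $a \neq b$, a congruence maximal among those separating $a$ from $b$, the quotient is subdirectly irreducible, the monolith argument correctly shows that a subdirectly irreducible algebra admitting a subdirect representation over $\mathbb{3}$ embeds into one copy of $\mathbb{3}$, and the simplicity check for $\mathbb{3}$ is correct. But the entire content of the theorem is the step you defer: showing that every subdirectly irreducible $C$-algebra embeds into $\mathbb{3}$ (equivalently, that there are enough homomorphisms into $\mathbb{3}$ to separate points). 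For that step you offer only the intention to ``extract a Boolean skeleton of classical elements together with an undefined region'' from axioms \eqref{C5}--\eqref{C7} and the expectation that subdirect irreducibility bounds their sizes; you yourself flag this as the delicate point. No construction of congruences, no separation of elements, and no use of the specific $C$-algebra identities beyond naming them is given, so the load-bearing claim is asserted, not proved.

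There is also a structural circularity you should untangle: you present the separation lemma as the source of both sentences, but your proposed proof of the separation lemma runs through Birkhoff's subdirect decomposition and then needs exactly the classification of subdirectly irreducibles --- the first sentence of the theorem --- to conclude. So in your architecture the classification is the primitive fact and the separation lemma is a corollary, not the other way around; once this is rearranged, what remains to be proved is precisely the unproved step above. Two further points of care for any honest attempt: the theorem concerns $C$-algebras in the signature $(\vee, \wedge, \neg)$ without nullary operations, so an arbitrary $C$-algebra need not contain elements playing the roles of $T$, $F$, $U$ (indeed $\mathbb{2}$ has no fixed point of $\neg$), which complicates any ``skeleton plus undefined region'' decomposition; and the analysis must rule out subdirectly irreducible algebras of cardinality greater than three, which is where the detailed equational work of \cite{guzman90} (e.g.\ exploiting \eqref{C5} and \eqref{C7} to manufacture separating congruences or homomorphisms onto $\mathbb{3}$) is unavoidable.
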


\begin{remark}
 Considering a $C$-algebra $M$ as a subalgebra of $\mathbb{3}^{X}$, one may observe that $M_{\#} = \{ \alpha \in M : \alpha \vee \neg \alpha = T \}$ forms a Boolean algebra under the induced operations.
\end{remark}

\begin{notation} \label{NotaCAlgWithTFU}
A $C$-algebra with $T, F, U$ is a $C$-algebra with nullary operations $T, F, U$, where $T$ is the (unique) left-identity (and right-identity) for $\wedge$, $F$ is the (unique) left-identity (and right-identity) for $\vee$ and $U$ is the (unique) fixed point for $\neg$. Note that $U$ is also a left-zero for both $\wedge$ and $\vee$ while $F$ is a left-zero for $\wedge$.
\end{notation}

\begin{notation} \label{Nota3XConstants}
 The constants $T, F, U$ of the $C$-algebra $\mathbb{3}^{X}$ will be denoted by ${\bf T}, {\bf F}, {\bf U}$ respectively, and they can be identified by the pairs of sets $(X, \emptyset), (\emptyset, X), (\emptyset, \emptyset)$ respectively.

 Let $M$ be a $C$-algebra with $T, F, U$. When $M$ is considered as a subalgebra of $\mathbb{3}^{X}$, the constants $T, F, U$ of $M$ will also be denoted by ${\bf T}, {\bf F}, {\bf U}$ respectively.
\end{notation}

There is an important subclass of the variety of $C$-algebras. Manes in \cite{manes93} introduced the notion of \emph{ada} (algebra of disjoint alternatives) which is a $C$-algebra equipped with an oracle for the halting problem. He showed that the category of adas is equivalent to that of Boolean algebras. The $C$-algebra $\mathbb{3}$ is not functionally-complete. However, $\mathbb{3}$ is functionally-complete when treated as an ada. In fact, the variety of adas is generated by the ada $\mathbb{3}$.

\begin{definition}
 An \emph{ada} is a $C$-algebra $M$ with $T, F, U$ equipped with an additional unary operation $(\text{ })^{\downarrow}$ subject to the following equations for all $\alpha, \beta \in M$:
 \begin{align}
  F^{\downarrow} & = F \label{A1} \\
  U^{\downarrow} & = F \label{A2} \\
  T^{\downarrow} & = T \label{A3} \\
  \alpha \wedge \beta^{\downarrow} & = \alpha \wedge (\alpha \wedge \beta)^{\downarrow} \label{A4} \\
  \alpha^{\downarrow} \vee \neg (\alpha^{\downarrow}) & = T \label{A5} \\
  \alpha & = \alpha^{\downarrow} \vee \alpha \label{A6}
 \end{align}
\end{definition}

\begin{example} \label{Example3Ada}
 The three-element $C$-algebra $\mathbb{3}$ with the unary operation $(\text{ })^{\downarrow}$ defined as follows forms an ada.
 \begin{align*}
  T^{\downarrow} & = T \\
  U^{\downarrow} & = F = F^{\downarrow}
 \end{align*}
We also use $\mathbb{3}$ to denote this ada. One may easily resolve the notation overloading -- whether $\mathbb{3}$ is a $C$-algebra or an ada --  depending on the context.

\end{example}

In \cite{manes93}, Manes showed that the three-element ada $\mathbb{3}$ is the only subdirectly irreducible ada. For any set $X$, $\mathbb{3}^{X}$ is an ada with operations defined pointwise. Note that the three element ada $\mathbb{3}$ is also simple.

\begin{remark} \label{RemEnvAda}
 Since adas are $C$-algebras with an additional operation, every $C$-algebra $M$ freely generates an ada $\hat{M}$. That is, there exists a $C$-algebra homomorphism $\phi : M \rightarrow \hat{M}$ with the universal property that for each ada $A$ and $C$-algebra homomorphism $f: M \rightarrow A$ there exists a unique ada homomorphism $\psi: \hat{M} \rightarrow A$ with $\psi (\phi(x)) = f(x)$ for all $x \in M$. In \cite{manes93}, Manes called such an ada the \emph{enveloping ada} of $M$.
\end{remark}

Manes also showed the following result.

\begin{proposition}[\cite{manes93}]
Let $A$ be an ada. Then $A^{\downarrow} = \{ \alpha^{\downarrow} : \alpha \in A \}$ forms a Boolean algebra under the induced operations.
\end{proposition}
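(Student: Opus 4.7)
The plan is to identify $A^{\downarrow}$ with the set $A_{\#} = \{\alpha \in A : \alpha \vee \neg \alpha = T\}$, which by the remark following \tref{SubdirIrredCAlg} already forms a Boolean algebra under the induced $C$-algebra operations. If this identification holds, the statement is immediate.

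The inclusion $A^{\downarrow} \subseteq A_{\#}$ is read off directly from axiom (\ref{A5}), which says precisely that $\alpha^{\downarrow} \vee \neg(\alpha^{\downarrow}) = T$ for every $\alpha \in A$. For the reverse inclusion I would use the subdirect representation of adas: since, as Manes showed, $\mathbb{3}$ is the only subdirectly irreducible ada, one may regard $A$ as a subalgebra of $\mathbb{3}^{X}$ for some set $X$, with every operation (including $(\text{ })^{\downarrow}$) acting coordinatewise. Given $\beta \in A_{\#}$, the identity $\beta \vee \neg \beta = {\bf T}$ holds pointwise; inspecting the disjunction table on $\mathbb{3}$ rules out the value $U$, so $\beta(x) \in \{T, F\}$ for every $x \in X$. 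Reading off $T^{\downarrow} = T$, $F^{\downarrow} = F$, $U^{\downarrow} = F$ from \eref{Example3Ada} and computing coordinatewise yields $\beta^{\downarrow} = \beta$, and hence $\beta \in A^{\downarrow}$. Combining the two inclusions gives $A^{\downarrow} = A_{\#}$, so $A^{\downarrow}$ inherits the Boolean-algebra structure.

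The only step of substance is the justification of the pointwise argument via the embedding into $\mathbb{3}^{X}$; everything else is a direct appeal to the ada axioms and to the Boolean-algebra structure on $A_{\#}$ already established for $C$-algebras. A purely equational derivation of $\beta^{\downarrow} = \beta$ for $\beta \in A_{\#}$, using (\ref{A5}) and (\ref{A6}) with the $C$-algebra identities, should be possible but seems considerably less transparent than the representation route.
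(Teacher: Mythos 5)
Your proposal is correct. The paper itself gives no proof of this proposition (it is quoted from Manes), but your key identification $A^{\downarrow} = A_{\#}$ is exactly what the paper records in \rref{RemAdaDownarrow}, and both inclusions are sound: \eqref{A5} gives $A^{\downarrow} \subseteq A_{\#}$, and the subdirect representation of an ada inside $\mathbb{3}^{X}$ (with $(\text{ })^{\downarrow}$ acting coordinatewise, since $\mathbb{3}$ is the only subdirectly irreducible ada) forces $\beta(x) \in \{T, F\}$ for $\beta \in A_{\#}$ and hence $\beta^{\downarrow} = \beta$, giving the reverse inclusion; the Boolean structure then transfers from $A_{\#}$ as stated in the remark following \tref{SubdirIrredCAlg}.
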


\begin{remark} \label{RemAdaDownarrow}
 In fact, $A^{\downarrow} = A_{\#}$. Also, $A^{\downarrow} = \{ \alpha \in A : \alpha^{\downarrow} = \alpha \}.$
\end{remark}

Further, as outlined in the following remark, Manes established that the category of adas and the category of Boolean algebras are equivalent.

\begin{remark}[\cite{manes93}] \label{RemarkStone}
Let $Q$ be a Boolean algebra. By Stone's representation of Boolean algebras, suppose $Q$ is a subalgebra of $\mathbb{2}^{X}$ for some set $X$. Consider the subalgebra $Q^{\star}$ of the ada $\mathbb{3}^{X}$ with the universe $Q^{\star} = \{(E, F) : E \cap F = \emptyset\}$ given in terms of pairs of subsets of $X$. Note that the map $Q \mapsto (Q^{\star})_{\#}$ is a Boolean isomorphism. Similarly, for an ada $A$, the map $A \mapsto (A_{\#})^{\star}$ is an ada isomorphism. Hence, the functor based on the aforesaid assignment establishes that the category of adas and the category of Boolean algebras are equivalent.
\end{remark}

\begin{remark} \label{RemFinAda3X}
 In view of the fact that the only finite Boolean algebras are $\mathbb{2}^{X}$ for finite $X$ and the equivalence of the categories of adas and Boolean algebras, we see that the only finite adas are $\mathbb{3}^{X}$ for finite $X$.
\end{remark}

\begin{notation}
Let $X$ be a set and $\bot \notin X$. The pointed set $X \cup \{ \bot \}$ with base point $\bot$ is denoted by $X_{\bot}$. The set of all functions on $X_{\bot}$ which fix $\bot$ is denoted by $\mathcal{T}_{o}(X_{\bot})$, i.e. $\mathcal{T}_{o}(X_{\bot}) = \{f \in \mathcal{T}(X_{\bot}) \; : \; f(\bot) = \bot\}$.
\end{notation}


\section{Atomicity} \label{SecAtomicity}

In this section we adopt the notion of atoms in Boolean algebras to $C$-algebras. First, in \ssref{SectionDef} a partial order is given on the $C$-algebra $M$, following which the notions of atoms and atomic $C$-algebras are introduced. We state various properties related to atomicity in \ssref{SectionPropsAtoms} while a characterisation of atoms in $\mathbb{3}^{X}$ is given in \ssref{SectionAtom3X} (cf. \tref{Thm-Atoms-3-X}). Subsequently, we present some necessary or sufficient conditions for the atomicity of $C$-algebras in \ssref{SectionNonAtomCAlg} (cf. Theorems \ref{ThmMHashMNonAtomic}, \ref{ThmAtomless}, \ref{ThmNotAtomic}). Finally in \ssref{SectionFinAtomCAlg} we obtain a characterisation of finite atomic $C$-algebras and establish that they are precisely adas (cf. \tref{ThmAtomicAda}).


\subsection{Atoms and atomicity} \label{SectionDef}

We assume that $M$ is a $C$-algebra with $T, F, U$ unless mentioned otherwise. We denote elements of $M$ by $a, b, c$ and $\alpha, \beta, \gamma$. The elements of the $C$-algebra $\mathbb{3}^{X}$ will also be denoted by $\alpha, \beta, \gamma, \delta$. We continue to denote constants $T, F, U$ of $M \leq \mathbb{3}^{X}$ by ${\bf T}, {\bf F}, {\bf U}$ respectively. We begin with a partial order defined on $C$-algebras and follow the notion of the partial order given by C. C. Chang in \cite{chang58} regarding $MV$-algebras.

\begin{proposition}
 The relation $\leq$ on $M$ defined by $a \leq b$ if $a \vee b = b$ is a partial order on $M$.
\end{proposition}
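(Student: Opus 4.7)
The plan is to verify the three defining properties of a partial order---reflexivity, antisymmetry, and transitivity---directly from the $C$-algebra axioms (C1)--(C7), using only that $T$ is a right-identity for $\wedge$. Along the way I would first collect a handful of standard derived identities that are not stated explicitly in \dref{DefCAlgebra} but follow almost immediately.

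For reflexivity, axiom (C6) with $\beta = T$ gives $a \vee a = a \vee (a \wedge T) = a$. Combining (C1) with (C2) yields the dual De Morgan law $\neg(\alpha \vee \beta) = \neg\alpha \wedge \neg\beta$, and applying this with $\alpha = \beta = \neg a$ promotes $\vee$-idempotency (for $\neg a$) to $\wedge$-idempotency $a \wedge a = a$. Then (C4) and (C6) deliver the dual absorption law
\[
a \wedge (a \vee b) \;=\; (a \wedge a) \vee (a \wedge b) \;=\; a \vee (a \wedge b) \;=\; a.
\]
A single De Morgan step converts the $\wedge$-associativity of (C3) into $\vee$-associativity, $(a \vee b) \vee c = a \vee (b \vee c)$. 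With $\vee$ associative, transitivity is immediate: if $a \vee b = b$ and $b \vee c = c$ then $a \vee c = a \vee (b \vee c) = (a \vee b) \vee c = b \vee c = c$.

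For antisymmetry, suppose $a \vee b = b$ and $b \vee a = a$. The dual absorption law applied to each hypothesis yields
\[
a \wedge b \;=\; a \wedge (a \vee b) \;=\; a \quad\text{and}\quad b \wedge a \;=\; b \wedge (b \vee a) \;=\; b.
\]
Substituting these values into (C7), which reads $(a \wedge b) \vee (b \wedge a) = (b \wedge a) \vee (a \wedge b)$, produces $a \vee b = b \vee a$, and combined with the two hypotheses this forces $a = b$.

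The main obstacle is antisymmetry, since $\vee$ is genuinely non-commutative in a $C$-algebra (for instance $T \vee U = T$ but $U \vee T = U$ in $\mathbb{3}$), so one cannot simply chain $b = a \vee b = b \vee a = a$. The role of the somewhat unusual-looking axiom (C7) is precisely to supply commutativity of $\vee$ on pairs of the form $(a \wedge b,\, b \wedge a)$, and the essential observation is that, via dual absorption, the two hypotheses force the given pair $(a,b)$ into exactly this form.
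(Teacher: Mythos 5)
Your proof is correct, and for the key step (antisymmetry) it follows a genuinely different route from the paper. The paper invokes the Guzm\'{a}n--Squier representation (\tref{SubdirIrredCAlg}), regards $M$ as a subalgebra of $\mathbb{3}^{X}$, and checks $a(x)=b(x)$ by a pointwise case analysis on $b(x)\in\{T,F,U\}$; reflexivity is simply asserted via $a\vee a=a$, and transitivity uses associativity of $\vee$ exactly as you do. You instead stay inside the equational theory: you derive $\vee$-idempotency from \eqref{C6} and $a\wedge T=a$, the dual De Morgan law from \eqref{C1}--\eqref{C2}, hence $\wedge$-idempotency, the dual absorption $a\wedge(a\vee b)=a$ from \eqref{C4} and \eqref{C6} (the same identity the paper later records in \pref{PropAtomRelations}(vi)), and $\vee$-associativity from \eqref{C3} by De Morgan; then the two hypotheses $a\vee b=b$, $b\vee a=a$ give $a\wedge b=a$ and $b\wedge a=b$, so \eqref{C7} yields $a\vee b=b\vee a$ and hence $a=b$. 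Each step checks out, and your closing remark correctly identifies why \eqref{C7} is indispensable here, since $\vee$ is not commutative. What the two approaches buy: yours is self-contained and syntactic, needing neither the subdirect representation nor any pointwise reasoning (and, unlike ``verify it in $\mathbb{3}$'', it handles a quasi-identity, which is exactly why the paper has to pass to the embedding rather than merely to the generating algebra), and it exposes the structural role of \eqref{C7}; the paper's argument is shorter to write given that $M\leq\mathbb{3}^{X}$ is already on the table and matches the pointwise style used throughout the rest of the paper.
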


\begin{proof}
 Let $a, b, c \in M$. Since $a \vee a = a$ we have $a \leq a$ from which it follows that $\leq$ is reflexive.

 Suppose that $a \leq b$ and $b \leq a$ so that $a \vee b = b$ and $b \vee a = a$. Using the fact that $M \leq \mathbb{3}^{X}$ for some set $X$ we have $a(x) \vee b(x) = b(x)$ and $b(x) \vee a(x) = a(x)$ for all $x \in X$. It suffices to consider the following three cases:
 \begin{description}
  \item [$b(x) = T$] Then $b(x) \vee a(x) = a(x)$ gives $T \vee a(x) = a(x)$ that is $a(x) = T$.
  \item [$b(x) = F$] Then $a(x) \vee b(x) = b(x)$ and so $a(x) \vee F = F$ so that $a(x) = F$.
  \item [$b(x) = U$] Then $b(x) \vee a(x) = a(x)$ that is $U \vee a(x) = a(x)$ and so $a(x) = U$.
 \end{description}
 In all three cases $a(x) = b(x)$ and so $a = b$. Hence $\leq$ is antisymmetric.

 In order to show that $\leq$ is transitive consider $a \leq b$ and $b \leq c$. Then $a \vee b = b$ and $b \vee c = c$. It is clear that $a \vee c = a \vee (b \vee c) = (a \vee b) \vee c = b \vee c = c$ and so $a \leq c$. This completes the proof.
\end{proof}

\begin{example}
 In the $C$-algebra $\mathbb{3}$ we have $F \leq T$ and $F \leq U$ while $T \nleq U$ and $U \nleq T$.
\end{example}

\begin{remark}
 In fact $F \leq a$ for all $a \in M$. This partial order does not induce a lattice structure on $M$.
\end{remark}

With this partial order we define the notion of an atom in $M$ below.

\begin{definition}
 An element $a \in M$ where $a \neq F$ is said to be an \emph{atom} if for all $b \in M$ if $F \leq b \leq a$ and $b \neq a$ then $b = F$. We denote the set of atoms of $M$ by $\mathscr{A}(M)$.

 For $A \subseteq M$ where $\{ F \} \subseteq A$ define the \emph{atoms relative to $A$} as those elements $a \in A$ such that for all $b \in A$ if $F \leq b \leq a$ and $b \neq a$ then $b = F$. We denote the set of atoms relative to $A$ as $\mathscr{A}(A)$.
\end{definition}

\begin{example}
 In $\mathbb{3}$ we have $\mathscr{A}(\mathbb{3}) = \{ T, U \}$.
\end{example}

\begin{example}
 In $\mathbb{3} \times \mathbb{3} = \mathbb{3}^{2}$ we have $\mathscr{A}(\mathbb{3}^{2}) = \{ (T, F), (F, T), (F, U), (U, F) \}$.
\end{example}

\begin{example} \label{ExampleNotUniqueRep}
 Consider $M = \mathbb{3}^{2} \setminus \{ (T, F),(F, T) \}$. Then $\mathscr{A}(M) = \{ (T, T), (F, U), (U, F) \}$.
\end{example}

\begin{remark}
 The representation of elements as join of atoms need not be unique. Consider $M = \mathbb{3}^{2} \setminus \{ (T, F),(F, T) \}$ as in \eref{ExampleNotUniqueRep}. Then $(T, T) = (T, T) \vee (F, U)$ while also $(T, T) = (T, T) \vee (U, F)$.
\end{remark}

\begin{definition}
 Let $\{ a_{i} : 1 \leq i \leq N \}$ be a finite set of atoms of $M$ such that for every rearrangement of $(a_{i})_{i = 1}^{N}$ the join of these elements remain unchanged. More precisely, if for every bijection $\sigma : \{ 1, 2, \ldots, N \} \rightarrow \{1, 2, \ldots, N \}$ we have $$a_{\sigma(1)} \vee a_{\sigma(2)} \vee \cdots \vee a_{\sigma(N)} = a_{1} \vee a_{2} \vee \cdots \vee a_{N} = a_{o} \text{ (say)}$$ then define $$\displaystyle \bigoplus_{i = 1}^{N} a_{i} = a_{o}.$$
\end{definition}

\begin{remark}
 Thus $\bigoplus_{i = 1}^{N} a_{i}$ exists when the $a_{i}$'s commute under $\vee$.
\end{remark}

\begin{example}
 Let $M = \mathbb{3}^{2}$. Then $(T, U) = (T, F) \oplus (F, U)$.
\end{example}

\begin{definition}
 Let $M$ be a $C$-algebra with $T, F, U$. We say that $M$ is \emph{atomic} if for every $(F \neq) \text{ } a \in M$ there exist a finite set of atoms $\{a_{i} : 1 \leq i \leq N \}$ such that $$a = \bigoplus_{i = 1}^{N} a_{i}$$.
\end{definition}

\begin{example}
 The $C$-algebra $M = \mathbb{3}^{2}$ is atomic.
\end{example}

\begin{example}
 Consider $M = \mathbb{3}^{2} \setminus \{ (T, F),(F, T) \}$ for which \\ $\mathscr{A}(M) = \{ (T, T), (F, U), (U, F) \}$. Then $(T, U)$ cannot be written as $\oplus$ of atoms. Thus $M$ is not atomic.
\end{example}


\subsection{Properties of atoms} \label{SectionPropsAtoms}

In this section we list some properties that are satisfied by the set of atoms of $M$.

\begin{proposition} \label{PropAtomLeq}
 Let $M$ be a finite $C$-algebra with $T, F, U$. Then for each $a \in M$ ($a \neq F$) there exists $a_{o} \in \mathscr{A}(M)$ such that $a_{o} \leq a$.
\end{proposition}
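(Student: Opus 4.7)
The plan is to exploit the finiteness of $M$ to pick a minimal element below $a$ in a suitable set, and then verify that this minimal element is indeed an atom.

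First I would define $S = \{ b \in M : b \neq F \text{ and } b \leq a \}$. This set is nonempty: since $a \vee a = a$ we have $a \leq a$, and by assumption $a \neq F$, so $a \in S$. Because $M$ is finite, $S$ is finite and hence admits a minimal element $a_o$ with respect to the partial order $\leq$ restricted to $S$ (i.e., an element $a_o \in S$ for which no $b \in S$ satisfies $b \leq a_o$ and $b \neq a_o$).

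Next I would verify that $a_o$ is an atom of $M$. By construction $a_o \in S$, so $a_o \neq F$. Suppose $b \in M$ with $F \leq b \leq a_o$ and $b \neq a_o$; I must show $b = F$. If instead $b \neq F$, then transitivity of $\leq$ together with $a_o \leq a$ gives $b \leq a$, so $b \in S$; but then $b \leq a_o$ and $b \neq a_o$ contradicts the minimality of $a_o$ in $S$. Hence $b = F$, and therefore $a_o \in \mathscr{A}(M)$ with $a_o \leq a$, as required.

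There is no real obstacle here: the only subtlety is to remember that the partial order $\leq$ on a finite poset always admits minimal elements in nonempty subsets, and to invoke transitivity of $\leq$ (already established in the proof of the preceding proposition) when passing from $b \leq a_o$ and $a_o \leq a$ to $b \leq a$. No structural property of $C$-algebras beyond the definition of $\leq$ and the existence of $F$ as the least element is needed.
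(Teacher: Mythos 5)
Your proof is correct and is essentially the same argument as the paper's: the paper iteratively picks strictly smaller non-$F$ elements below $a$ and uses finiteness to rule out an infinite strictly descending chain, which is exactly the statement that your set $S$ has a minimal element. Your minimal-element packaging is just a cleaner way of phrasing the same finiteness argument, with the verification that a minimal element of $S$ is an atom spelled out explicitly.
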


\begin{proof}
 If $a \in \mathscr{A}(M)$ then we are done since $a \leq a$. If $a \notin \mathscr{A}(M)$ then there exists $a_{1} \in M$ such that $F \lneq a_{1} \lneq a$. If $a_{1} \in \mathscr{A}(M)$ then we are done. If not, there exists $a_{2} \in M$ such that $F \lneq a_{2} \lneq a_{1} \lneq a$. Proceeding along similar lines if there is no atom in the list then there exists an infinite strictly descending chain of elements in $M$ which is a contradiction, since $M$ is finite. The result follows.
\end{proof}

This result suggests an immediate corollary. Note that $M$ is atomless if \\ $\mathscr{A}(M) = \emptyset$.

\begin{corollary} \label{CorPropAtom}
 Let $M$ be a finite $C$-algebra with $T, F, U$. Then $\mathscr{A}(M) \neq \emptyset$. Thus no finite $C$-algebra with $T, F, U$ is atomless.
\end{corollary}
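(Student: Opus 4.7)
The plan is to deduce this directly from \pref{PropAtomLeq} by plugging in a convenient element. Since $M$ is a $C$-algebra with $T, F, U$ as distinct nullary constants, the element $T$ (or equivalently $U$) is an element of $M$ with $T \neq F$. Applying \pref{PropAtomLeq} with $a = T$ produces some $a_{o} \in \mathscr{A}(M)$ with $a_{o} \leq T$, which witnesses that $\mathscr{A}(M)$ is nonempty.

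The second sentence is then a restatement: by the convention just stated before the corollary, a $C$-algebra $M$ is atomless precisely when $\mathscr{A}(M) = \emptyset$, so what we have just shown is equivalent to saying no finite $C$-algebra with $T, F, U$ is atomless. No further lemma is needed; the only subtlety worth checking is that the choice $a = T$ is legitimate, i.e.\ that $T \neq F$, which is part of the standing assumption for $C$-algebras with $T, F, U$ in \nref{NotaCAlgWithTFU}. There is no real obstacle here, as this corollary is essentially a direct specialization of the preceding proposition.
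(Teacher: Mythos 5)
Your proposal is correct and matches the paper's intent: the corollary is stated as an immediate consequence of \pref{PropAtomLeq}, and instantiating it at a nonzero element such as $T$ (or $U$) is exactly the intended argument. The only caveat, shared equally by the paper, is the tacit exclusion of the degenerate one-element algebra in which $T=F=U$; apart from that edge case your reasoning is complete.
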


The following result is concerned with the effect of the partial ordering on elements of $M_{\#}$.

\begin{proposition} \label{PropAleqB}
 If $a, b \in M$ such that $a \leq b$ and $b \in M_{\#}$ then $a \in M_{\#}$.
\end{proposition}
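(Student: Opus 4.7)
The plan is to work pointwise via the embedding $M \leq \mathbb{3}^{X}$ guaranteed by \tref{SubdirIrredCAlg}. The statement to prove translates to: if $a \vee b = b$ and $b \vee \neg b = {\bf T}$ in $\mathbb{3}^{X}$, then $a \vee \neg a = {\bf T}$.

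First I would characterise $M_{\#}$ pointwise. The hypothesis $b \vee \neg b = {\bf T}$ means $b(x) \vee \neg b(x) = T$ for every $x \in X$. Inspecting the $\vee$ and $\neg$ tables for $\mathbb{3}$, the value $U$ is fixed by $\neg$ and $U \vee U = U \neq T$, so this forces $b(x) \in \{T, F\}$ for all $x$. Thus membership in $M_{\#}$ is exactly the condition that the element (viewed as a function into $\mathbb{3}$) never takes the value $U$.

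Next I would use the partial-order hypothesis $a \leq b$, i.e.\ $a(x) \vee b(x) = b(x)$ for all $x$, and split into the two remaining cases for $b(x)$.
\begin{itemize}
\item If $b(x) = T$, then $a(x) \vee T = T$. Since $U \vee T = U \neq T$ in $\mathbb{3}$, this forces $a(x) \in \{T, F\}$.
\item If $b(x) = F$, then $a(x) \vee F = F$. Since $T \vee F = T$ and $U \vee F = U$, this forces $a(x) = F$.
\end{itemize}
In both cases $a(x) \in \{T, F\}$, so by the opening observation $a \vee \neg a = {\bf T}$, giving $a \in M_{\#}$.

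I expect no serious obstacle: the argument is a direct pointwise case check. The only subtlety is that $\vee$ in $\mathbb{3}$ is non-commutative (e.g.\ $T \vee U = T$ but $U \vee T = U$), so one must read off values from the $\vee$ table in the correct order, as written above. Everything else is immediate from \tref{SubdirIrredCAlg} and the definition of $M_{\#}$.
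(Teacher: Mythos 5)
Your proof is correct, but it takes a different route from the paper's. You argue pointwise through the embedding $M \leq \mathbb{3}^{X}$ of \tref{SubdirIrredCAlg}: you characterise $M_{\#}$ as the elements never taking the value $U$, and then do a case analysis on $b(x) \in \{T, F\}$, correctly reading the non-commutative $\vee$ table in the right order (the key points being $U \vee T = U \neq T$ and $U \vee F = U \neq F$). The paper instead gives a short equational computation: the identity $\alpha \vee \neg\alpha = \alpha \vee T$ holds in $\mathbb{3}$ and hence in every $C$-algebra, so $a \vee \neg a = a \vee T = a \vee (b \vee \neg b) = (a \vee b) \vee \neg b = b \vee \neg b = T$, using only associativity of $\vee$ and the hypotheses $a \vee b = b$, $b \vee \neg b = T$. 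Both arguments ultimately rest on the same fact (that $\mathbb{3}$ generates the variety), but the paper's version is shorter, avoids any case split, and works verbatim without unwinding the representation, whereas your version has the side benefit of making explicit the useful pointwise description of $M_{\#}$ and of the partial order, at the cost of the careful bookkeeping you flag about the non-commutativity of $\vee$.
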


\begin{proof}
 Since the identity $a \vee \neg a = a \vee T$ holds in $\mathbb{3}$, it holds in all $C$-algebras. We have $a \vee b = b$ since $a \leq b$. Further, $b \in M_{\#}$ gives $b \vee \neg b = T$. Thus $a \vee \neg a = a \vee T = a \vee (b \vee \neg b) = (a \vee b) \vee \neg b = b \vee \neg b = T$ which completes the proof.
\end{proof}

We have the following corollary, which can also be proved independently.

\begin{corollary} \label{CorPropAleqB}
 If $a \in M$ such that $a \leq T$ then $a \in M_{\#}$.
\end{corollary}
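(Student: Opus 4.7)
The plan is to obtain this statement as an immediate specialization of Proposition \ref{PropAleqB} at $b = T$. For the instantiation to be legitimate, I first need to verify $T \in M_{\#}$, i.e.\ $T \vee \neg T = T$. Since $\neg T = F$ and $F$ is a right-identity for $\vee$ (cf.\ Notation \ref{NotaCAlgWithTFU}), one has $T \vee \neg T = T \vee F = T$, so indeed $T \in M_{\#}$. Feeding $b = T$ into Proposition \ref{PropAleqB} along with the hypothesis $a \leq T$ then delivers $a \in M_{\#}$ without further work.

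For the independent argument alluded to in the statement, I would bypass Proposition \ref{PropAleqB} entirely and invoke the same equational trick already used in its proof: the identity $a \vee \neg a = a \vee T$ holds in $\mathbb{3}$ by a three-case inspection, and therefore holds in every $C$-algebra because the variety of $C$-algebras is generated by $\mathbb{3}$ (Theorem \ref{SubdirIrredCAlg}). From $a \leq T$ I get $a \vee T = T$ directly from the definition of $\leq$, so $a \vee \neg a = a \vee T = T$, which is precisely the membership condition for $M_{\#}$.

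There is really no obstacle here; the only ingredient with any content is the universally-valid identity $a \vee \neg a = a \vee T$, and its universality is free of charge thanks to Theorem \ref{SubdirIrredCAlg}. I would present the corollary via the one-line deduction from Proposition \ref{PropAleqB} and add a brief parenthetical remark indicating the direct proof for readers who prefer to see it in isolation.
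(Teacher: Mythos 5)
Your proposal is correct and matches the paper's intent exactly: the corollary is meant as the specialization of Proposition \ref{PropAleqB} at $b = T$ (with the easy check $T \vee \neg T = T \vee F = T$, so $T \in M_{\#}$), which is precisely your first argument. Your second, direct argument via the identity $a \vee \neg a = a \vee T$ is just the independent proof the paper alludes to when it says the corollary "can also be proved independently," and it is likewise sound.
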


We now list some properties which are useful in establishing the characterisation of atomic $C$-algebras.

\begin{proposition} \label{PropAtomRelations}
 The following hold for all $\alpha, \gamma, \delta \in M$:
 \begin{enumerate}[\rm(i)]
  \item $\alpha \wedge F \leq \alpha$.
  \item $\alpha \wedge F \leq U$.
  \item $\alpha \wedge F = U \Leftrightarrow \alpha = U$.
  \item $\alpha \wedge F = F \Leftrightarrow \alpha \in M_{\#}$.
  \item $\alpha \wedge F = \alpha \Leftrightarrow \alpha \wedge \beta = \alpha$ for all $\beta \in M$.
  \item $\alpha \leq \gamma \Rightarrow \alpha \wedge \gamma = \alpha$.
  \item $\alpha \leq \alpha \vee \beta$ for all $\beta \in M$.
  \item $\alpha \leq \delta$ and $\gamma \leq \delta \Rightarrow \alpha \vee \gamma \leq \delta$.
 \end{enumerate}
\end{proposition}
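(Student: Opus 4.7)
The plan is to exploit the fact that every $C$-algebra $M$ embeds as a subalgebra of some $\mathbb{3}^{X}$ (\tref{SubdirIrredCAlg}) and to verify each assertion pointwise by running through the three possible values $T, F, U$ in the operation tables for $\mathbb{3}$. Under this identification, the partial order $\leq$ on a single coordinate is just the pointwise relation with $F \leq a$ for all $a \in \mathbb{3}$ and no other nontrivial comparisons, while $M_{\#}$ consists precisely of those $\alpha$ with $\alpha(x) \in \{T, F\}$ at every $x$. With these facts in hand, (i), (ii), (iii), (iv), and (vi) each reduce to a routine case split on $\alpha(x)$ (and $\gamma(x)$ for (vi)); for instance in (iv), $(\alpha \wedge F)(x) = F$ iff $\alpha(x) \in \{T, F\}$, which is exactly the condition $\alpha \in M_{\#}$, and in (vi) one checks that $\alpha(x) \vee \gamma(x) = \gamma(x)$ forces $\alpha(x) \wedge \gamma(x) = \alpha(x)$ in each of the three cases $\alpha(x) \in \{T, F, U\}$.

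For (v), I would first observe that $\alpha \wedge F = \alpha$ forces $\alpha(x) \in \{F, U\}$ at each $x$, since $T \wedge F = F \neq T$. The forward direction then follows either by another pointwise check (noting that $F$ and $U$ act as left-zeroes for $\wedge$ in $\mathbb{3}$, so $\alpha(x) \wedge \beta(x) = \alpha(x)$ for every $\beta$), or purely algebraically via
\[
\alpha \wedge \beta \;=\; (\alpha \wedge F) \wedge \beta \;=\; \alpha \wedge (F \wedge \beta) \;=\; \alpha \wedge F \;=\; \alpha,
\]
using the left-zero property of $F$ for $\wedge$ from \nref{NotaCAlgWithTFU} together with axiom (C3). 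The reverse direction is immediate by specialising $\beta = F$.

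For (vii) and (viii) the key ingredient is associativity of $\vee$, which is not listed among the axioms but is derivable from (C1), (C2), (C3) via De Morgan:
\[
(\alpha \vee \beta) \vee \gamma \;=\; \neg\bigl(\neg(\alpha \vee \beta) \wedge \neg \gamma\bigr) \;=\; \neg\bigl((\neg \alpha \wedge \neg \beta) \wedge \neg \gamma\bigr) \;=\; \neg\bigl(\neg \alpha \wedge \neg(\beta \vee \gamma)\bigr) \;=\; \alpha \vee (\beta \vee \gamma).
\]
Combined with pointwise idempotence $\alpha \vee \alpha = \alpha$, (vii) becomes $\alpha \vee (\alpha \vee \beta) = (\alpha \vee \alpha) \vee \beta = \alpha \vee \beta$, and (viii) follows from $(\alpha \vee \gamma) \vee \delta = \alpha \vee (\gamma \vee \delta) = \alpha \vee \delta = \delta$.

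No conceptual obstacle is expected; the proposition is essentially a bundle of bookkeeping identities whose common engine is the reduction to $\mathbb{3}$ via \tref{SubdirIrredCAlg}. The only step requiring a moment of care is the forward direction of (v), where one must recognise that $\alpha \wedge F = \alpha$ is a genuinely restrictive equation characterising those elements that absorb every $\beta$ under $\wedge$ from the right.
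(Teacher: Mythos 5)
Your proposal is correct and takes essentially the same route as the paper: reduce everything to $\mathbb{3}$ (equivalently, argue pointwise in $\mathbb{3}^{X}$ via \tref{SubdirIrredCAlg}) for the order statements, use the left-zero property of $F$ together with \eqref{C3} for the forward direction of (v), and associativity plus idempotence of $\vee$ for (vii) and (viii), exactly as in the paper. The only cosmetic differences are that the paper proves the converse of (iv) through the identities $\alpha \wedge F = \alpha \wedge \neg\alpha$ and \eqref{C2}, and (vi) through the dual absorption law derived from \eqref{C1}, \eqref{C2}, \eqref{C6}, where you instead run direct pointwise case analyses -- both are sound.
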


\begin{proof}$\;$
 \begin{enumerate}[(i)]
  \item In the $C$-algebra $\mathbb{3}$ consider the identity $(\alpha \wedge F) \vee \alpha = \alpha$:
  \begin{description}
   \item [$\alpha = T$] $(T \wedge F) \vee T = F \vee T = T$.
   \item [$\alpha = F$] $(F \wedge F) \vee F = F \vee F = F$.
   \item [$\alpha = U$] $(U \wedge F) \vee U = U \vee U = U$.
  \end{description}
  Thus this identity holds in $\mathbb{3}$ and so it holds in all $C$-algebras. It follows that $\alpha \wedge F \leq \alpha$.
  \item In the $C$-algebra $\mathbb{3}$ consider the identity $(\alpha \wedge F) \vee U = U$:
  \begin{description}
   \item [$\alpha = T$] $(T \wedge F) \vee U = F \vee U = U$.
   \item [$\alpha = F$] $(F \wedge F) \vee U = F \vee U = U$.
   \item [$\alpha = U$] $(U \wedge F) \vee U = U \vee U = U$.
  \end{description}
  Since this identity holds in $\mathbb{3}$ it therefore holds in all $C$-algebras. It follows that $\alpha \wedge F \leq U$.
  \item Clearly $U \wedge F = U$. Suppose that $\alpha \wedge F = U$. Since $M \leq \mathbb{3}^{X}$ for some set $X$ we have $\alpha(x) \wedge F = U$ for all $x \in X$. If $\alpha(x_{o}) \in \{ T, F \}$ for some $x_{o} \in X$ then $\alpha(x_{o}) \wedge F = F$, a contradiction. Hence $\alpha(x) = U$ for all $x \in X$ so that $\alpha = U$ in $M$.
  \item Clearly if $\alpha \in M_{\#}$ then $\alpha \wedge F = F$. Note that the identities $\alpha \wedge F = \alpha \wedge \neg \alpha$ and $\neg \alpha \vee \alpha = \alpha \vee \neg \alpha$ hold in all $C$-algebras since they hold in $\mathbb{3}$. Thus $\alpha \wedge \neg \alpha = F$. Using \eqref{C2} we have $\neg \alpha \vee \alpha = T$ so that $\alpha \vee \neg \alpha = T$. Consequently $\alpha \in M_{\#}$.
  \item It is clear that $\alpha \wedge \beta = \alpha$ for all $\beta \in M \Rightarrow \alpha \wedge F = \alpha$. Suppose that $\alpha \wedge F = \alpha$. Then for $\beta \in M$ we have $\alpha \wedge \beta = (\alpha \wedge F) \wedge \beta = \alpha \wedge (F \wedge \beta) = \alpha \wedge F = \alpha$.
  \item Since $\alpha \leq \gamma$ we have $\alpha \vee \gamma = \gamma$. Thus $\alpha \wedge \gamma = \alpha \wedge (\alpha \vee \gamma) = \alpha$ using \eqref{C1},\eqref{C2} and \eqref{C6}.
  \item Consider $\alpha \vee (\alpha \vee \beta) = (\alpha \vee \alpha) \vee \beta = \alpha \vee \beta$. Thus $\alpha \leq \alpha \vee \beta$.
  \item Consider $(\alpha \vee \gamma) \vee \delta = \alpha \vee (\gamma \vee \delta) = \alpha \vee \delta = \delta$. Thus $\alpha \vee \gamma \leq \delta$.
 \end{enumerate}
\end{proof}

\begin{remark}
 Note that the converse of \pref{PropAtomRelations}(vi) is not true in general. For instance $U \wedge F = U$, however $U \nleq F$.
\end{remark}

\begin{proposition} \label{PropAAiLeq}
 Let $a \in M$ be such that $a = \displaystyle \bigoplus_{i = 1}^{N} a_{i}$ where $a_{i} \in \mathscr{A}(M)$ for all $1\leq i \leq N$. Then $a_{i} \leq a$ for all $1\leq i \leq N$.
\end{proposition}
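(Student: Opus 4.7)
The plan is to exploit the permutation-invariance built into the definition of $\bigoplus$: by rearranging the atoms I can position any chosen $a_i$ at the leftmost slot of the join, at which point Proposition \ref{PropAtomRelations}(vii) (``$\alpha \leq \alpha \vee \beta$ for all $\beta$'') finishes the argument immediately.

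Concretely, fix an index $i \in \{1, 2, \ldots, N\}$ and consider any bijection $\sigma : \{1, \ldots, N\} \rightarrow \{1, \ldots, N\}$ with $\sigma(1) = i$; for instance the transposition swapping $1$ and $i$ (and the identity if $i = 1$). Since $\bigoplus_{i=1}^{N} a_i$ is hypothesised to exist and equal $a$, the defining condition gives
$$a \;=\; a_{\sigma(1)} \vee a_{\sigma(2)} \vee \cdots \vee a_{\sigma(N)} \;=\; a_i \vee \bigl( a_{\sigma(2)} \vee \cdots \vee a_{\sigma(N)} \bigr).$$
Setting $\beta := a_{\sigma(2)} \vee \cdots \vee a_{\sigma(N)}$, Proposition \ref{PropAtomRelations}(vii) yields $a_i \leq a_i \vee \beta = a$, as required. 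Since $i$ was arbitrary the conclusion holds for every $1 \leq i \leq N$.

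I do not anticipate any genuine obstacle: the entire content of the statement is packaged inside the definition of $\bigoplus$, which is exactly what lets us shuffle the selected $a_i$ to the front. One small bookkeeping matter worth noting is the degenerate case $N = 1$, where the statement reduces to $a_1 \leq a_1$ and is trivial by reflexivity; for $N \geq 2$ the argument above applies verbatim, with $\beta$ a nonempty join.
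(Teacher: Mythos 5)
Your proposal is correct and follows essentially the same route as the paper: the paper also moves $a_i$ to the front of the join (using the permutation-invariance in the definition of $\bigoplus$) and then verifies $a_i \vee a = a$ by associativity and idempotence of $\vee$, which is exactly the computation packaged in Proposition \ref{PropAtomRelations}(vii) that you cite. The only difference is that you invoke that lemma rather than redoing the one-line calculation inline.
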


\begin{proof}
 Consider $a_{i} \vee a = a_{i} \vee (\bigoplus_{j = 1}^{N} a_{j}) = a_{i} \vee (a_{i} \vee \bigvee_{j \neq i} a_{j}) = (a_{i} \vee a_{i}) \vee \bigvee_{j \neq i} a_{j} = a_{i} \vee \bigvee_{j \neq i} a_{j} = \bigoplus_{j = 1}^{N} a_{j} = a$.
\end{proof}

\begin{proposition} \label{PropMMHash}
 $\mathscr{A}(M) \cap M_{\#} = \mathscr{A}(M_{\#})$. Moreover, $\mathscr{A}(M) \cap (M_{\#})^{c} \subseteq \{ a \in M : a \wedge b = a \text{ for all } b \in M \}$.
\end{proposition}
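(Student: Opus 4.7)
The plan is to prove the proposition by exploiting the characterisations collected in \pref{PropAtomRelations} together with the downward-closure property established in \pref{PropAleqB}.

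For the equality $\mathscr{A}(M) \cap M_{\#} = \mathscr{A}(M_{\#})$, I would argue by double containment. The forward inclusion is immediate: if $a$ is an atom of $M$ lying in $M_{\#}$, then any $b \in M_{\#}$ with $F \leq b \leq a$ and $b \neq a$ is, in particular, an element of $M$ strictly between $F$ and $a$, so the atomicity of $a$ in $M$ forces $b = F$. For the reverse inclusion, I would take $a \in \mathscr{A}(M_{\#})$ and consider an arbitrary $b \in M$ with $F \leq b \leq a$ and $b \neq a$; since $a \in M_{\#}$ and $b \leq a$, \pref{PropAleqB} yields $b \in M_{\#}$, after which the atomicity of $a$ relative to $M_{\#}$ forces $b = F$, as required.

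For the second containment, I would let $a \in \mathscr{A}(M) \cap (M_{\#})^{c}$ and use \pref{PropAtomRelations}(v) to reduce the target statement ``$a \wedge b = a$ for all $b \in M$'' to the single identity $a \wedge F = a$. By \pref{PropAtomRelations}(i) we have $a \wedge F \leq a$, so the atomicity of $a$ forces $a \wedge F$ to be either $F$ or $a$. The case $a \wedge F = F$ is ruled out by \pref{PropAtomRelations}(iv), which would place $a$ in $M_{\#}$ and contradict the hypothesis $a \in (M_{\#})^{c}$; hence $a \wedge F = a$ and the conclusion follows.

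Once the right ingredients are identified the argument is essentially a direct assembly. The only step demanding any care is the reverse direction of the first equality, where it is crucial that $M_{\#}$ is downward closed in $M$ under $\leq$ (Proposition \ref{PropAleqB}); without this, an intermediate element $b \in M \setminus M_{\#}$ between $F$ and $a$ could a priori exist and obstruct the reduction to atomicity inside $M_{\#}$.
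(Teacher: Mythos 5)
Your proof is correct and follows essentially the same route as the paper: double containment for the equality (with Proposition \ref{PropAleqB} supplying the key downward-closure step in the reverse direction), and reduction of the second claim to $a \wedge F = a$ via Proposition \ref{PropAtomRelations}(v) together with an atomicity dichotomy for $a \wedge F$. The only difference is cosmetic: where you invoke Proposition \ref{PropAtomRelations}(iv) to rule out $a \wedge F = F$, the paper re-derives that implication by an explicit coordinatewise argument in $\mathbb{3}^{X}$, so your version is a slight streamlining of the same argument.
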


\begin{proof}
   Let $a \in \mathscr{A}(M) \cap M_{\#}$. Suppose that there exists $b \in M_{\#}$ such that $F \lneq b \lneq a$. It follows that $b \in M$ such that $F \lneq b \lneq a$ which is a contradiction to the fact that $a \in \mathscr{A}(M)$. Conversely if $a \in \mathscr{A}(M_{\#})$ then clearly $a \in M_{\#}$. If there exists $b \in M$ such that $F \lneq b \lneq a$ then using \pref{PropAleqB} we have $b \in M_{\#}$ which is a contradiction to the fact that $a \in \mathscr{A}(M_{\#})$. The result follows.

   Let $a \in \mathscr{A}(M) \cap (M_{\#})^{c}$. In order to show that $a$ is a left-zero for $\wedge$, using \pref{PropAtomRelations}(v) it suffices to show that $a \wedge F = a$. Suppose if possible that $a \wedge F \neq a$. Using \pref{PropAtomRelations}(i) we have $a \wedge F \lneq a$ and so since $a \in \mathscr{A}(M)$ it must follow that $a \wedge F = F$. Consider $M \leq \mathbb{3}^{X}$ for some set $X$. Then $a \wedge {\bf F} = {\bf F}$. If $a(x_{o}) = U$ for some $x_{o} \in X$ then $(a \wedge {\bf F})(x_{o}) = a(x_{o}) \wedge F = U \wedge F = U \neq F$, a contradiction. Thus $a(x) \in \{ T, F \}$ for all $x \in X$ and so $a \in M_{\#}$ which is a contradiction to our assumption that $a \in (M_{\#})^{c}$. Hence $a \wedge F = a$ so that $a$ is a left-zero for $\wedge$.
\end{proof}

The following result gives a necessary condition for $a$ to be an atom of $M$.

\begin{proposition} \label{PropNecAtom}
 If $a \in \mathscr{A}(M)$ then $a \wedge b \leq b \text{ or } a \wedge b = a$ for all $b \in M$.
\end{proposition}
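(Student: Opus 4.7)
The natural strategy is to dichotomise on whether the atom $a$ lies in the Boolean part $M_{\#}$, which is exactly the split furnished by \pref{PropMMHash}.

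If $a \in \mathscr{A}(M) \cap (M_{\#})^{c}$, then \pref{PropMMHash} already tells us that $a$ is a left-zero for $\wedge$, so $a \wedge b = a$ for every $b \in M$, and the second alternative of the disjunction holds.

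Suppose instead $a \in \mathscr{A}(M) \cap M_{\#}$, so that $a \vee \neg a = T$. I would first apply axiom \eqref{C5} with the substitution $\alpha \mapsto a$, $\beta \mapsto \neg a$, $\gamma \mapsto b$, using that $T$ is a left-identity for $\wedge$ (see \nref{NotaCAlgWithTFU}) and that $\neg a \wedge \neg a = \neg a$ (an identity readily checked in $\mathbb{3}$, hence valid in every $C$-algebra by \tref{SubdirIrredCAlg}), to obtain the decomposition
\[ b \;=\; T \wedge b \;=\; (a \vee \neg a) \wedge b \;=\; (a \wedge b) \vee (\neg a \wedge b). \]
Then, invoking associativity and idempotence of $\vee$ (both verifiable in $\mathbb{3}$ and therefore valid in all $C$-algebras), a short computation gives
\[ (a \wedge b) \vee b \;=\; (a \wedge b) \vee \bigl((a \wedge b) \vee (\neg a \wedge b)\bigr) \;=\; (a \wedge b) \vee (\neg a \wedge b) \;=\; b, \]
which is exactly $a \wedge b \leq b$, the first alternative.

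I do not anticipate a serious obstacle: the entire content lies in the case $a \in M_{\#}$, where the only insight needed is to recognise that \eqref{C5} converts the relation $a \vee \neg a = T$ into the decomposition of $b$ above. As a safety net, one could instead view $M$ as a subalgebra of $\mathbb{3}^{X}$ via \tref{SubdirIrredCAlg}; then $a \in M_{\#}$ forces $a(x) \in \{T,F\}$ at each $x$, and the inequality $(a \wedge b) \vee b = b$ reduces to two trivial pointwise checks in $\mathbb{3}$, completely bypassing the axiomatic manipulation.
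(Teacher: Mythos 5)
Your proof is correct and follows essentially the same route as the paper: both split on $a \in \mathscr{A}(M) \cap M_{\#}$ versus $a \in \mathscr{A}(M) \cap (M_{\#})^{c}$ via \pref{PropMMHash}, with the left-zero property disposing of the latter case. The only difference is cosmetic: in the $M_{\#}$ case the paper verifies $(a \wedge b) \vee b = b$ pointwise in $\mathbb{3}^{X}$ (using that $a(x) \in \{T, F\}$ for all $x$), whereas you derive it equationally from \eqref{C5} together with idempotence and associativity of $\vee$ — and the ``safety net'' you mention is exactly the paper's argument.
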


\begin{proof}
 Let $a \in \mathscr{A}(M)$ and $b \in M$. If $a \wedge b \leq b$ then we are through. Suppose not. If $a \in \mathscr{A}(M) \cap M_{\#}^{c}$ then using \pref{PropMMHash} we have $a$ is a left-zero for $\wedge$ from which the result follows. If $a \in \mathscr{A}(M) \cap M_{\#}$ then consider $M \leq \mathbb{3}^{X}$ for some set $X$. Thus $a = a_{_{T, A}}$ for some $\emptyset \neq A \subseteq X$ so that
 \begin{equation*}
  (a \wedge b)(x) = \begin{cases}
                    b(x), & \text{ if } x \in A; \\
                    F, & \text{ otherwise.}
                    \end{cases}
 \end{equation*}
 Hence $((a \wedge b) \vee b)(x) = b(x)$ for all $x \in X$ so that $a \wedge b \leq b$.
\end{proof}

\begin{remark}
 The converse of \pref{PropNecAtom} need not be true, i.e., if $a \wedge b \leq b \text{ or } a \wedge b = a$ for all $b \in M$ then $a$ need not be in $\mathscr{A}(M)$. Consider $M = \mathbb{3}^{4}$ and $a = (U, U, F, F) \in \mathbb{3}^{4}$. This is a left-zero for $\wedge$ but is not an atom since $(F, F, F, F) \leq (U, F, F, F) \leq (U, U, F, F)$.
\end{remark}

\begin{remark}$\;$
 \begin{enumerate}[(i)]
  \item For $a \in \mathscr{A}(M)$ and $b \in M$ either $a \leq b$ or $a \wedge b \leq b$ need not, in general, hold. Consider $M = \{ (T, T, T, T), (F, F, F, F), (U, U, U, U), \\ (T, T, F, F), (F, F, T, T), (U, U, F, F), (U, U, T, T), (F, F, U, U), (T, T, U, U) \} \leq \mathbb{3}^{4}$. Take $a = (F, F, U, U) \in \mathscr{A}(M)$ and $b = (U, U, T, T) \in M$. However $a = (F, F, U, U) \nleq (U, U, T, T) = b$ and $a \wedge b = (F, F, U, U) \nleq (U, U, T, T) = b$. Note that in this case $a \wedge b = a$.
  \item For $a \in \mathscr{A}(M)$ and $b \in M$ it need not be true that $a \wedge b \in \mathscr{A}(M)$. Consider $M = \{ (T, T), (F, F), (U, U), (F, U), (T, U) \} \leq \mathbb{3}^{2}$. Take $a = (T, T) \in \mathscr{A}(M)$ and $b = (T, U) \in M$. Then $a \wedge b = (T, U) \notin \mathscr{A}(M)$.
  \item For $a, b \in M$ it need not be true that $b \leq a \vee b$. For instance in $\mathbb{3}$ we have $T \nleq U \vee T = U$.
  \item For $a, b \in M$ we need not have $a \wedge b \leq a$ nor $a \wedge b \leq b$ in general. Consider $M = \mathbb{3}^{3}$, $a = (T, U, F)$ and $b = (U, T, F)$. Then $a \wedge b = (U, U, F) \nleq (T, U, F) = a$ and $a \wedge b = (U, U, F) \nleq (U, T, F) = b$.
  \item For $a \in \mathscr{A}(M)$ it need not hold that $a \wedge U \in \mathscr{A}(M)$. Consider $M = \{ (T, T), (F, F), (U, U), (F, U), (T, U) \} \leq \mathbb{3}^{2}$ and $a = (T, T) \in \mathscr{A}(M)$ (since $\mathscr{A}(M) = \{ (T, T), (F, U) \}$). However $a \wedge {\bf U} = (T, T) \wedge (U, U) = (U, U) \notin \mathscr{A}(M)$.
 \end{enumerate}
\end{remark}

Let $\hat{M}$ be the enveloping ada of $M$ as defined in \rref{RemEnvAda}. We have the following properties in $\hat{M}$.

\begin{proposition}
 The following are equivalent for all $\beta \in M$:
 \begin{enumerate}[\rm(i)]
  \item $\beta$ is a left-zero for $\wedge$.
  \item $\beta \wedge F = \beta$.
  \item $\beta^{\downarrow} = F$ in $\hat{M}$.
 \end{enumerate}
\end{proposition}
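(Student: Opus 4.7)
The plan is to first observe that the equivalence (i) $\Leftrightarrow$ (ii) is already established: this is exactly \pref{PropAtomRelations}(v). All the new content therefore lies in the equivalence of (ii) and (iii), which bridges the $C$-algebra $M$ and its enveloping ada $\hat{M}$.

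For (ii) $\Leftrightarrow$ (iii), I would pass to a concrete representation. By \tref{SubdirIrredCAlg}, $M$ embeds as a subalgebra of $\mathbb{3}^{X}$ for some set $X$, and since $\mathbb{3}^{X}$ is itself an ada, the universal property of the enveloping ada (cf.\ \rref{RemEnvAda}) factors the inclusion $M \hookrightarrow \mathbb{3}^{X}$ through an ada homomorphism $\hat{M} \to \mathbb{3}^{X}$. Since the composite is injective, the canonical map $\phi: M \to \hat{M}$ must itself be injective, so $\beta \in M$ may be regarded as living inside $\hat{M}$, and hence inside $\mathbb{3}^{X}$.

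Inside $\mathbb{3}^{X}$ the verification is componentwise. Condition (ii), $\beta \wedge {\bf F} = \beta$, requires $\beta(x) \wedge F = \beta(x)$ for every $x \in X$; inspecting the $\wedge$ table of $\mathbb{3}$ shows this holds exactly when $\beta(x) \in \{F, U\}$. Condition (iii), $\beta^{\downarrow} = {\bf F}$, requires $\beta^{\downarrow}(x) = F$ for all $x$; from the description of $(\text{ })^{\downarrow}$ on $\mathbb{3}$ given in \eref{Example3Ada} ($T^{\downarrow} = T$ and $U^{\downarrow} = F^{\downarrow} = F$), this again amounts to $\beta(x) \in \{F, U\}$ for all $x$. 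Thus (ii) and (iii) reduce to the same pointwise condition and are equivalent.

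The only real subtlety is securing injectivity of $\phi: M \to \hat{M}$ so that a pointwise identity in $\mathbb{3}^{X}$ transports back to identities in $M$ and in $\hat{M}$; this is handled by the factorisation argument above. A purely syntactic derivation using only the ada axioms \eqref{A1}--\eqref{A6} (working, say, from $\beta \wedge F = \beta \wedge \beta^{\downarrow}$ obtained via \eqref{A4} with $\alpha = \beta$, together with \eqref{A5} and \eqref{A6}) is possible but clumsy, so I would prefer the semantic route through $\mathbb{3}^{X}$ since \tref{SubdirIrredCAlg} makes it essentially free.
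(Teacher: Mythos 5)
Your handling of (i) $\Leftrightarrow$ (ii) via \pref{PropAtomRelations}(v) matches the paper, and your direction (iii) $\Rightarrow$ (ii) is sound, but there is a genuine gap in (ii) $\Rightarrow$ (iii). In your setup you fix a $C$-algebra embedding $\iota : M \hookrightarrow \mathbb{3}^{X}$ and factor it as $\iota = \psi \circ \phi$ with $\psi : \hat{M} \rightarrow \mathbb{3}^{X}$ an ada homomorphism coming from the universal property of \rref{RemEnvAda}. From $\beta \wedge F = \beta$ you correctly get $\iota(\beta)(x) \in \{F, U\}$ for all $x$, hence $\psi\bigl(\phi(\beta)^{\downarrow}\bigr) = \iota(\beta)^{\downarrow} = {\bf F} = \psi(F)$ in $\mathbb{3}^{X}$. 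But statement (iii) is an equality in $\hat{M}$, and $\psi$ need not be injective: its image is merely the ada subalgebra of $\mathbb{3}^{X}$ generated by $\iota(M)$, which can be a proper quotient of the freely generated $\hat{M}$, so $\psi\bigl(\phi(\beta)^{\downarrow}\bigr) = \psi(F)$ does not give $\beta^{\downarrow} = F$ in $\hat{M}$. You identify the injectivity of $\phi : M \rightarrow \hat{M}$ as ``the only real subtlety,'' but that injectivity is only needed for the converse direction; the injectivity your argument actually requires, that of $\psi$, is never addressed and is not automatic.

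The repair is to set up the representation in the opposite order, which is what the paper does: since $\mathbb{3}$ is the only subdirectly irreducible ada, the ada $\hat{M}$ itself embeds into $\mathbb{3}^{X}$ \emph{as an ada} for some set $X$, and restricting this embedding to $M$ (viewed inside $\hat{M}$ via $\phi$) gives the $C$-algebra embedding you wanted. Then $\beta \wedge F = \beta$ in $M$ yields $\beta(x) \in \{F, U\}$ for all $x$, hence $(\beta^{\downarrow})(x) = (\beta(x))^{\downarrow} = F$ pointwise, and because this copy of $\hat{M}$ sits injectively in $\mathbb{3}^{X}$ with $(\text{ })^{\downarrow}$ computed coordinatewise, the equality $\beta^{\downarrow} = F$ genuinely holds in $\hat{M}$; the direction (iii) $\Rightarrow$ (ii) goes through the same embedding exactly as in your argument. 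Your parenthetical equational route via \eqref{A4}--\eqref{A6} would also avoid the issue if actually carried out, but as written it is only a suggestion and cannot substitute for the missing step.
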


\begin{proof}$\;$
 ((i) $\Leftrightarrow$ (ii)) This is shown in \pref{PropAtomRelations}(v). \\
 ((ii) $\Rightarrow$ (iii)) Let $\beta \wedge F = \beta$. Consider $\hat{M} \leq \mathbb{3}^{X}$ for some set $X$. Then $(\beta \wedge {\bf F})(x) = \beta(x)$ gives $\beta(x) \in \{ F, U \}$ for all $x \in X$. Thus $(\beta^{\downarrow})(x) = (\beta(x))^{\downarrow} = F$ for all $x \in X$. Hence $\beta^{\downarrow} = F$ in $\hat{M}$. \\
 ((iii)) $\Rightarrow$ (ii)) Let $\beta^{\downarrow} = F$ in $\hat{M}$. Consider $\hat{M} \leq \mathbb{3}^{X}$ for some set $X$. Then $(\beta^{\downarrow})(x) = (\beta(x))^{\downarrow} = F$ for all $x \in X$. It follows that $\beta(x) \in \{ F, U \}$ for all $x \in X$ and so $(\beta \wedge {\bf F})(x) = \beta(x)$ for all $x \in X$. Hence $\beta \wedge F = \beta$ in $M$.
\end{proof}

The left-zeros of $M$ play an important role in understanding the atomicity of $M$.

\begin{notation}
 For $\varphi \in \mathbb{3}^{X}$ denote by $\varphi_{_{T, A}}$ the element represented by the pair of sets $( A, A^{c} )$ and $\varphi_{_{U, A}}$ the element represented by the pair of sets $( \emptyset, A^{c} )$. If $A = \{ x \}$ then we simply use the notation $\varphi_{_{T, x}}$ and $\varphi_{_{U, x}}$.
\end{notation}

We now establish a relation between atoms of $M_{\#}$ and those of $M_{\#}^{c}$ for an ada $M$.

\begin{theorem}
 Let $M$ be an ada. There exists a bijection between the sets $\mathscr{A}(M) \cap M_{\#}^{c}$ and $\mathscr{A}(M) \cap M_{\#}$.
\end{theorem}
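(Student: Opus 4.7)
The plan is to construct an explicit bijection using the $^{\downarrow}$ operation supplied by the ada structure together with meet by $U$. Set
\[
\phi : \mathscr{A}(M) \cap M_{\#}^{c} \longrightarrow \mathscr{A}(M) \cap M_{\#}, \qquad \phi(a) = \neg (\neg a)^{\downarrow},
\]
and
\[
\psi : \mathscr{A}(M) \cap M_{\#} \longrightarrow \mathscr{A}(M) \cap M_{\#}^{c}, \qquad \psi(b) = b \wedge U.
\]
Both expressions live in $M$ since $M$ is closed under $^{\downarrow}$ and contains $U$; the intuition, visible in $M \leq \mathbb{3}^{X}$, is that these are the obvious toggles $U \leftrightarrow T$ on the coordinates where the argument is nonzero.

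First I would verify well-definedness by working pointwise. By \pref{PropMMHash}, any $a \in \mathscr{A}(M) \cap M_{\#}^{c}$ is a left-zero for $\wedge$ and hence (by the immediately preceding proposition) has $a(x) \in \{F, U\}$ for all $x \in X$. Using the ada identities $T^{\downarrow}=T$, $U^{\downarrow}=F$, $F^{\downarrow}=F$, a direct computation gives $\phi(a)(x) = T$ iff $a(x) = U$ and $\phi(a)(x) = F$ otherwise, so $\phi(a) \in M_{\#}$ and $\phi(a) \neq F$ (since $a \neq F$ means $a(x) = U$ for some $x$). Symmetrically, $\psi(b)(x) = U$ iff $b(x) = T$, so $\psi(b)$ is a nonzero left-zero and thus (again by \pref{PropMMHash}) lies in $M_{\#}^{c}$.

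Next I would check that atomicity is preserved. For $\phi(a)$, suppose $F \lneq e \leq \phi(a)$ with $e \in M$. Then $e \wedge U$ is a left-zero, and a pointwise comparison shows $e \wedge U \leq a$: at coordinates where $a = F$ we have $\phi(a) = F$, forcing $e = F$ and so $e \wedge U = F$; at coordinates where $a = U$ the value of $e \wedge U$ lies in $\{F, U\}$. Atomicity of $a$ therefore gives $e \wedge U \in \{F, a\}$. The first case forces $e$ to be $F$ wherever $a = U$, and since $e \leq \phi(a) \in M_{\#}$ implies $e \in M_{\#}$ by \pref{PropAleqB}, we get $e = F$, a contradiction. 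The second case forces $e(x) = T$ precisely where $a(x) = U$, i.e., $e = \phi(a)$. A symmetric argument, applying $\neg(\neg \cdot)^{\downarrow}$ to any intermediate element between $F$ and $\psi(b)$ and invoking atomicity of $b$, shows $\psi(b)$ is an atom. Finally, $\psi \circ \phi = \mathrm{id}$ and $\phi \circ \psi = \mathrm{id}$ reduce to pointwise checks in $\mathbb{3}$, namely $\neg(\neg U)^{\downarrow} \wedge U = U$ and $\neg(\neg(T \wedge U))^{\downarrow} = T$.

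The main obstacle is the minimality step: translating atomicity of $a$ (a left-zero element, living outside the Boolean part $M_{\#}$) into atomicity of $\phi(a)$ inside the Boolean algebra $M_{\#}$. The right move is to use the opposite toggle to send any hypothetical strict intermediate element back to the side where atomicity is already known; making this transfer rigorous — in particular verifying that the toggled intermediate really lies below the corresponding atom — is where the pointwise analysis in $\mathbb{3}^{X}$ and the ada identities carry the argument.
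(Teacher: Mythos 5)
Your proposal is correct and takes essentially the same route as the paper: the paper's bijection is the same map $G(\alpha)=\neg\bigl((\neg\alpha)^{\downarrow}\bigr)$, its surjectivity step uses the same element $\beta\wedge U$, and the verification rests on the same pointwise analysis in $\mathbb{3}^{X}$ via left-zeros (\pref{PropMMHash}) and transferring a hypothetical intermediate element across with $\wedge\,U$ (respectively $\neg(\neg\cdot)^{\downarrow}$). The only cosmetic difference is that you package bijectivity as a two-sided inverse $\psi(b)=b\wedge U$, while the paper proves injectivity and surjectivity of $G$ separately.
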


\begin{proof}
 Consider the function $G : \mathscr{A}(M) \cap M_{\#}^{c} \rightarrow \mathscr{A}(M) \cap M_{\#}$ given by the following:
 \begin{equation*}
  G(\alpha) = \neg ( (\neg \alpha)^{\downarrow}).
 \end{equation*}

 Let $\alpha \in \mathscr{A}(M) \cap M_{\#}^{c}$. It is straightforward to deduce that $G(\alpha) \in M_{\#}$. Consider $M \leq \mathbb{3}^{X}$ for some set $X$. Since $\alpha$ is a left-zero for $\wedge$ we have $\alpha = \alpha_{_{U, A}}$ for some $\emptyset \neq A \subseteq X$. It follows that $G(\alpha) = \neg ( (\neg \alpha)^{\downarrow}) = \delta_{_{T, A}}$. If $G(\alpha)$ is not an atom of $M_{\#}$ then there exists $\gamma = \gamma_{_{T, B}}$ where $\emptyset \neq B \subsetneq A$ and ${\bf F} \lneq \gamma \lneq \delta$. Thus $\beta = \gamma \wedge {\bf U} = \beta_{_{U, B}}$ and ${\bf F} \lneq \beta \lneq \alpha$ which is a contradiction to the fact that $\alpha \in \mathscr{A}(M) \cap M_{\#}^{c}$. It follows that $G$ is well-defined.

 Suppose that $\neg ( (\neg \alpha)^{\downarrow}) = \neg ( (\neg \beta)^{\downarrow})$ for some $\alpha, \beta \in \mathscr{A}(M) \cap M_{\#}^{c}$. Then $(\neg \alpha)^{\downarrow} = (\neg \beta)^{\downarrow} \in M_{\#}$. Consider $M \leq \mathbb{3}^{X}$ for some set $X$. Then $(\neg \alpha)^{\downarrow} = (\neg \beta)^{\downarrow} = \gamma_{_{T, A}}$ for some $A \subseteq X$. It follows that $\neg \alpha$ and $\neg \beta$ can be represented by the pairs of sets $(A, B_{\alpha})$ and $(A, B_{\beta})$ where $B_{\alpha}, B_{\beta} \subseteq A^{c}$. Thus $\alpha$ and $\beta$ can be represented by the pairs of sets $(B_{\alpha}, A)$ and $(B_{\beta}, A)$ where $B_{\alpha}, B_{\beta} \subseteq A^{c}$. Since $\alpha, \beta \in \mathscr{A}(M) \cap M_{\#}^{c}$ we have $\alpha = \alpha_{_{U, C}}$ and $\beta = \beta_{_{U, D}}$ for some $C, D \subseteq X$. Hence in the representation for $\alpha$ and $\beta$ that is $(B_{\alpha}, A)$ and $(B_{\beta}, A)$ respectively we must have $B_{\alpha} = \emptyset = B_{\beta}$. It follows that $\alpha = \beta$ and so $G$ is injective.

 Let $\beta \in \mathscr{A}(M) \cap M_{\#}$. Consider $M \leq \mathbb{3}^{X}$ for some set $X$. It follows that $\beta = \beta_{_{T, A}}$ for some $\emptyset \neq A \subseteq X$. Consider $\alpha = \beta \wedge {\bf U} = \alpha_{_{U, A}} \in M_{\#}^{c}$. Along similar lines as in the proof for the well-definedness of $G$, we show that $\alpha \in \mathscr{A}(M) \cap M_{\#}^{c}$. Further, $G(\alpha) = \beta$ so that $G$ is surjective.

\end{proof}

\begin{corollary} \label{CorAdaAtomEven}
 Let $M$ be a finite ada. Then $| \mathscr{A}(M) |$ is even.
\end{corollary}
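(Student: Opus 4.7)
The plan is to derive the corollary directly from the bijection established in the preceding theorem, with the finiteness of $M$ used only to guarantee that both sides of that bijection are finite sets so that a bijection really does entail equality of cardinalities.

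First, I would observe the disjoint partition of $M$ into Boolean-algebra elements and non-Boolean elements, namely $M = M_{\#} \sqcup (M_{\#})^{c}$. Intersecting with $\mathscr{A}(M)$ gives the disjoint decomposition
\[
\mathscr{A}(M) = \bigl(\mathscr{A}(M) \cap M_{\#}\bigr) \sqcup \bigl(\mathscr{A}(M) \cap (M_{\#})^{c}\bigr).
\]
Since $M$ is finite, both pieces are finite subsets of $M$.

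Next, I would invoke the preceding theorem, which produces the explicit bijection $G : \mathscr{A}(M) \cap (M_{\#})^{c} \to \mathscr{A}(M) \cap M_{\#}$ given by $G(\alpha) = \neg\bigl((\neg\alpha)^{\downarrow}\bigr)$. For finite sets this yields
\[
\bigl|\mathscr{A}(M) \cap (M_{\#})^{c}\bigr| = \bigl|\mathscr{A}(M) \cap M_{\#}\bigr|.
\]
Combining this with the disjoint decomposition above, I conclude
\[
|\mathscr{A}(M)| = 2\,\bigl|\mathscr{A}(M) \cap M_{\#}\bigr|,
\]
which is an even nonnegative integer, and hence even as required.

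There is no real obstacle here: all the work has already been done in the preceding theorem. The only thing to verify is that the finiteness hypothesis legitimately transfers from $M$ to the two intersections (which is immediate as they are subsets of $M$) and that the decomposition of $\mathscr{A}(M)$ into its Boolean and non-Boolean atoms is genuinely disjoint (which is automatic from $M_{\#} \cap (M_{\#})^{c} = \emptyset$). By Corollary \ref{CorPropAtom} the number is moreover positive, so in fact $|\mathscr{A}(M)| \geq 2$, though this sharper statement is not needed for the corollary.
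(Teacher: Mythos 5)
Your proof is correct and is exactly the argument the paper intends: the corollary is stated as an immediate consequence of the bijection $G : \mathscr{A}(M) \cap (M_{\#})^{c} \to \mathscr{A}(M) \cap M_{\#}$, and your disjoint decomposition of $\mathscr{A}(M)$ plus the cardinality count is precisely how that inference goes. Nothing further is needed.
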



\subsection{Atomicity of $\mathbb{3}^{X}$} \label{SectionAtom3X}

We consider the $C$-algebra $\mathbb{3}^{X}$ and first establish a characterisation for its atoms.

\begin{theorem} \label{Thm-Atoms-3-X}
 Let $X$ be any set. Then $\mathscr{A}(\mathbb{3}^{X}) = \{ \alpha \in \mathbb{3}^{X} : \text{ there exists a unique } \\ x_{o} \in X \text{ such that } \alpha(x_{o}) \in \{ T, U \} \}$.
\end{theorem}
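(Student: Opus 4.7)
The plan is to translate the abstract partial order $\leq$ into a concrete pointwise criterion on $\mathbb{3}^{X}$ and then check the two inclusions directly. Because $\vee$ on $\mathbb{3}^{X}$ is defined coordinatewise, $\alpha \leq \beta$ holds iff $\alpha(x) \vee \beta(x) = \beta(x)$ in $\mathbb{3}$ for every $x \in X$. Reading off the $\vee$-table of $\mathbb{3}$, the only nontrivial comparabilities are $F \leq T$ and $F \leq U$, so $\alpha(x) \leq \beta(x)$ if and only if $\alpha(x) = F$ or $\alpha(x) = \beta(x)$. Hence the working criterion is that $\alpha \leq \beta$ iff $\alpha$ agrees with $\beta$ at every point where $\alpha(x) \neq F$.

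For the inclusion $\supseteq$, suppose $\alpha \in \mathbb{3}^{X}$ has a unique $x_{o} \in X$ with $\alpha(x_{o}) \in \{T,U\}$ and $\alpha(x) = F$ for all $x \neq x_{o}$. If $\mathbf{F} \leq \beta \leq \alpha$, then at each $x \neq x_{o}$ the criterion forces $\beta(x) = F$ (since $\alpha(x) = F$), and at $x_{o}$ it forces $\beta(x_{o}) \in \{F, \alpha(x_{o})\}$. Consequently $\beta$ is either $\mathbf{F}$ or $\alpha$, so $\alpha \in \mathscr{A}(\mathbb{3}^{X})$.

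For $\subseteq$, take $\alpha \in \mathscr{A}(\mathbb{3}^{X})$. Since $\alpha \neq \mathbf{F}$, there is some $x_{o} \in X$ with $\alpha(x_{o}) \in \{T,U\}$. Assume for contradiction that there is a second point $x_{1} \neq x_{o}$ with $\alpha(x_{1}) \in \{T,U\}$. Define $\beta \in \mathbb{3}^{X}$ by $\beta(x_{o}) = \alpha(x_{o})$ and $\beta(x) = F$ for every $x \neq x_{o}$. By the pointwise criterion, $\mathbf{F} \leq \beta \leq \alpha$; however $\beta \neq \mathbf{F}$ because $\beta(x_{o}) \neq F$, and $\beta \neq \alpha$ because $\beta(x_{1}) = F \neq \alpha(x_{1})$. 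This contradicts $\alpha \in \mathscr{A}(\mathbb{3}^{X})$, so the point $x_{o}$ where $\alpha$ takes a value in $\{T,U\}$ is unique.

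The only real piece of work is unpacking $\leq$ into its pointwise form; after that both directions are short direct checks. I do not foresee any genuine obstacle, since there is no subalgebra constraint to respect here (the cut-down element $\beta$ used in the contradiction lies automatically in $\mathbb{3}^{X}$).
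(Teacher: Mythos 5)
Your proof is correct and follows essentially the same route as the paper: unpack $\leq$ pointwise (using that $F$ is the bottom and $T, U$ are incomparable), then in each direction construct or constrain a strictly intermediate element coordinatewise. The only cosmetic difference is that in the converse direction you zero out all coordinates except $x_{o}$, whereas the paper zeroes out only $x_{o}$ and keeps the rest of $\alpha$; both choices yield the same contradiction.
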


\begin{proof}
 Let $M = \mathbb{3}^{X}$ and $A = \{ \alpha \in M : \text{ there exists a unique } x_{o} \in X \text{ such that } \\ \alpha(x_{o}) \in \{ T, U \} \}$. Let $\alpha \in A$. If $\alpha$ is not an atom then there exists $\beta \in M$ such that ${\bf F} \lneq \beta \lneq \alpha$. Since $\alpha \in A$ we have $\alpha(x) = F$ for all $x \neq x_{o}$. Thus since $F \leq \beta(x) \leq \alpha(x)$ we must have $\beta(x) = F$ for all $x \neq x_{o}$. It is clear that since $\beta \lneq \alpha$ we must have $\beta(x_{o}) \lneq \alpha(x_{o})$ and so $\beta(x_{o}) = F$. This holds as $F \lneq T$ and $F \lneq U$ but $T \nleq U$ and $U \nleq T$. However this gives $\beta = {\bf F}$ which is a contradiction. Thus $\alpha \in \mathscr{A}(M)$.

 Conversely suppose that $\alpha \in \mathscr{A}(M)$ but $\alpha \notin A$. Then either there exist $x_{o}, y_{o} \in X$ where $x_{o} \neq y_{o}$ and $\alpha(x_{o}), \alpha(y_{o}) \in \{ T, U \}$ or we have $\alpha(x) = F$ for all $x \in X$. If $\alpha(x) = F$ for all $x \in X$ then clearly $\alpha = {\bf F}$ and so $\alpha \notin \mathscr{A}(M)$ which is a contradiction. If there exist $x_{o}, y_{o} \in X$ where $x_{o} \neq y_{o}$ and $\alpha(x_{o}), \alpha(y_{o}) \in \{ T, U \}$ then consider $\beta \in M$ given by the following:
 \begin{equation*}
  \beta(x) = \begin{cases}
              \alpha(x), & \text{ if } x \neq x_{o}; \\
              F, & \text{ if } x = x_{o}.
             \end{cases}
 \end{equation*}
 It is easy to see that $F \leq \beta(x) \leq \alpha(x)$ for all $x \in X$ and so ${\bf F} \leq \beta \leq \alpha$. Since $\beta(x_{o}) = F \lneq \alpha(x_{o})$ and $\beta(y_{o}) = \alpha(y_{o}) \in \{ T, U \}$ we have ${\bf F} \lneq \beta \lneq \alpha$ which is a contradiction to the assumption that $\alpha \in \mathscr{A}(M)$. The result follows.
\end{proof}

This gives us the following result on the number of atoms in $\mathbb{3}^{X}$.

\begin{corollary} \label{Cor-Atoms-3-X}
 For $X \neq \emptyset$ we have $| \mathscr{A}(\mathbb{3}^{X}) | = 2 \times | X |$.
\end{corollary}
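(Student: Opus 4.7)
The plan is to invoke Theorem \ref{Thm-Atoms-3-X} directly, which characterises the atoms of $\mathbb{3}^X$ as exactly those $\alpha \in \mathbb{3}^X$ for which there is a unique $x_o \in X$ with $\alpha(x_o) \in \{T, U\}$ (and hence $\alpha(x) = F$ for all $x \neq x_o$). With this characterisation in hand, counting the atoms reduces to a simple combinatorial bookkeeping exercise, so there is no genuine obstacle; the result is essentially a corollary in the strictest sense.

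First, I would set up a partition of $\mathscr{A}(\mathbb{3}^X)$ indexed by $X$. For each $x_o \in X$, let $\mathscr{A}_{x_o}$ denote the set of atoms whose unique non-$F$ coordinate is at $x_o$. By the theorem, $\mathscr{A}(\mathbb{3}^X) = \bigcup_{x_o \in X} \mathscr{A}_{x_o}$, and this union is disjoint because the value $x_o$ is uniquely determined by the atom. I would then observe that $\mathscr{A}_{x_o}$ has exactly two elements, namely the atom $\varphi_{T,x_o}$ (with value $T$ at $x_o$ and $F$ elsewhere) and the atom $\varphi_{U,x_o}$ (with value $U$ at $x_o$ and $F$ elsewhere), corresponding to the two possibilities for $\alpha(x_o) \in \{T, U\}$.

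Finally, summing over $x_o \in X$ yields $|\mathscr{A}(\mathbb{3}^X)| = \sum_{x_o \in X} |\mathscr{A}_{x_o}| = \sum_{x_o \in X} 2 = 2 \times |X|$, with the convention that this equality holds as cardinal arithmetic when $X$ is infinite. The hypothesis $X \neq \emptyset$ is used only to ensure that $\mathscr{A}(\mathbb{3}^X)$ is nonempty (consistent with Corollary \ref{CorPropAtom} in the finite case). The only thing to be slightly careful about is verifying that the two candidate elements $\varphi_{T,x_o}$ and $\varphi_{U,x_o}$ genuinely lie in $\mathbb{3}^X$ and are distinct, both of which are immediate.
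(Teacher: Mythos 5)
Your proposal is correct and matches the paper's intent: the paper gives no separate proof, treating the count as immediate from Theorem \ref{Thm-Atoms-3-X}, and your partition of the atoms by their unique non-$F$ coordinate (two atoms per coordinate, $\varphi_{T,x_o}$ and $\varphi_{U,x_o}$) is exactly the bookkeeping the paper has in mind. The remark about cardinal arithmetic for infinite $X$ is a fine, if unneeded, precaution.
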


In view of the fact that all finite adas are isomorphic to $\mathbb{3}^{X}$ (cf. \rref{RemFinAda3X}) we note that \cref{Cor-Atoms-3-X} is in fact a stronger version of \cref{CorAdaAtomEven}.

We now study the set of atoms in $\mathbb{3}^{X}$ that have existence of $\oplus$.

\begin{notation}
 Let $\alpha \in \mathscr{A}(\mathbb{3}^{X})$. Using \tref{Thm-Atoms-3-X}, denote by $x_{\alpha}$ the unique co-ordinate satisfying $\alpha(x_{\alpha}) \in \{ T, U \}$.
\end{notation}

\begin{theorem} \label{ThmExisVeeBar3X}
 Let $\{ \alpha_{i}: 1 \leq i \leq N \}$ be a finite set of atoms in $\mathbb{3}^{X}$. Then $\displaystyle \bigoplus_{i = 1}^{N} \alpha_{i}$ exists if and only if $x_{\alpha_{i}} \neq x_{\alpha_{j}}$ for all $i \neq j$. Further
 \begin{equation*}
  \bigoplus_{i = 1}^{N} \alpha_{i}(x) = \begin{cases}
                                        \alpha_{i}(x_{\alpha_{i}}), & \text{ if } x = x_{\alpha_{i}}; \\
                                        F, & \text{ otherwise.}
                                        \end{cases}
 \end{equation*}
\end{theorem}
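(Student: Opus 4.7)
The plan is to reduce everything to a pointwise analysis in $\mathbb{3}^{X}$, exploiting \tref{Thm-Atoms-3-X}: each atom $\alpha_i$ takes value $F$ everywhere except at its distinguished coordinate $x_{\alpha_i}$, where $\alpha_i(x_{\alpha_i}) \in \{T, U\}$. Since $\vee$ in $\mathbb{3}^{X}$ is associative and computed pointwise, the question reduces at each $x \in X$ to whether the join of the sequence $(\alpha_1(x), \ldots, \alpha_N(x))$ in $\mathbb{3}$ is invariant under every reordering of the indices.

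As a preliminary, I would record (by direct inspection of the $\vee$-table for $\mathbb{3}$) that $F$ is a two-sided identity for $\vee$ while both $T$ and $U$ are left-absorbing: $T \vee y = T$ and $U \vee y = U$ for every $y \in \mathbb{3}$. A short induction then shows that the join of any finite sequence in $\mathbb{3}$ equals $F$ when every entry is $F$, and otherwise equals the leftmost non-$F$ entry. Hence the join of such a sequence is invariant under all reorderings of its indices precisely when either every entry is $F$, or all non-$F$ entries take the same value.

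For necessity, suppose $\bigoplus_{i=1}^{N} \alpha_i$ exists yet $x_{\alpha_i} = x_{\alpha_j} = x_o$ for some $i \neq j$. Because $\{\alpha_i : 1 \leq i \leq N\}$ is a set of distinct atoms and each atom is $F$ off its distinguished coordinate, $\alpha_i$ and $\alpha_j$ must disagree at $x_o$, forcing $\{\alpha_i(x_o), \alpha_j(x_o)\} = \{T, U\}$. The sequence of values at $x_o$ then contains both $T$ and $U$, violating the invariance criterion above, a contradiction.

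For sufficiency together with the explicit formula, assume the $x_{\alpha_i}$'s are pairwise distinct. At any $x \in X$, at most one index $i$ satisfies $x_{\alpha_i} = x$: if such an $i$ exists, then $\alpha_j(x) = F$ for every $j \neq i$, so the join at $x$ in any order equals $\alpha_i(x_{\alpha_i})$; if no such index exists, the join at $x$ equals $F$. This simultaneously confirms that $\bigoplus_{i=1}^{N} \alpha_i$ exists and produces the displayed formula. The only piece warranting careful justification is the preliminary lemma on joins of sequences in $\mathbb{3}$; once it is in hand, both directions of the theorem are immediate pointwise bookkeeping.
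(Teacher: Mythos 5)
Your proof is correct and follows essentially the same route as the paper: a pointwise analysis using that $F$ is an identity and $T,U$ are left-zeros for $\vee$ in $\mathbb{3}$, with the failure of invariance when two atoms share a distinguished coordinate coming from orderings that put $T$ or $U$ first. Your explicit justification (via distinctness of the atoms) of why the shared coordinate forces values $T$ and $U$ is a point the paper only handles with a ``without loss of generality,'' so nothing is missing.
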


\begin{proof}
 Let $x_{\alpha_{i}} \neq x_{\alpha_{j}}$ for all $i \neq j$. Thus using \tref{Thm-Atoms-3-X} for any $x \in X$ there exist at most one $\alpha_{x}$ in this collection such that $\alpha_{x}(x) \in \{ T, U \}$. In view of the fact that $F$ is a left and right-identity for $\vee$ we have
 \begin{equation*}
  \alpha_{1}(x) \vee \alpha_{2}(x) \vee \cdots \vee \alpha_{N}(x) = \begin{cases}
                                                                \alpha_{i}(x_{\alpha_{i}}), & \text{ if } x = x_{\alpha_{i}} \text{ for some } 1  \leq i \leq N; \\
                                                                F, & \text{ otherwise.}
                                                               \end{cases}
 \end{equation*}
 Hence for any bijection $\sigma : \{ 1, 2, \ldots, N \} \rightarrow \{ 1, 2, \ldots, N \}$ we have:
 \begin{equation*}
  \alpha_{\sigma(1)}(x) \vee \alpha_{\sigma(2)}(x) \vee \cdots \vee \alpha_{\sigma(N)}(x) = \begin{cases}
                                                                                             \alpha_{i}(x_{\alpha_{i}}), & \text{ if } x = x_{\alpha_{i}} \text{ for some } 1  \leq i \leq N; \\
                                                                                             F, & \text{ otherwise.}
                                                                                             \end{cases}
 \end{equation*}
  It follows that $\alpha_{\sigma(1)} \vee \alpha_{\sigma(2)} \vee \cdots \vee \alpha_{\sigma(N)} = \alpha_{1} \vee \alpha_{2} \vee \cdots \vee \alpha_{N}$ and so $\displaystyle \bigoplus_{i = 1}^{N} \alpha_{i}$ exists.

 Conversely suppose that $\displaystyle \bigoplus_{i = 1}^{N} \alpha_{i}$ exists and $x_{\alpha_{i}} = x_{\alpha_{j}}$ for some $i \neq j$. Without loss of generality assume that $\alpha_{i}(x_{\alpha_{i}}) = T$ while $\alpha_{j}(x_{\alpha_{i}}) = U$. It follows that for the bijection $\sigma : \{ 1, 2, \ldots, N \} \rightarrow \{ 1, 2, \ldots, N \}$ given by
 \begin{equation*}
  \sigma(n) = \begin{cases}
              i, & \text{ if } n = 1; \\
              j, & \text{ if } n = 2; \\
              n, & \text{ otherwise}
              \end{cases}
 \end{equation*}
 we have $\alpha_{\sigma(1)}(x_{\alpha_{i}}) \vee \alpha_{\sigma(2)}(x_{\alpha_{i}}) \vee \cdots \vee \alpha_{\sigma(N)}(x_{\alpha_{i}}) = \alpha_{i}(x_{\alpha_{i}}) \vee \alpha_{j}(x_{{o}_{\alpha_{i}}}) \vee \cdots \vee \alpha_{\sigma(N)}(x_{{o}_{\alpha_{i}}}) = T \vee U \vee \cdots \vee \alpha_{\sigma_{1}(N)}(x_{\alpha_{i}}) = T$, since $T$ is a left-zero for $\vee$.

 On the other hand for the bijection $\tau : \{ 1, 2, \ldots, N \} \rightarrow \{ 1, 2, \ldots, N \}$ where
 \begin{equation*}
  \tau(n) = \begin{cases}
              j, & \text{ if } n = 1; \\
              i, & \text{ if } n = 2; \\
              n, & \text{ otherwise}
              \end{cases}
 \end{equation*}
 we have $\alpha_{\tau(1)}(x_{\alpha_{i}}) \vee \alpha_{\tau(2)}(x_{\alpha_{i}}) \vee \cdots \vee \alpha_{\tau(N)}(x_{\alpha_{i}}) = \alpha_{j}(x_{\alpha_{i}}) \vee \alpha_{i}(x_{\alpha_{i}}) \vee \cdots \vee \alpha_{\tau(N)}(x_{\alpha_{i}}) = U \vee T \vee \cdots \vee \alpha_{\tau(N)}(x_{\alpha_{i}}) = U$, since $U$ is a left-zero for $\vee$. This is a contradiction to the assumption that $\bigoplus_{i = 1}^{N} \alpha_{i}$ exists and the result follows. The expression for $\bigoplus \alpha_{i}$ is also clear from the above proof.
\end{proof}

\begin{theorem} \label{ThmFin3XAtomic}
 If $X$ is finite then $\mathbb{3}^{X}$ is atomic.
\end{theorem}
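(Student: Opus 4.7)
The plan is to combine the characterisation of atoms in \tref{Thm-Atoms-3-X} with the existence criterion for $\bigoplus$ given in \tref{ThmExisVeeBar3X} to explicitly decompose any non-$F$ element of $\mathbb{3}^{X}$.

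First, I would fix an arbitrary $\alpha \in \mathbb{3}^{X}$ with $\alpha \neq {\bf F}$ and consider the support-like set $S_{\alpha} = \{ x \in X : \alpha(x) \in \{T,U\} \}$. Since $\alpha \neq {\bf F}$ there is some $x \in X$ with $\alpha(x) \neq F$, so $S_{\alpha}$ is non-empty; moreover $S_{\alpha}$ is finite because $X$ is. Enumerate $S_{\alpha} = \{x_{1}, \ldots, x_{N}\}$.

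Next, for each $1 \leq i \leq N$ I would define $\alpha_{i} \in \mathbb{3}^{X}$ by $\alpha_{i}(x_{i}) = \alpha(x_{i})$ and $\alpha_{i}(y) = F$ for $y \neq x_{i}$. By construction $x_{i}$ is the unique coordinate at which $\alpha_{i}$ takes a value in $\{T,U\}$, so \tref{Thm-Atoms-3-X} gives $\alpha_{i} \in \mathscr{A}(\mathbb{3}^{X})$ with $x_{\alpha_{i}} = x_{i}$.

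Since the $x_{i}$ are pairwise distinct, \tref{ThmExisVeeBar3X} applies and guarantees that $\bigoplus_{i=1}^{N} \alpha_{i}$ exists; the formula there gives $\bigl(\bigoplus_{i=1}^{N} \alpha_{i}\bigr)(x) = \alpha(x_{i})$ when $x = x_{i}$ for some $i$, and $F$ otherwise. This matches $\alpha$ exactly: at points of $S_{\alpha}$ by definition, and at points outside $S_{\alpha}$ since $\alpha(x) = F$ there. Hence $\alpha = \bigoplus_{i=1}^{N} \alpha_{i}$, proving atomicity.

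There is no serious obstacle here; the previous two theorems do all the heavy lifting, and the only point requiring a small amount of care is verifying that $S_{\alpha}$ is non-empty (which is where we use $\alpha \neq {\bf F}$) and that the pointwise reconstruction matches $\alpha$ outside $S_{\alpha}$.
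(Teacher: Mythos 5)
Your proof is correct and follows the paper's own argument essentially verbatim: the set $S_{\alpha}$ you use is exactly the complement $B^{c}$ in the paper's pairs-of-sets description, and the single-coordinate atoms $\alpha_{i}$, the appeal to \tref{Thm-Atoms-3-X} and \tref{ThmExisVeeBar3X}, and the pointwise verification are the same. No gaps.
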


\begin{proof}
 Let $\beta \in \mathbb{3}^{X}$ such that $\beta \neq {\bf F}$. Using the pairs of sets representation of $\mathbb{3}^{X}$ identify $\beta$ with the pair of sets $(A, B)$. Since $\beta \neq {\bf F}$ it follows that $B^{c} \neq \emptyset$. Consider the family of elements defined by the following for $y \in B^{c}$:
 \begin{equation*}
  \alpha_{y}(x) = \begin{cases}
                   \beta(y), & \text{ if } x = y; \\
                   F, & \text{ otherwise.}
                  \end{cases}
 \end{equation*}
 Using \tref{Thm-Atoms-3-X} since $\alpha_{y}(y) = \beta(y) \in \{ T, U \}$ we have $\alpha_{y} \in \mathscr{A}(\mathbb{3}^{X})$ for each $y \in B^{c}$. Further there are finitely many $\alpha_{y}$ since $X$ and therefore $B^{c}$ is finite. Note that $x_{\alpha_{y}} = y$ and so $x_{\alpha_{y}} \neq x_{\alpha_{z}}$ for $y \neq z$. Consequently, using \tref{ThmExisVeeBar3X} $\displaystyle \bigoplus_{y \in B^{c}} \alpha_{y}$ exists.

 For $x \in B$ we have $\beta(x) = F$. Also $\alpha_{y}(x) = F$ for all $y \in B^{c}$ and so $\bigoplus \alpha_{y}(x) = F = \beta(x)$. For $x \in B^{c}$ using \tref{ThmExisVeeBar3X} we have $\bigoplus \alpha_{y}(x) = \alpha_{x}(x) = \beta(x)$. Thus we have a finite set $\{ \alpha_{y} : y \in B^{c} \} \subseteq \mathscr{A}(\mathbb{3}^{X})$ such that $\bigoplus \alpha_{y} = \beta$. Hence $\mathbb{3}^{X}$ is atomic.
\end{proof}

\begin{remark}
 Note that $\mathbb{3}^{X}$ where $X$ is infinite will be non-atomic since the element ${\bf T}$ can never be expressed in terms of finitely many atoms.
\end{remark}


\subsection{g-closed $C$-algebras} \label{SectionGClosed}

We consider $M \leq \mathbb{3}^{X}$ and try to understand the atomicity of $M$ from information about the atoms of $\mathbb{3}^{X}$. First we justify the feasibility of this approach.

\begin{remark}
 Let $\phi : M \rightarrow \mathbb{3}^{X}$ be a $C$-algebra embedding. Then $\phi$ is also order-preserving. Let $x \leq y \in M$. Then $\phi(x) \vee \phi(y) = \phi(x \vee y) = \phi(y)$ and so $\phi(x) \leq \phi(y)$.
\end{remark}

Thus we make use of the notion of atoms in $\mathbb{3}^{X}$ to gain an understanding of the same in $M$ where $M \leq \mathbb{3}^{X}$. In this section we assume that $M \leq \mathbb{3}^{X}$.

\begin{remark}
 It is straightforward to verify that $M \cap \mathscr{A}(\mathbb{3}^{X}) \subseteq \mathscr{A}(M)$. In general the inclusion could be proper.

 To illustrate this consider $$M = \{ (T, T), (F, F), (U, U), (F, U), (U, F), (T, U), (U, T) \}$$ where $M \leq \mathbb{3}^{2}$. Then $\mathscr{A}(M) = \{ (F, U), (U, F), (T, T) \} \supsetneq M \cap \mathscr{A}(\mathbb{3}^{2})$ since $(T, T) \notin \mathscr{A}(\mathbb{3}^{2})$.
\end{remark}

Thus not all atoms of $M$ are atoms of $\mathbb{3}^{X}$. The atoms of $M$ that remain atoms in $\mathbb{3}^{X}$ are in some sense \emph{global} atoms. If every atom of $M$ is an atom of $\mathbb{3}^{X}$, and therefore, a global atom, then $M$ is closed with respect to global atoms. Thus we define the following notion.

\begin{definition}
  $M$ is said to be \emph{closed with respect to global atoms in} $\mathbb{3}^{X}$ if $\mathscr{A}(M) \subseteq \mathscr{A}(\mathbb{3}^{X})$. In short we say that $M$ is \emph{g-closed in} $\mathbb{3}^{X}$.
\end{definition}

\begin{remark}
 If $M$ is g-closed in $\mathbb{3}^{X}$ then we have $\mathscr{A}(M) = M \cap \mathscr{A}(\mathbb{3}^{X})$.
\end{remark}

\begin{remark} \label{RemGClosed32}
 Consider $M \leq \mathbb{3}^{2}$. The subalgebras of $\mathbb{3}^{2}$ are as follows:
 \begin{align*}
  M_{0} & = \{ (T, T), (F, F), (U, U) \}, \\
  M_{1} & = \{ (T, T), (F, F), (U, U), (F, U), (T, U) \}, \\
  M_{2} & = \{ (T, T), (F, F), (U, U), (U, F), (U, T) \}, \\
  M_{3} & = \{ (T, T), (F, F), (U, U), (F, U), (T, U), (U, F), (U, T) \}, \\
  M_{4} & = \mathbb{3}^{2}.
 \end{align*}
 The set of atoms of each subalgebra are as follows:
 \begin{align*}
  \mathscr{A}(M_{0}) & = \{ (T, T), (U, U) \}, \\
  \mathscr{A}(M_{1}) & = \{ (T, T), (F, U) \}, \\
  \mathscr{A}(M_{2}) & = \{ (T, T), (U, F) \}, \\
  \mathscr{A}(M_{3}) & = \{ (T, T), (F, U), (U, F) \}, \\
  \mathscr{A}(M_{4}) & = \mathbb{3}^{2}.
 \end{align*}
 Since $(T, T) \in \mathscr{A}(M_{i})$ for $0 \leq i \leq 3$ and $(T, T) \notin \mathscr{A}(\mathbb{3}^{2})$, no proper subalgebra is g-closed in $\mathbb{3}^{2}$.
\end{remark}

We ascertain all the globally closed subalgebras of $\mathbb{3}^{X}$. To that aim we first have the following result.

\begin{lemma} \label{LemmaAtomsEqual3X}
 Let $M \leq \mathbb{3}^{X}$ where $X$ is finite. If $\mathscr{A}(M) = \mathscr{A}(\mathbb{3}^{X})$ then $M = \mathbb{3}^{X}$.
\end{lemma}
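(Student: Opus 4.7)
The plan is to use the atomicity of $\mathbb{3}^{X}$ (available since $X$ is finite) together with closure of $M$ under $\vee$ to show that every element of $\mathbb{3}^{X}$ already sits inside $M$. The hypothesis $\mathscr{A}(M) = \mathscr{A}(\mathbb{3}^{X})$ immediately gives the key inclusion $\mathscr{A}(\mathbb{3}^{X}) \subseteq M$, which is the only thing we really need from the hypothesis beyond the fact that $M$ is a subalgebra.

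First I would dispose of the element $\mathbf{F}$: since $M$ is assumed to be a $C$-algebra with $T, F, U$, we have $\mathbf{F} \in M$ by default. Next, let $\beta \in \mathbb{3}^{X}$ with $\beta \neq \mathbf{F}$. By \tref{ThmFin3XAtomic}, the $C$-algebra $\mathbb{3}^{X}$ is atomic, so there exists a finite family $\{\alpha_{y}\}_{y \in B^{c}} \subseteq \mathscr{A}(\mathbb{3}^{X})$ such that
\[
\beta \;=\; \bigoplus_{y \in B^{c}} \alpha_{y}.
\]
By hypothesis, each $\alpha_{y}$ lies in $\mathscr{A}(M) \subseteq M$. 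Since $\bigoplus$ is by definition a particular value of the iterated $\vee$ (guaranteed by \tref{ThmExisVeeBar3X} to equal the join in any order), and $M$ is closed under $\vee$, we conclude $\beta \in M$. This forces $\mathbb{3}^{X} \subseteq M$, and the reverse inclusion is assumed, so $M = \mathbb{3}^{X}$.

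The argument is essentially a one-line consequence of atomicity together with subalgebra closure; there is no substantive obstacle. The only point requiring a moment's attention is the observation that the existence of $\bigoplus_{y \in B^{c}} \alpha_{y}$ in $\mathbb{3}^{X}$ (as produced by the proof of \tref{ThmFin3XAtomic}) is witnessed by an actual iterated join of elements lying in $M$, so that closure of $M$ under the binary operation $\vee$ is enough to conclude $\beta \in M$.
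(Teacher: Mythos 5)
Your proof is correct and follows essentially the same route as the paper: the paper's own argument likewise writes an arbitrary $\beta \neq \mathbf{F}$ as $\bigoplus_{y \in B^{c}} \alpha_{y}$ for the single-coordinate atoms $\alpha_{y} \in \mathscr{A}(\mathbb{3}^{X})$ (this is exactly the construction inside the proof of \tref{ThmFin3XAtomic}), notes that the hypothesis places these atoms in $M$, and concludes $\beta \in M$ by closure under $\vee$. Citing \tref{ThmFin3XAtomic} rather than re-deriving the decomposition is only a cosmetic difference.
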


\begin{proof}
 Let $\alpha \in \mathbb{3}^{X}$. If $\alpha = {\bf F}$ then we are done since ${\bf F} \in M$. Suppose that $\alpha \neq {\bf F}$. Then $\alpha$ can be represented by the pair of sets $(A, B)$ where $B^{c} \neq \emptyset$. Consider as earlier for each $y \in B^{c}$ the family of elements given below:
 \begin{equation*}
  \alpha_{y}(x) = \begin{cases}
                   \alpha(y), & \text{ if } x = y; \\
                   F, & \text{ otherwise.}
                  \end{cases}
 \end{equation*}
 Using \tref{Thm-Atoms-3-X} we have $\alpha_{y} \in \mathscr{A}(\mathbb{3}^{X})$. Further $\mathscr{A}(M) = \mathscr{A}(\mathbb{3}^{X})$ gives $\alpha_{y} \in \mathscr{A}(M) \subseteq M$. Note that since $X$ is finite, so is $B^{c}$. Consequently there are only finitely many such $\alpha_{y}$. Moreover using \tref{ThmExisVeeBar3X} $\bigoplus \alpha_{y}$ exists and so $\bigoplus \alpha_{y} \in \mathbb{3}^{X}$ so that $\bigoplus \alpha_{y} \in M$. It is straightforward to verify that $\displaystyle \bigoplus_{y \in B^{c}} \alpha_{y} = \alpha$ so that $\alpha \in M$. Thus $M = \mathbb{3}^{X}$.
\end{proof}

\begin{theorem} \label{ThmGClosed3X}
 Let $M$ be g-closed in $\mathbb{3}^{X}$ where $X$ is finite. Then $M = \mathbb{3}^{X}$.
\end{theorem}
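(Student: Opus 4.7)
The plan is to reduce to \lref{LemmaAtomsEqual3X} by upgrading the g-closed hypothesis $\mathscr{A}(M) \subseteq \mathscr{A}(\mathbb{3}^{X})$ to the equality $\mathscr{A}(M) = \mathscr{A}(\mathbb{3}^{X})$. Once every atom of $\mathbb{3}^{X}$ is shown to lie in $M$, its atom-ness inside the subalgebra $M$ is automatic (any $b \in M$ with $F \leq b \lneq a$ is also in $\mathbb{3}^{X}$, hence forced to be $F$), so the real task is the set-theoretic inclusion $\mathscr{A}(\mathbb{3}^{X}) \subseteq M$.

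First I would harvest every $\varphi_{_{T, x}}$ through the Boolean part $M_{\#}$. Since $X$ is finite, $M_{\#}$ is a finite Boolean subalgebra of $\mathbb{2}^{X}$. By \pref{PropMMHash} its atoms are $\mathscr{A}(M) \cap M_{\#}$, which by g-closedness lies inside $\mathscr{A}(\mathbb{3}^{X}) \cap M_{\#}$. Using the characterisation in \tref{Thm-Atoms-3-X}, the only atoms of $\mathbb{3}^{X}$ landing in $M_{\#}$ are the singleton indicators $\varphi_{_{T, x}}$, since $\varphi_{_{U, x}}$ fails $\alpha \vee \neg \alpha = {\bf T}$. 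Hence each atom of $M_{\#}$ has the form $\varphi_{_{T, x_{a}}}$ for some $x_{a} \in X$. Because $M_{\#}$ is a finite Boolean algebra, the join of all its atoms equals ${\bf T}$, and in the pair-of-sets representation this join evaluates to $\varphi_{_{T, S}}$ where $S = \{x_{a} : a \in \mathscr{A}(M_{\#})\}$. Matching this against ${\bf T} = \varphi_{_{T, X}}$ forces $S = X$, so every $\varphi_{_{T, x}}$ is an atom of $M_{\#}$ and in particular lies in $M$.

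The remaining atoms of $\mathbb{3}^{X}$ are then produced directly from the ones just obtained: for each $x \in X$ the pair-of-sets formula for $\wedge$ yields $\varphi_{_{T, x}} \wedge {\bf U} = \varphi_{_{U, x}}$, and since both factors sit in $M$, so does $\varphi_{_{U, x}}$. By \tref{Thm-Atoms-3-X} this exhausts $\mathscr{A}(\mathbb{3}^{X})$, giving $\mathscr{A}(\mathbb{3}^{X}) \subseteq M$. Combined with the opening observation this upgrades to $\mathscr{A}(M) = \mathscr{A}(\mathbb{3}^{X})$, and \lref{LemmaAtomsEqual3X} then delivers $M = \mathbb{3}^{X}$.

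The main obstacle is the covering step inside $M_{\#}$: one must leverage that the join of all atoms of a finite Boolean algebra equals the top together with the fact that g-closedness has already forced each atom of $M_{\#}$ to be a singleton indicator, in order to conclude that their indexing points $x_{a}$ collectively exhaust $X$. Without g-closedness this step fails, since an atom of $M_{\#}$ can legitimately correspond to a larger block of $X$. Everything else in the argument is routine translation between the $C$-algebra operations and the pair-of-sets description.
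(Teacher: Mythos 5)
Your proof is correct, and it shares the paper's overall skeleton: reduce to \lref{LemmaAtomsEqual3X}, show that every $\varphi_{_{T,x}}$ lies in $M$, and then obtain $\varphi_{_{U,x}} = \varphi_{_{T,x}} \wedge {\bf U} \in M$, so that $\mathscr{A}(M) = \mathscr{A}(\mathbb{3}^{X})$. Where you differ is in the key covering step. The paper runs an iterative ``algorithmic'' process inside the $C$-algebra: it extracts one atom $\alpha_{_{T,x_{1}}} \leq {\bf T}$ via \pref{PropAtomLeq} and \pref{PropAleqB}, passes to $\neg\beta_{1}$, extracts a further atom below it, and repeats, arguing the process terminates in at most $|X|$ steps. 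You instead work entirely inside the finite Boolean algebra $M_{\#}$: by \pref{PropMMHash} its atoms are $\mathscr{A}(M) \cap M_{\#}$, g-closedness together with \tref{Thm-Atoms-3-X} forces each of them to be a singleton indicator $\varphi_{_{T,x_{a}}}$, and the standard fact that the top of a finite Boolean algebra is the join of its atoms, read through the pairs-of-sets description of $\vee$, forces the index set to be all of $X$. Your route buys a cleaner and arguably tighter argument -- the paper's iteration, as written, does not make fully explicit why successive steps produce genuinely new points of $X$, whereas your covering claim is immediate from Boolean atomicity -- at the modest cost of importing a classical Boolean-algebra fact not proved in the paper (though the paper freely uses such facts elsewhere, e.g.\ in \rref{RemFinAda3X}). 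The remaining steps (that an atom of $\mathbb{3}^{X}$ lying in $M$ is automatically an atom of $M$, and the final appeal to \lref{LemmaAtomsEqual3X}) match the paper's reasoning.
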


\begin{proof}
 We describe an algorithmic mechanism to generate all atoms from one. In view of \lref{LemmaAtomsEqual3X}, on obtaining $\mathscr{A}(M) = \mathscr{A}(\mathbb{3}^{X})$ we then have $M = \mathbb{3}^{X}$. It suffices to show that $\alpha_{_{T, x}} \in M$ for each $x \in X$, because if $\alpha_{_{T, x}} \in M$ then $\alpha_{_{T, x}} \wedge {\bf U } = \alpha_{_{U, x}} \in M$.

 Since $X$ is finite we have $M$ is finite. Using \pref{PropAtomLeq} and \pref{PropAleqB} for ${\bf T} \in M$ there exists $\alpha \in \mathscr{A}(M)$ such that $\alpha \leq {\bf T}$ so that $\alpha \in M_{\#} \leq \mathbb{2}^{X}$. Since $M$ is g-closed in $\mathbb{3}^{X}$ we have $\alpha \in \mathscr{A}(\mathbb{3}^{X})$ so that $\alpha = \alpha_{_{T, x_{1}}}$ for some $x_{1} \in X$.

 Define $\beta_{1} = \alpha_{_{T, x_{1}}}$ and so $\neg \beta_{1} = \neg \alpha_{_{T, x_{1}}} = \alpha_{_{T, X \setminus \{ x_{1} \} }}$. If $\neg \beta_{1}$ is an atom then $ X \setminus \{ x_{1} \}$ is a singleton and so $X = \{ x_{1}, x_{2} \}$ and so the algebra is $\mathbb{3}^{2}$. The only subalgebra g-closed in $\mathbb{3}^{2}$ is itself and we are done.

 If $\neg \beta_{1}$ is not an atom then there exists $\alpha_{_{T, x_{2}}} \in \mathscr{A}(M)$ such that $\alpha_{_{T, x_{2}}} \leq \neg \beta_{1} \leq {\bf T}$. Define $\beta_{2} = \alpha_{_{T, x_{2}}}$ and so $\neg \beta_{2} = \neg \alpha_{_{T, x_{2}}} = \alpha_{_{T, X \setminus \{ x_{2} \} }}$. If $\neg \beta_{2}$ is an atom then we are through. Else there exists $\alpha_{_{T, x_{3}}} \in \mathscr{A}(M)$ such that $\alpha_{_{T, x_{3}}} \leq \neg \beta_{2} \leq {\bf T}$. Define $\beta_{3} = \alpha_{_{T, x_{3}}}$ and so $\neg \beta_{3} = \neg \alpha_{_{T, x_{3}}} = \alpha_{_{T, X \setminus \{ x_{3} \} }}$ and so on.

 This process can take at most $| X |$ steps. Further, as mentioned above if $\alpha_{_{T, x}} \in M$ then $\alpha_{_{T, x}} \wedge {\bf U } = \alpha_{_{U, x}} \in M$ so that $\mathscr{A}(M) = \mathscr{A}(\mathbb{3}^{X})$. Hence $M = \mathbb{3}^{X}$.
\end{proof}

\begin{corollary}
 The collection of g-closed subalgebras in $\mathbb{3}^{X}$ where $X$ is finite comprises atomic algebras.
\end{corollary}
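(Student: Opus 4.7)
The plan is to chain together the two preceding results: Theorem \ref{ThmGClosed3X} and Theorem \ref{ThmFin3XAtomic}. The statement is essentially a one-line corollary, so there is no real obstacle to overcome — the work has already been done.

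First I would let $M$ be an arbitrary g-closed subalgebra of $\mathbb{3}^{X}$ with $X$ finite. By Theorem \ref{ThmGClosed3X}, this immediately forces $M = \mathbb{3}^{X}$. Next, since $X$ is finite, Theorem \ref{ThmFin3XAtomic} asserts that $\mathbb{3}^{X}$ is atomic. Combining these two facts, $M$ is atomic. Since $M$ was an arbitrary g-closed subalgebra in $\mathbb{3}^{X}$, the entire collection of such subalgebras consists of atomic $C$-algebras, which is exactly the claim.

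The main (and only) subtlety worth flagging is that the corollary is slightly weaker than what has just been proved: Theorem \ref{ThmGClosed3X} tells us not merely that g-closed subalgebras are atomic, but that the collection of g-closed subalgebras of $\mathbb{3}^{X}$ (for finite $X$) is a singleton, namely $\{\mathbb{3}^{X}\}$ itself. So the corollary could equally well be stated as ``the only g-closed subalgebra of $\mathbb{3}^{X}$ for finite $X$ is $\mathbb{3}^{X}$, which is atomic.'' No further calculation is required beyond citing the two theorems.
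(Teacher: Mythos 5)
Your proof is correct and is exactly the argument the paper intends: the corollary follows immediately by combining Theorem \ref{ThmGClosed3X} (any g-closed subalgebra of $\mathbb{3}^{X}$ for finite $X$ equals $\mathbb{3}^{X}$) with Theorem \ref{ThmFin3XAtomic} ($\mathbb{3}^{X}$ is atomic for finite $X$). Your remark that the collection of g-closed subalgebras is in fact the singleton $\{\mathbb{3}^{X}\}$ is also consistent with the paper.
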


\begin{remark}$\;$
 \begin{enumerate}[(i)]
  \item Let $M \leq \mathbb{3}^{X}$ where $X$ is finite and $M$ is not g-closed in $\mathbb{3}^{X}$. Then $M$ may be atomic. For instance, consider $M = M_{0} \leq \mathbb{3}^{2}$ as given in \rref{RemGClosed32}. We know that $M_{0}$ is not g-closed in $\mathbb{3}^{2}$. However $M_{0}$ is clearly atomic.
  \item Let $M \leq \mathbb{3}^{X}$ where $X$ is finite, $M$ is non-trivial and $M$ is not g-closed in $\mathbb{3}^{X}$. Then $M$ may still be atomic. Consider $M = \{ (T, T, T, T), (F, F, F, F), (U, U, U, U), \\ (T, T, F, F), (F, F, T, T), (U, U, F, F), (U, U, T, T), (F, F, U, U), (T, T, U, U) \} \leq \mathbb{3}^{4}$. In this case $\mathscr{A}(M) = \{ (T, T, F, F), (F, F, T, T), (U, U, F, F), (F, F, U, U) \}$ and so it is not g-closed in $\mathbb{3}^{4}$. However $M$ is atomic.
  \end{enumerate}
\end{remark}


\subsection{Non-atomic $C$-algebras} \label{SectionNonAtomCAlg}

We now investigate the relation between the atomicity of $M_{\#}$ and that of $M$. It is a straightforward assertion that if $M$ is a finite $C$-algebra with $T, F, U$ then no such relation holds since $M_{\#}$ is always atomic but $M$ need not be so. However the question stands in the case where $M$ is infinite. In this section we consider $M$ to be a $C$-algebra with $T, F, U$ unless otherwise mentioned.

\begin{theorem} \label{ThmMHashMNonAtomic}
 If $M_{\#}$ is non-atomic then $M$ is non-atomic.
\end{theorem}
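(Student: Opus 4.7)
The natural approach is to prove the contrapositive: assume $M$ is atomic and deduce that $M_{\#}$ is atomic. This reduction is attractive because all the needed tools have already been assembled in \ssref{SectionPropsAtoms}, and the argument boils down to checking that the $\bigoplus$-decomposition of an element $a \in M_{\#}$ in $M$ automatically lives inside $M_{\#}$.

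The plan is to pick an arbitrary $a \in M_{\#}$ with $a \neq F$, invoke the atomicity of $M$ to write $a = \bigoplus_{i=1}^{N} a_{i}$ with each $a_{i} \in \mathscr{A}(M)$, and then show that in fact each $a_{i}$ is an atom of the Boolean algebra $M_{\#}$. First I would apply \pref{PropAAiLeq}, which gives $a_{i} \leq a$ for every $i$. Next, since $a \in M_{\#}$, \pref{PropAleqB} upgrades this to $a_{i} \in M_{\#}$ for every $i$. Finally, \pref{PropMMHash} identifies $\mathscr{A}(M) \cap M_{\#}$ with $\mathscr{A}(M_{\#})$, so that each $a_{i}$ is an atom of $M_{\#}$.

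It remains to observe that the decomposition $a = \bigoplus_{i=1}^{N} a_{i}$ is equally valid when read inside $M_{\#}$: the join operation on $M_{\#}$ is the one induced from $M$, so every reordering equality
\[
a_{\sigma(1)} \vee \cdots \vee a_{\sigma(N)} = a_{1} \vee \cdots \vee a_{N}
\]
that holds in $M$ persists in $M_{\#}$. Consequently $a$ is expressible as an $\bigoplus$-sum of atoms of $M_{\#}$, proving $M_{\#}$ atomic and completing the contrapositive.

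I do not expect any serious obstacle: the statement is essentially a bookkeeping consequence of the three propositions cited above. The only point that deserves explicit mention (rather than a one-line dismissal) is the compatibility of $\bigoplus$ with the inclusion $M_{\#} \hookrightarrow M$, which is immediate from the definition of induced operations but worth stating for clarity.
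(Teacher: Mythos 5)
Your proposal is correct and follows essentially the same route as the paper's proof: the paper also reduces to showing that an $\bigoplus$-decomposition of $a \in M_{\#}$ into atoms of $M$ lies in $M_{\#}$, citing exactly \pref{PropAAiLeq}, \pref{PropAleqB}, and \pref{PropMMHash} (the paper phrases it as a contradiction rather than a contrapositive, which is an immaterial difference).
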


\begin{proof}
 If possible let $M_{\#}$ be non-atomic and $M$ be atomic. Let $a \in M_{\#} \subseteq M$ then there exist finitely many $a_{i} \in \mathscr{A}(M)$ such that $a = \bigoplus a_{i}$. Using \pref{PropAAiLeq} we have $a_{i} \leq a$. Moreover, using \pref{PropAleqB} we have $a_{i} \in M_{\#}$. Further using \pref{PropMMHash} we have $\mathscr{A}(M) \cap M_{\#} = \mathscr{A}(M_{\#})$ so that $a_{i} \in \mathscr{A}(M_{\#})$ and $a = \bigoplus a_{i}$. Thus $M_{\#}$ is atomic, a contradiction.
\end{proof}

The following result relates to atomless adas.

\begin{theorem} \label{ThmAtomless}
 Let $M$ be an ada. If $M_{\#}$ is atomless then $M$ is atomless.
\end{theorem}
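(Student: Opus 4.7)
The plan is to prove the contrapositive via the two key results established just above this theorem. Specifically, I would combine \pref{PropMMHash}, which identifies $\mathscr{A}(M) \cap M_{\#}$ with $\mathscr{A}(M_{\#})$, with the bijection theorem which gives a bijection between $\mathscr{A}(M) \cap M_{\#}^{c}$ and $\mathscr{A}(M) \cap M_{\#}$ for any ada $M$.

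First I would note the obvious decomposition $\mathscr{A}(M) = \bigl(\mathscr{A}(M) \cap M_{\#}\bigr) \cup \bigl(\mathscr{A}(M) \cap M_{\#}^{c}\bigr)$, so that to show $M$ is atomless it suffices to show each of these two pieces is empty. Assuming $M_{\#}$ is atomless, i.e. $\mathscr{A}(M_{\#}) = \emptyset$, \pref{PropMMHash} immediately yields $\mathscr{A}(M) \cap M_{\#} = \emptyset$. Then the bijection theorem (which requires $M$ to be an ada, as assumed here) forces $\mathscr{A}(M) \cap M_{\#}^{c} = \emptyset$ as well, since its cardinality equals that of $\mathscr{A}(M) \cap M_{\#}$. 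Combining these gives $\mathscr{A}(M) = \emptyset$, so $M$ is atomless.

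The proof is essentially a direct invocation of the two preceding results, so there is no serious obstacle to overcome; the only subtlety is recognizing that the bijection theorem (as opposed to the weaker \tref{ThmMHashMNonAtomic}, which only concludes non-atomicity) is needed, because ``atomless'' is strictly stronger than ``non-atomic'' and hence the non-atomicity theorem is insufficient on its own. The use of the ada structure is essential: it is precisely the operation $(\,)^{\downarrow}$ that produced the bijection $G(\alpha) = \neg((\neg \alpha)^{\downarrow})$ pairing atoms outside $M_{\#}$ with atoms inside $M_{\#}$.
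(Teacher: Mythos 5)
Your proof is correct, and it is organized differently from the paper's. You deduce the theorem as a quick corollary of the bijection theorem preceding \cref{CorAdaAtomEven}: since $\mathscr{A}(M)\cap M_{\#}=\mathscr{A}(M_{\#})=\emptyset$ by \pref{PropMMHash}, and a map into an empty set has empty domain, $\mathscr{A}(M)\cap M_{\#}^{c}=\emptyset$ as well, so $\mathscr{A}(M)=\emptyset$. This is logically sound (the bijection theorem is stated for arbitrary adas, with no finiteness hypothesis, and its proof does not depend on \tref{ThmAtomless}, so there is no circularity), and your remark that \tref{ThmMHashMNonAtomic} would not suffice is apt. The paper instead argues directly: it takes a putative atom $\alpha$, notes $\alpha\in M_{\#}^{c}$ by \pref{PropMMHash}, and splits on $\alpha^{\downarrow}$ --- if $\alpha^{\downarrow}\neq F$ the ada identity $\alpha=\alpha^{\downarrow}\vee\alpha$ gives $F\lneq\alpha^{\downarrow}\lneq\alpha$; if $\alpha^{\downarrow}=F$ it forms $\neg((\neg\alpha)^{\downarrow})=\beta_{_{T,A}}\in M_{\#}$, uses atomlessness of $M_{\#}$ to find $\beta_{_{T,B}}$ with $\emptyset\neq B\subsetneq A$, and pushes it back down via $\beta_{_{T,B}}\wedge U$ to get an element strictly between $F$ and $\alpha$. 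Note that this second case is exactly the well-definedness argument for the map $G(\alpha)=\neg((\neg\alpha)^{\downarrow})$ in the bijection theorem, so the underlying mathematics of the two proofs coincides; what your version buys is brevity and modularity (you reuse a proved result rather than inlining its argument), while the paper's version is self-contained and makes the role of the $(\,)^{\downarrow}$ operation explicit at the point of use.
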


\begin{proof}
 If possible let $M_{\#}$ be atomless but $M$ not be atomless. Therefore let $\alpha \in \mathscr{A}(M)$. It is clear that $\alpha \notin M_{\#}$ since otherwise using \pref{PropMMHash} we have $\alpha \in \mathscr{A}(M) \cap M_{\#} = \mathscr{A}(M_{\#})$ which is a contradiction since $M_{\#}$ is atomless. Thus $\alpha \in M_{\#}^{c}$ and so $\alpha^{\downarrow} \neq \alpha$ (cf. \rref{RemAdaDownarrow}). We have the following cases.

 \emph{Case I}: $\alpha^{\downarrow} \neq F$: The ada identity $\alpha^{\downarrow} \vee \alpha = \alpha$ holds in $\mathbb{3}$ and therefore in all adas. Thus we have $F \lneq \alpha^{\downarrow} \lneq \alpha$ which is a contradiction since $\alpha \in \mathscr{A}(M)$.

 \emph{Case II}: $\alpha^{\downarrow} = F$: Using \pref{PropMMHash} we have $\alpha \in \{ a \in M : a \wedge b = a \text{ for all } b \in M \}$. Consider $M \leq \mathbb{3}^{X}$ for some set $X$. It follows that $\alpha = \alpha_{_{U, A}}$ for some $A \subseteq X$. This is true since if $\alpha(x) = T$ for some $x \in X$ then $\alpha^{\downarrow}(x) = T$ and so $\alpha^{\downarrow} \neq {\bf F}$. Also $A \neq \emptyset$ since $\alpha \neq {\bf F}$. Then
 \begin{equation*}
  \neg \alpha(x) = \neg \alpha_{_{U, A}}(x) = \begin{cases}
                                               U, & \text{ if } x \in A; \\
                                               T, & \text{ otherwise.}
                                              \end{cases}
 \end{equation*}
 Then $(\neg \alpha)^{\downarrow} \in M$ since $M$ is an ada so that
 \begin{equation*}
  (\neg \alpha)^{\downarrow}(x) = \begin{cases}
                                   F, & \text{ if } x \in A; \\
                                   T, & \text{ otherwise.}
                                   \end{cases}
 \end{equation*}
 In fact $(\neg \alpha)^{\downarrow} \in M_{\#}$. Consider $\neg ((\neg \alpha)^{\downarrow}) \in M_{\#}$ where in fact $\neg ((\neg \alpha)^{\downarrow}) = \beta_{_{T, A}} \in M_{\#}$. Since $M_{\#}$ is atomless it follows that there exists $\beta_{_{T, B}} \in M_{\#}$ where $\emptyset \neq B \subsetneq A$ and ${\bf F} \lneq \beta_{_{T, B}} \lneq \beta_{_{T, A}}$. Consider $\beta_{_{U, B}} = \beta_{_{T, B}} \wedge {\bf U} \in M$. Since $\emptyset \neq B \subsetneq A$ we have ${\bf F} \lneq \beta_{_{U, B}} \lneq \alpha_{_{U, A}} = \alpha$ which is a contradiction to the fact that $\alpha \in \mathscr{A}(M)$.
\end{proof}

\begin{remark}
 \tref{ThmAtomless} allows us to  construct an atomless ada from an atomless Boolean algebra. For an atomless Boolean algebra $B$, the ada $B^{\star}$ will also be atomless. For further reading on atomless Boolean algebras refer to \cite{halmos09}.
\end{remark}

\begin{theorem} \label{ThmNotAtomic}
 Let $M$ be a finite $C$-algebra with $T, F, U$ such that $|M| > 3$ and $T \in \mathscr{A}(M)$. Then $M$ is not atomic.
\end{theorem}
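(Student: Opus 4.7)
The plan is to show that the hypothesis $T \in \mathscr{A}(M)$ forces $M_{\#}$ to collapse to $\{F,T\}$, to classify the atoms of $M$ accordingly, and then to exploit that $T$ does not commute under $\vee$ with any other atom. This heavily restricts which elements of $M$ can arise as $\bigoplus$ of atoms, and a short argument then exhibits an element of $M$ outside this restricted class.

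First I would show $M_{\#} = \{F, T\}$: any $b \in M_{\#}$ has values in $\{F, T\}$ when $M$ is viewed inside $\mathbb{3}^{X}$, so pointwise $b \vee T = T$, i.e.\ $b \leq T$; since $T$ is an atom, $b \in \{F, T\}$. By \pref{PropMMHash}, $\mathscr{A}(M) \cap M_{\#} = \mathscr{A}(M_{\#}) = \{T\}$, while every $\alpha \in \mathscr{A}(M) \cap M_{\#}^{c}$ is a left-zero for $\wedge$, which in $\mathbb{3}^{X}$ means $\alpha(x) \in \{F, U\}$ for every $x$, with at least one coordinate equal to $U$ since $\alpha \neq F$. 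A pointwise computation then yields $T \vee \alpha = T$ (the constant) whereas $(\alpha \vee T)(x) = U$ wherever $\alpha(x) = U$; moreover, appending any further left-zero atom $\beta$ on the right of $\alpha \vee T$ preserves these $U$'s because $U$ is left-absorbing for $\vee$. Hence any $\bigoplus_{i=1}^{N} a_{i}$ that contains both $T$ and some non-$F$ left-zero atom fails to exist, since the permutation placing $T$ first returns $T$ while a permutation placing $\alpha$ before $T$ returns something with $U$-values. It follows that every element of $M$ expressible as a $\bigoplus$ of atoms either equals $T$ or takes values only in $\{F, U\}$ when read in $\mathbb{3}^{X}$.

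Finally, I would produce a witness. Since $|M| > 3$, choose $x \in M \setminus \{F, T, U\}$. Because $x \notin M_{\#}$, some coordinate of $x$ equals $U$. If $x$ also has a $T$-coordinate, take $z = x$. Otherwise $x$ has values in $\{F, U\}$ with a $U$ present, and since $x \neq U$ some coordinate must equal $F$; then $\neg x \in M$ attains both $T$ and $U$, and I take $z = \neg x$. In either case $z$ attains both $T$ and $U$, so $z$ cannot equal $T$ nor have values confined to $\{F, U\}$, and by the classification above $z$ is not a $\bigoplus$ of atoms. Hence $M$ is not atomic. The main subtle point is the noncommutativity analysis: one has to verify that no interleaving of $T$ with left-zero atoms restores commutativity of $\vee$, and this ultimately rests on the fact that once a $U$ enters an intermediate join it is never erased by subsequent joins on the right.
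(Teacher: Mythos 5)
Your proof is correct, and its core is the same as the paper's: from $T \in \mathscr{A}(M)$ deduce $M_{\#} = \{F,T\}$, use \pref{PropMMHash} to see that every atom other than $T$ is a left-zero for $\wedge$ (hence $\{F,U\}$-valued in $\mathbb{3}^{X}$), and conclude that no element attaining both a $T$-value and a $U$-value can be a $\bigoplus$ of atoms. Where you differ is in two places, both to your advantage. First, the witness: the paper invokes finiteness through \pref{PropAtomLeq} to find a left-zero atom $\alpha \leq \gamma$ for some $\gamma \notin \{T,F,U\}$ (disposing of the case $\alpha = {\bf U}$ separately) and then shows $\neg\alpha$ has no atomic decomposition; you instead take any $x \in M \setminus \{T,F,U\}$ and observe directly that either $x$ or $\neg x$ attains both $T$ and $U$, which never touches finiteness, so your argument actually proves the statement for arbitrary (not necessarily finite) $C$-algebras with $T,F,U$ satisfying $|M|>3$ and $T \in \mathscr{A}(M)$. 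Second, the exclusion of $T$ from a decomposition: the paper uses the one-line observation that $T$ is a left-zero for $\vee$, so $T \vee a = T$ forces the join to be ${\bf T}$; your noncommutativity analysis (a permutation with a left-zero atom placed before $T$ leaves a $U$-coordinate that later joins cannot erase, so the $\oplus$ fails to exist) is a bit more work than needed but is sound and in fact yields the slightly stronger conclusion that mixed sets $\{T\}\cup\{\text{left-zero atoms}\}$ have no $\oplus$ at all. All the individual verifications (that $b \leq T$ for $b \in M_{\#}$, that left-zeros are $\{F,U\}$-valued, that joins of $\{F,U\}$-valued elements stay $\{F,U\}$-valued, and the case split producing the witness) check out.
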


\begin{proof}
 Since $T \in \mathscr{A}(M)$ it is clear that $M_{\#} = \{ T, F \}$. Since $|M| > 3$ there exists $\gamma \in M \setminus \{ T, F, U \}$ and since $M$ is finite, using \pref{PropAtomLeq} there exists $\alpha \in \mathscr{A}(M)$ such that $\alpha \leq \gamma$. Clearly $\alpha \in \mathscr{A}(M) \cap M_{\#}^{c}$.

 Consider $M \leq \mathbb{3}^{X}$ for some set $X$. Then $\alpha = \alpha_{_{U, A}}$ for some $\emptyset \neq A \subseteq X$. Suppose that $A = X$. Then $\alpha = {\bf U} \in \mathscr{A}(M)$. Hence $M = \{ {\bf T}, {\bf F}, {\bf U} \}$ else if there was some $\beta \in M_{\#}^{c} \setminus \{ {\bf U} \}$ then using \pref{PropAtomRelations}(ii) we have ${\bf F} \lneq \beta \wedge F \lneq {\bf U}$ which is a contradiction to the fact that ${\bf U} \in \mathscr{A}(M)$. Thus $M = \{ {\bf T}, {\bf F}, {\bf U} \}$, a contradiction to our assumption that $M$ is non-trivial. Thus $\alpha = \alpha_{_{U, A}}$ where $\emptyset \neq A \subsetneq X$.

 Suppose that $M$ is atomic. Consider $\neg \alpha \in M_{\#}^{c}$. There exist finitely many $a_{i} \in \mathscr{A}(M)$ such that $\neg \alpha = \bigoplus a_{i}$. Clearly ${\bf T} \notin \{ a_{i} \}$ since ${\bf T} \vee a = {\bf T} \neq \neg \alpha$. Since $M_{\#} = \{ {\bf T}, {\bf F} \}$ we have $\mathscr{A}(M) \setminus \{ {\bf T} \} \subseteq M_{\#}^{c}$. Thus $a_{i} = a_{_{U, A_{i}}}$ for $\emptyset \neq A_{i} \subseteq X$. However $\neg \alpha = \neg \alpha_{_{U, A}}$ where $\emptyset \neq A \subsetneq X$ and so we have
 \begin{equation*}
  \neg \alpha(x) = \begin{cases}
               U, & \text{ if } x \in A; \\
               T, & \text{ otherwise.}
              \end{cases}
 \end{equation*}
 Moreover $\neg \alpha = \bigoplus a_{_{U, A_{i}}}$ gives
 \begin{equation*}
  \neg \alpha(x) = \begin{cases}
               U, & \text{ if } x \in A_{i} \text{ for some i}; \\
               F, & \text{ otherwise.}
              \end{cases}
 \end{equation*}
 This is a contradiction since $A \subsetneq X$ which implies that $\neg \alpha(x_{o}) = T$ for some $x_{o} \in X$. Hence $M$ is not atomic.
\end{proof}

\begin{corollary}
 Let $M$ be a finite $C$-algebra with $T, F, U$ such that $|M| > 3$. Then $\overline{M_{\#}^{c}} = M_{\#}^{c} \cup \{ T, F \}$ is never atomic.
\end{corollary}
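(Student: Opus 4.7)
The plan is to apply \tref{ThmNotAtomic} directly to the set $\overline{M_{\#}^{c}}$ itself, so the work reduces to verifying that $\overline{M_{\#}^{c}}$ meets its three hypotheses. The first step is to confirm that $\overline{M_{\#}^{c}}$ is a subalgebra of $M$, so that it is a finite $C$-algebra with $T, F, U$ in its own right. Viewing $M$ as a subalgebra of $\mathbb{3}^{X}$, an element $\alpha \in M$ lies in $M_{\#}^{c}$ precisely when it takes value $U$ at some $x_{0} \in X$. A short case analysis, splitting each operand between $\{T, F\}$ and $M_{\#}^{c}$, yields closure: for $\alpha \in M_{\#}^{c}$ one has $(\alpha \wedge F)(x_{0}) = U \wedge F = U$ and $(\alpha \vee T)(x_{0}) = U \vee T = U$, so these remain in $M_{\#}^{c}$; for $\alpha, \beta \in M_{\#}^{c}$ the fact that $U$ is a left-zero for both $\wedge$ and $\vee$ ensures $(\alpha \wedge \beta)(x_{0}) = U = (\alpha \vee \beta)(x_{0})$; and $\neg$ preserves $M_{\#}^{c}$ since $\neg U = U$. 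Since $T, F, U \in \overline{M_{\#}^{c}}$, the set $\overline{M_{\#}^{c}}$ is indeed a finite $C$-algebra with $T, F, U$.

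Next I would verify the two remaining hypotheses of \tref{ThmNotAtomic}. For $T \in \mathscr{A}(\overline{M_{\#}^{c}})$: if $b \in \overline{M_{\#}^{c}}$ satisfies $F \leq b \leq T$, then \cref{CorPropAleqB} (applied in $M$) forces $b \in M_{\#}$, so $b \in M_{\#} \cap \overline{M_{\#}^{c}} = \{T, F\}$; no strict intermediate element exists in $\overline{M_{\#}^{c}}$, so $T$ is an atom there. For the size bound $|\overline{M_{\#}^{c}}| > 3$, I argue contrapositively: if $|\overline{M_{\#}^{c}}| = 3$ then $M_{\#}^{c} = \{U\}$. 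But then for any putative $a \in M_{\#} \setminus \{T, F\}$, written in pairs-of-sets form as $\alpha_{_{T, A}}$ with $\emptyset \neq A \subsetneq X$, the element $a \wedge {\bf U}$ equals $\alpha_{_{U, A}}$, which lies in $M_{\#}^{c} \setminus \{{\bf U}\}$ (since $A^{c} \neq \emptyset$), a contradiction. Hence $M_{\#} = \{T, F\}$, forcing $M = \{T, F, U\}$ and contradicting $|M| > 3$; therefore $|\overline{M_{\#}^{c}}| \geq 4$.

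With all hypotheses in place, \tref{ThmNotAtomic} applied to $\overline{M_{\#}^{c}}$ yields that $\overline{M_{\#}^{c}}$ is not atomic. The two most delicate points are the subalgebra verification, which hinges on $U$ being a left-zero for both binary operations, and the size bound, which crucially uses that any nontrivial element of $M_{\#}$ combines with $U$ to produce a new non-$U$ element of $M_{\#}^{c}$; everything else reduces to bookkeeping against earlier results.
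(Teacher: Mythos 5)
Your proposal is correct and takes essentially the same route as the paper: both arguments reduce the claim to \tref{ThmNotAtomic} applied to $\overline{M_{\#}^{c}}$ after observing that $(\overline{M_{\#}^{c}})_{\#} = \{T, F\}$ forces $T \in \mathscr{A}(\overline{M_{\#}^{c}})$. The only difference is that you explicitly verify the hypotheses the paper leaves implicit --- that $\overline{M_{\#}^{c}}$ is a finite $C$-algebra with $T, F, U$ and that $|\overline{M_{\#}^{c}}| > 3$ (via the observation that $a \wedge {\bf U}$ for $a \in M_{\#} \setminus \{T, F\}$ yields an element of $M_{\#}^{c} \setminus \{{\bf U}\}$) --- and both checks are carried out correctly.
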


\begin{proof}
 Since $(\overline{M_{\#}^{c}})_{\#} = \{ T, F \}$ we have $T \in \mathscr{A}(\overline{M_{\#}^{c}})$. The result follows from \tref{ThmNotAtomic}.
\end{proof}

\begin{remark}
 The converse of \tref{ThmNotAtomic} need not be true. That is, if $M$ be a $C$-algebra with $T, F, U$ such that $M$ is not atomic then $T$ need not be in $\mathscr{A}(M)$. Consider $M = \{ (T, T, T), (F, F, F), (U, U, U), (U, F, F), (U, T, T), (F, F, T), (T, T, F), \\ (F, F, U), (T, T, U), (U, U, F), (U, T, F), (U, F, T), (U, F, U), (U, T, U), (U, U, T) \} \leq \mathbb{3}^{3}$. Then $(T, T, T) \notin \mathscr{A}(M)$ since $\mathscr{A}(M) = \{ (U, F, F), (F, F, T), (T, T, F), (F, F, U)\}$. However $M$ is not atomic since $(U, T, F)$ can only be written as join of atoms $(U, F, F)$ and $(T, T, F)$ but the $\oplus$ of these atoms is not defined.
\end{remark}


\subsection{Finite atomic $C$-algebras} \label{SectionFinAtomCAlg}

 We establish a characterisation of all finite atomic $C$-algebras. First we establish some results on the existence of $\oplus$ in $M$ where $M$ is an arbitrary $C$-algebra with $T, F, U$.

\begin{proposition} \label{PropExisVeeBarM}
 Consider $M \leq \mathbb{3}^{X}$ for some set $X$ and let $\alpha_{i} \in M$ for $1 \leq i \leq N$ be represented by the pairs of sets $(A_{i}, B_{i})$ respectively. Then $\displaystyle \bigoplus_{i = 1}^{N} \alpha_{i}$ exists if and only if $A_{i} \cap (A_{j} \cup B_{j})^{c} = \emptyset$ for all $i, j \in I$.
\end{proposition}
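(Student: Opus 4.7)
The plan is to work pointwise. Since $M \leq \mathbb{3}^X$ and $\vee$ is computed coordinatewise, $\bigoplus_{i=1}^N \alpha_i$ exists iff every bijection $\sigma$ produces the same function $x \mapsto \alpha_{\sigma(1)}(x) \vee \cdots \vee \alpha_{\sigma(N)}(x)$. So I reduce to analysing the join of a finite list of elements of $\mathbb{3}$. The key observation is that in $\mathbb{3}$, $F$ is a two-sided identity for $\vee$ while $T$ and $U$ are both left-zeros (but not right-zeros), so $T \vee U = T$ while $U \vee T = U$. Thus for any sequence of values in $\{T,F,U\}$, the join equals the \emph{leftmost non-$F$ entry}, if one exists, and $F$ otherwise. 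Note that in the pairs-of-sets picture, $A_i$ is the set where $\alpha_i$ takes the value $T$ and $(A_j \cup B_j)^c$ is the set where $\alpha_j$ takes the value $U$, so the stated condition says precisely that no coordinate $x$ simultaneously has $\alpha_i(x) = T$ and $\alpha_j(x) = U$ for any $i,j$.

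For the forward direction, suppose $\bigoplus \alpha_i$ exists yet $A_i \cap (A_j \cup B_j)^c \neq \emptyset$ for some $i \neq j$. Pick $x$ in this intersection, so $\alpha_i(x) = T$ and $\alpha_j(x) = U$. Choose two bijections $\sigma, \tau$ that agree on everything except that $\sigma$ places $i$ in position $1$ and $j$ in position $2$, while $\tau$ swaps them. Evaluating at $x$, the $\sigma$-ordered join starts $T \vee U \vee \cdots$, which collapses to $T$ since $T$ is a left-zero; the $\tau$-ordered join starts $U \vee T \vee \cdots$, which collapses to $U$. These two orderings therefore disagree at $x$, contradicting existence of $\bigoplus$.

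For the converse, suppose the condition holds. Fix any $x \in X$. I claim the multiset $\{\alpha_i(x) : 1 \le i \le N\}$ cannot contain both $T$ and $U$: if it did, say $\alpha_i(x) = T$ and $\alpha_j(x) = U$, then $x \in A_i \cap (A_j \cup B_j)^c$, violating the hypothesis. Hence the non-$F$ values among the $\alpha_i(x)$'s are either all $T$ or all $U$ (or there are none). In each case the leftmost non-$F$ value under any bijection $\sigma$ is the same element of $\{T, U, F\}$, so $(\alpha_{\sigma(1)} \vee \cdots \vee \alpha_{\sigma(N)})(x)$ is independent of $\sigma$. Since this holds at every $x$, all $N!$ orderings produce the same join, so $\bigoplus_{i=1}^N \alpha_i$ exists, and indeed equals the function sending each $x$ to the common leftmost non-$F$ value described above.

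The main thing to keep straight is the asymmetry in the left-zero property, which is exactly what the asymmetric-looking condition $A_i \cap (A_j \cup B_j)^c = \emptyset$ (quantified over \emph{ordered} pairs $(i,j)$) is designed to forbid; no deeper combinatorics on permutations is needed because the pointwise collapse immediately kills all but the leftmost non-$F$ entry.
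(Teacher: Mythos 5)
Your proof is correct and follows essentially the same route as the paper's: a pointwise analysis of $\vee$ on $\mathbb{3}$, using that $T$ and $U$ are left-zeros while $F$ is the identity, with the two-ordering swap at a bad coordinate giving $T$ versus $U$ for necessity. Your ``leftmost non-$F$ entry'' formulation and the explicit full bijections $\sigma,\tau$ are just a cleaner packaging of the paper's casewise computation, not a different argument.
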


\begin{proof}
 If $A_{i} \cap (A_{j} \cup B_{j})^{c} = \emptyset$ for all $i, j \leq N$ then
  \begin{align*}
  \alpha_{1}(x) \vee \alpha_{2}(x) \vee \cdots \vee \alpha_{N}(x) & = \begin{cases}
                                                                T, & \text{ if } x \in A_{1}; \\
                                                                U, & \text{ if } x \in (A_{1} \cup B_{1})^{c}; \\
                                                                \alpha_{2}(x) \vee \alpha_{3}(x) \vee \cdots \vee \alpha_{N}(x), & \text{otherwise.} \\
                                                                \end{cases} \\
                                                            & = \begin{cases}
                                                            T, & \text{ if } x \in A_{1}; \\
                                                            U, & \text{ if } x \in (A_{1} \cup B_{1})^{c}; \\
                                                            T, & \text{ if } x \in A_{2}; \\
                                                            U, & \text{ if } x \in (A_{2} \cup B_{2})^{c}; \\
                                                            \alpha_{3}(x) \vee \cdots \vee \alpha_{N}(x), & \text{otherwise.} \\
                                                            \end{cases}
 \end{align*}
 Note that the well-definedness of this expression follows from the fact that $A_{i} \cap (A_{j} \cup B_{j})^{c} = \emptyset$ so that we do not have $x \in A_{1} \cap (A_{2} \cup B_{2})^{c}$ or $x \in A_{2} \cap (A_{1} \cup B_{1})^{c}$. This process yields the following:
  \begin{equation*}
  \alpha_{1}(x) \vee \alpha_{2}(x) \vee \cdots \vee \alpha_{N}(x) = \begin{cases}
                                                                T, & \text{ if } x \in \bigcup A_{i}; \\
                                                                U, & \text{ if } x \in \bigcup (A_{i} \cup B_{i})^{c}; \\
                                                                F, & \text{otherwise}
                                                                \end{cases}
 \end{equation*}
 which is well-defined and establishes that the join is independent of the order of the elements. Consequently $\bigoplus_{1 \leq i \leq N} \alpha_{i}$ exists and can be expressed as follows:
 \begin{equation*}
  \bigoplus_{1 \leq i \leq N} \alpha_{i}(x) = \begin{cases}
                                        T, & \text{ if } x \in A_{i} \text{ for some } 1 \leq i \leq N; \\
                                        U, & \text{ if } x \in (A_{i} \cup B_{i})^{c} \text{ for some } 1 \leq i \leq N; \\
                                        F, & \text{otherwise.}
                                        \end{cases}
 \end{equation*}

 Conversely, suppose if possible that $x \in A_{i} \cap (A_{j} \cup B_{j})^{c}$ for some $x \in X$ and some $i, j \leq N$ where $i \neq j$. Then $(\alpha_{i} \vee \alpha_{j})(x) = T \vee U = T$ while $(\alpha_{j} \vee \alpha_{i})(x) = U \vee T = U$, a contradiction to the fact that $\bigoplus_{1 \leq i \leq N} \alpha_{i}$ is defined. The result follows.
\end{proof}

\begin{proposition} \label{PropIJExisVeeBar}
 Let $\alpha_{i} \in M$ for $i \in I$ where $(\emptyset \neq ) \text{ } I$ is finite, such that $\displaystyle \bigoplus_{i \in I} \alpha_{i}$ exists. Consider $\emptyset \neq J \subseteq I$. Then $\displaystyle \bigoplus_{j \in J} \alpha_{j}$ exists.
\end{proposition}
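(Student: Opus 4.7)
The plan is to reduce this directly to the pair-of-sets characterisation in \pref{PropExisVeeBarM}, which gives a combinatorial condition for the existence of $\bigoplus$ that is manifestly closed under taking subsets of the index set.

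By \tref{SubdirIrredCAlg} I would fix an embedding $M \leq \mathbb{3}^{X}$ for some set $X$, and write each $\alpha_{i}$ ($i \in I$) by its pair of sets $(A_{i}, B_{i})$ as in \rref{RemPairsOfSets}. Applying \pref{PropExisVeeBarM} to the hypothesis that $\bigoplus_{i \in I} \alpha_{i}$ exists yields the combinatorial condition
\begin{equation*}
A_{i} \cap (A_{j} \cup B_{j})^{c} = \emptyset \quad \text{for all } i, j \in I.
\end{equation*}

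Now for any nonempty $J \subseteq I$, this same condition holds in particular for all $i, j \in J$, since it is a universally quantified statement over $I$. Hence the converse direction of \pref{PropExisVeeBarM} applies to the family $\{ \alpha_{j} : j \in J \}$ and delivers the existence of $\bigoplus_{j \in J} \alpha_{j}$. The element obtained is automatically in $M$, as each finite $\vee$-join of elements of $M$ lies in $M$ by closure under the $C$-algebra operations; the independence from the order of the summands is exactly what \pref{PropExisVeeBarM} guarantees.

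I do not anticipate a real obstacle here: the entire content is that the pair-of-sets criterion of \pref{PropExisVeeBarM} is hereditary under restriction of the index set. The only point requiring a moment's care is to note that existence of $\bigoplus$ is an intrinsic property of finite families and does not depend on the ambient $\mathbb{3}^{X}$ in which $M$ is embedded, so the conclusion for $J$ is genuinely about $M$ and not merely about $\mathbb{3}^{X}$.
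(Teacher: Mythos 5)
Your proposal is correct and matches the paper's own proof essentially verbatim: embed $M$ in $\mathbb{3}^{X}$, express each $\alpha_{i}$ as a pair of sets, note that the criterion of \pref{PropExisVeeBarM}, being universally quantified over $I$, restricts to $J$, and apply the converse direction of \pref{PropExisVeeBarM}. The extra remarks on membership in $M$ and order-independence are fine but add nothing beyond what the cited proposition already provides.
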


\begin{proof}
 Consider $M \leq \mathbb{3}^{X}$ for some set $X$. Let $\alpha_{i}$ be identified with the pair of sets $(A_{i}, B_{i})$ for each $i \in I$. Since $\bigoplus_{i \in I} \alpha_{i}$ exists, using \pref{PropExisVeeBarM} we have $A_{i_{1}} \cap (A_{i_{2}} \cup B_{i_{2}})^{c} = \emptyset$ for all $i_{1}, i_{2} \in I$. Thus $A_{j_{1}} \cap (A_{j_{2}} \cup B_{j_{2}})^{c} = \emptyset$ for all $j_{1}, j_{2} \in J$ so that $\bigoplus_{j \in J} \alpha_{j}$ exists.
\end{proof}

We now arrive at the main result in this section.

\begin{theorem} \label{ThmAtomicAda}
 Let $M$ be a finite $C$-algebra with $T, F, U$. $M$ is atomic if and only if $M$ is an ada.
\end{theorem}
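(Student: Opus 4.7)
The plan is to prove the two directions separately. For ($\Leftarrow$), if $M$ is a finite ada, then by \rref{RemFinAda3X} we have $M \cong \mathbb{3}^X$ for some finite $X$, and \tref{ThmFin3XAtomic} yields atomicity.

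For the nontrivial direction ($\Rightarrow$), view $M$ as a subalgebra of $\mathbb{3}^X$ for some finite $X$ and show that $M$ is closed under the natural $(\,\cdot\,)^{\downarrow}$ operation inherited from $\mathbb{3}^X$: given $\alpha \in M$ with pair representation $(A, B)$, exhibit $\alpha^{\downarrow} = (A, A^c)$ inside $M$. Once this closure is established, the ada axioms \eqref{A1}--\eqref{A6} hold in $M$ automatically, since they are equations valid in the ada $\mathbb{3}^X$ and $M$ becomes a subalgebra under the enriched signature.

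To produce $\alpha^{\downarrow}$, take an atomic decomposition $\alpha = \bigoplus_i \alpha_i$ (the case $\alpha = F$ is trivial since $F^{\downarrow} = F$). By \pref{PropMMHash} each atom is either a Boolean atom in $M_{\#}$, with pair $(A'_i, (A'_i)^c)$, or a left-zero atom in $M_{\#}^c$, which by left-zero-ness must have pair of the form $(\emptyset, C_j)$. Applying \pref{PropExisVeeBarM} to the full decomposition shows that the $T$-region of $\alpha$ is precisely $\bigcup_i A'_i$, the union of the Boolean-atom supports, while the left-zero atoms contribute only to the $U$-region. By \pref{PropIJExisVeeBar} the sub-join $\bigoplus_i \beta_i$ over just the Boolean atoms $\beta_i$ exists, and a direct pair-set computation shows $\bigoplus_i \beta_i = (A, A^c) = \alpha^{\downarrow}$. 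Since each $\beta_i \in M$ and $M$ is closed under $\vee$, this join lies in $M$, giving $\alpha^{\downarrow} \in M$ as required.

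The main obstacle will be the bookkeeping that confirms the $T$-region of $\alpha$ equals $\bigcup_i A'_i$ exactly and that the Boolean-atom partial join recovers $(A, A^c)$ cleanly. A minor subtlety is that the atomic decomposition need not be unique (cf.\ \eref{ExampleNotUniqueRep}), but since $(A, A^c)$ depends only on $\alpha$, the construction is well-defined regardless of which decomposition is chosen. The edge case in which no Boolean atom appears forces $A = \emptyset$, giving $\alpha^{\downarrow} = F \in M$, so the entire argument carries through.
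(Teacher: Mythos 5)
Your proposal is correct and takes essentially the same route as the paper: for ($\Rightarrow$) the paper likewise splits an atomic decomposition into $M_{\#}$-atoms with pairs $(A_i, A_i^c)$ and left-zero atoms with pairs $(\emptyset, C_j)$ via \pref{PropMMHash}, and uses \pref{PropExisVeeBarM} together with \pref{PropIJExisVeeBar} to identify the join of the Boolean atoms with $\gamma^{\downarrow} \in M_{\#} \subseteq M$. The only cosmetic difference is that the paper argues by contradiction inside the enveloping ada $\hat{M}$, picking $\gamma \in M$ with $\gamma^{\downarrow} \notin M$, whereas you establish closure of $M$ under the pointwise $(\,\cdot\,)^{\downarrow}$ directly (handling the no-Boolean-atom case separately), which is an equivalent formulation of the same argument.
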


\begin{proof}
 ($\Leftarrow$) In view of \rref{RemFinAda3X} we have $M$ is isomorphic to $\mathbb{3}^{X}$ for some finite set $X$. Using \tref{ThmFin3XAtomic} we establish that $M$ is atomic.

 ($\Rightarrow$) If possible let $M$ be atomic and $M$ not be an ada. Then $M \lneq \hat{M}$ where $\hat{M}$ is the enveloping ada of $M$. Consider $\hat{M} \leq \mathbb{3}^{X}$ as adas for some finite set $X$. Thus $M \leq \hat{M} \leq \mathbb{3}^{X}$ as $C$-algebras.

 Since $M \lneq \hat{M}$ there exists $\gamma \in M$ such that $\gamma^{\downarrow} \notin M$. Therefore there exists $x_{1} \in X$ such that $\gamma(x_{1}) = T$ since otherwise $\gamma^{\downarrow} = {\bf F} \in M$. Further, there exists $x_{2} \in X$ such that $\gamma(x_{2}) = U$ since otherwise $\gamma^{\downarrow} = \gamma \in M$, a contradiction. Hence $\gamma$ can be identified with the pair of sets $(A, B)$ where $A \neq \emptyset \neq (A \cup B)^{c}$.

 Since $M$ is atomic there exist $\alpha_{i}$ where $i \in I$ ($I$: finite) and $\alpha_{i} \in \mathscr{A}(M) \cap M_{\#}$ and $\beta_{j}$ where $j \in J$ ($J$: finite) and $\beta_{j} \in \mathscr{A}(M) \cap M_{\#}^{c}$ such that $$\gamma = (\bigoplus \alpha_{i}) \oplus (\bigoplus \beta_{j}).$$
 It is clear that each $\alpha_{i}$ can be identified with the pair of sets $(A_{i}, A_{i}^{c})$ and that each $\beta_{j}$ can be identified with the pair of sets $(\emptyset, B_{j}^{c})$ where $A_{i}, B_{j} \subseteq X$. In other words $\alpha_{i} = \alpha_{_{T, A_{i}}}$ and $\beta_{j} = \beta_{_{U, B_{j}}}$.

 Since we have ascertained that $A \neq \emptyset \neq (A \cup B)^{c}$ we have $I \neq \emptyset \neq J$. Since $\oplus$ is defined, using \pref{PropExisVeeBarM} we have $A_{i} \cap (\emptyset \cup (B_{j})^{c})^{c} = A_{i} \cap B_{j} = \emptyset$ for all $i \in I$ and $j \in J$. Further, $\bigcup A_{i} = A$.

 Since $I$ is finite we have $\bigoplus \alpha_{i} \in M_{\#} \subseteq M$. Also $\bigoplus \alpha_{i} = \gamma^{\downarrow}$ since $\gamma^{\downarrow}$ is represented by the pair of sets $(A, A^{c})$ and $\bigoplus \alpha_{i}$ is represented by the pair of sets $(\bigcup A_{i}, (\bigcup A_{i})^{c})$. Thus $\gamma^{\downarrow} \in M$ which is a contradiction. The result follows.
\end{proof}


\section{$C$-sets and closure operators} \label{SecCsetsandclosureop}

In \cite{panicker16}, the authors introduced the notion of a $C$-set to study an axiomatization of {\tt if-then-else} that included models of possibly non-halting programs and tests, where the tests were drawn from a $C$-algebra. Given a $C$-algebra, there is an inherent {\tt if-then-else} operation on it, which aids us in studying structural properties of $C$-algebras.

\begin{definition}
 Let $S_{\bot}$ be a pointed set with base point $\bot$ and $M$ be a $C$-algebra with $T, F, U$. The pair $(S_{\bot}, M)$  equipped with an action \[\_\; [\_\; , \_] : M \times S_{\bot} \times S_{\bot} \rightarrow S_{\bot}\] is called a \emph{$C$-set} if it satisfies the following axioms for all $\alpha, \beta \in M$ and $s, t, u, v \in S_{\bot}$:
 \begin{align}
  U[s, t] & = \bot \label{EC1} & \text{($U$-axiom)} \\
  F[s, t] & = t \label{EC6} & \text{($F$-axiom)} \\
  (\neg \alpha)[s, t] & = \alpha[t, s] \label{EC5} & \text{($\neg$-axiom)} \\
  \alpha[\alpha[s, t], u] & = \alpha[s, u]  \label{EC3} & \text{(positive redundancy)} \\
  \alpha[s, \alpha[t, u]] & = \alpha[s, u] \label{EC4} & \text{(negative redundancy)} \\
  (\alpha \wedge \beta)[s, t] & = \alpha[\beta[s, t], t] \label{EC7} & \text{($\wedge$-axiom)} \\
  \alpha[\beta[s, t], \beta[u, v]] & = \beta[\alpha[s, u], \alpha[t, v]] \label{EC2} & \text{(premise interchange)} \\
  \alpha[s, t] = \alpha[t, t] & \Rightarrow (\alpha \wedge \beta)[s, t] = (\alpha \wedge \beta)[t, t] \label{EC8} & \text{($\wedge$-compatibility)}
 \end{align}
\end{definition}

Motivating examples of $C$-sets include $\big( \mathcal{T}_{o}(X_{\bot}), \mathbb{3}^{X}\big)$ with the action
\begin{equation} \label{FunctionalAction}
 \alpha[f, g](x) =
 \begin{cases}
  f(x), & \text{ if } \alpha(x) = T; \\
  g(x), & \text{ if } \alpha(x) = F; \\
  \bot, & \text{ otherwise }
 \end{cases}
\end{equation}
and $(S_{\bot}, \mathbb{3})$ with the action
 \begin{equation*}
  \alpha[a, b] =
  \begin{cases}
   a, & \text{ if } \alpha = T; \\
   b, & \text{ if } \alpha = F; \\
   \bot, & \text{ if } \alpha = U.
  \end{cases}
 \end{equation*}

\begin{example} \label{e-m-m}
Let $M$ be a $C$-algebra with $T, F, U$. By treating $M$ as a pointed set with base point $U$, the pair $(M, M)$ is a $C$-set under the following action for all $\alpha, \beta, \gamma \in M$:
 $$\alpha[\beta, \gamma] = (\alpha \wedge \beta) \vee (\neg \alpha \wedge \gamma).$$
Hereafter, the action of the $C$-set $(M, M)$ will be denoted by double brackets $\_\; \llbracket \_ \;, \_ \rrbracket$.
\end{example}


Henceforth, unless explicitly mentioned otherwise, an arbitrary $C$-algebra with $T, F, U$ is always denoted by $M$ and an arbitrary $C$-set by $(S_{\bot}, M)$. In \cite{panicker16}, the authors show the following result.

\begin{proposition} \label{PropCset}
For all $\alpha \in M$ we have $\alpha[\bot, \bot] = \bot$.
\end{proposition}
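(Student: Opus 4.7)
The plan is to obtain $\alpha[\bot, \bot] = \bot$ as an immediate consequence of the premise interchange axiom \eqref{EC2} combined with the $U$-axiom \eqref{EC1}. The key observation is that $\bot$ itself can be written in the form $U[s,t]$ for any choice of $s,t \in S_{\bot}$, so one can manufacture the pair $\bot, \bot$ inside an action on $\alpha$ by taking $\beta = U$ on the left-hand side of premise interchange.

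Concretely, I would fix arbitrary elements $s, t, u, v \in S_{\bot}$ (for instance $s = t = u = v = \bot$) and specialize \eqref{EC2} by setting $\beta = U$. This yields
\begin{equation*}
\alpha[U[s, t], U[u, v]] = U[\alpha[s, u], \alpha[t, v]].
\end{equation*}
Applying \eqref{EC1} three times---once inside each of the two arguments on the left, and once to the outermost $U[\,\cdot\,,\,\cdot\,]$ on the right---transforms this identity into $\alpha[\bot, \bot] = \bot$, which is the desired conclusion.

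There is essentially no obstacle here; the only subtlety is recognizing that the $U$-axiom plays a dual role, both producing the $\bot$'s on the left and collapsing the right-hand side. No other axiom of the $C$-set needs to be invoked.
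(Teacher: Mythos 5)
Your argument is correct: specializing the premise interchange axiom \eqref{EC2} to $\beta = U$ and collapsing all three occurrences of $U[\,\cdot\,,\,\cdot\,]$ via the $U$-axiom \eqref{EC1} gives $\alpha[\bot,\bot]=\bot$, and the choice $s=t=u=v=\bot$ is available since $\bot \in S_{\bot}$. The paper itself only cites \cite{panicker16} for this proposition rather than reproducing a proof, and your derivation is the standard two-axiom argument for it.
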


Given a $C$-algebra $M$ with $T, F, U$, there is always another $C$-algebra ensconced in it.

\begin{proposition}
 The set $M_{\#}^{c}$ is a $C$-algebra under the induced operations of $M$.
\end{proposition}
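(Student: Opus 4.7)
The plan is to verify that $M_{\#}^{c}$ is closed under the three operations $\neg, \wedge, \vee$, since the identities \eqref{C1}--\eqref{C7} are inherited automatically from $M$. Note that the definition of a $C$-algebra requires no nullary operations, so we do not need $T, F, U$ to lie in $M_{\#}^{c}$.

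Using \tref{SubdirIrredCAlg}, I would realise $M$ as a subalgebra of $\mathbb{3}^{X}$ for some set $X$. Then an element $\alpha \in M$ belongs to $M_{\#}$ exactly when $\alpha(x) \in \{T, F\}$ for every $x \in X$, so $\alpha \in M_{\#}^{c}$ iff there exists some $x_{0} \in X$ with $\alpha(x_{0}) = U$. This characterisation is the workhorse of the proof.

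The key step is to exploit the fact that $U$ is a fixed point of $\neg$ and a left-zero for both $\wedge$ and $\vee$ in $\mathbb{3}$ (\nref{NotaCAlgWithTFU}). Concretely, if $\alpha \in M_{\#}^{c}$ with $\alpha(x_{0}) = U$, then for any $\beta \in M$:
\begin{align*}
(\neg\alpha)(x_{0}) &= \neg U = U,\\
(\alpha \wedge \beta)(x_{0}) &= U \wedge \beta(x_{0}) = U,\\
(\alpha \vee \beta)(x_{0}) &= U \vee \beta(x_{0}) = U.
\end{align*}
Hence $\neg\alpha, \alpha \wedge \beta, \alpha \vee \beta$ all have at least one coordinate equal to $U$, placing them in $M_{\#}^{c}$ (provided, for $\wedge$ and $\vee$, that $\beta \in M_{\#}^{c}$ too so that the result lies in the intended carrier). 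By symmetry of the argument in which operand carries the $U$-coordinate, closure under $\wedge$ and $\vee$ follows whenever both inputs are drawn from $M_{\#}^{c}$.

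Once closure is established, the axioms \eqref{C1}--\eqref{C7} hold in $M_{\#}^{c}$ simply because they are universally quantified identities valid throughout $M$. I do not anticipate a real obstacle: the entire argument is driven by the single observation that $U$ propagates through all three $C$-algebra operations as a left-zero/fixed point, so a $U$-coordinate in an input guarantees a $U$-coordinate in the output.
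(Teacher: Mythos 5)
Your proposal is correct and follows essentially the same route as the paper: realise $M$ inside $\mathbb{3}^{X}$ and observe that a coordinate equal to $U$ propagates through the operations, so $M_{\#}^{c}$ is closed, while the $C$-algebra identities are inherited automatically. The only (cosmetic) difference is that the paper handles $\neg$ equationally via $\neg\alpha \vee \alpha = \alpha \vee \neg\alpha \neq T$ and deduces closure under $\vee$ from De Morgan, whereas you check all three operations directly with the $U$-coordinate argument.
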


\begin{proof}
 Let $\alpha \in M_{\#}^{c}$. Then $\neg \alpha \vee \neg \neg \alpha = \neg \alpha \vee \alpha = \alpha \vee \neg \alpha \neq T$ and so $\neg \alpha \in M_{\#}^{c}$. Let $\alpha, \beta \in M_{\#}^{c}$. Then considering $M \leq \mathbb{3}^{X}$ for some set $X$ it follows that there exists $x \in X$ such that $\alpha(x) = U$. Therefore $(\alpha \wedge \beta)(x) = \alpha(x) \wedge \beta(x) = U \wedge \beta(x) = U$ and so $\alpha \wedge \beta \in M_{\#}^{c}$. Thus $M_{\#}^{c}$ is closed under $\neg$ and $\wedge$, and therefore under $\vee$. The result follows.
\end{proof}

\begin{remark}
 Note that although $M_{\#}^{c}$ is a $C$-algebra under the induced operations of $M$, it is not closed under the constants $T$ and $F$, and is therefore not a subalgebra of $M$ (with $T, F, U$). It is therefore natural to consider $\overline{M_{\#}^{c}} = M_{\#}^{c} \cup \{ T, F \}$, which is clearly closed with respect to $T, F, U$.
\end{remark}

We include concepts related to closure operators in the following.

\begin{definition}
Given a set $X$, a function $C : \powerset(X) \rightarrow \powerset(X)$ is termed a \emph{closure operator} on X if for all $A, B \subseteq X$ it satisfies the following:
\begin{align}
 A & \subseteq C(A) \tag{extensive} \\
 C^{2}(A) & = C(A) \tag{idempotent} \\
 A \subseteq B & \Rightarrow C(A) \subseteq C(B) \tag{isotone}
\end{align}
A subset $A \subseteq X$ is called a \emph{closed subset} if $C(A) = A$. The set of all closed sets of $X$ ordered by set inclusion $\subseteq$ is a partially ordered set and is denoted by $L_{C}$.
\end{definition}

\begin{theorem} \label{ThmClosOpCompLatt}
If $C$ is a closure operator on $X$ then $L_{C}$ forms a complete lattice.
\end{theorem}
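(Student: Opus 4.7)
The plan is to establish that $L_C$ has arbitrary meets together with a top element, which is the standard criterion for a complete lattice. The joins will then be recovered for free via the usual trick of intersecting all upper bounds.

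First I would show that arbitrary intersections of closed sets are closed. Given a family $\{A_i\}_{i \in I}$ with each $A_i \in L_C$, set $B = \bigcap_{i \in I} A_i$. For each $i$, we have $B \subseteq A_i$, so by isotonicity $C(B) \subseteq C(A_i) = A_i$, whence $C(B) \subseteq B$. Combining with extensivity $B \subseteq C(B)$ yields $C(B) = B$, so $B \in L_C$. Thus $\bigwedge_{i \in I} A_i = \bigcap_{i \in I} A_i$ is a meet in $L_C$.

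Next I would identify a top element. Since $X \subseteq C(X) \subseteq X$ trivially, $C(X) = X$ and so $X \in L_C$; clearly $X$ is the largest element under $\subseteq$. This handles the empty meet as well. Having established that every subfamily of $L_C$ admits a meet, I would then define, for an arbitrary family $\{A_i\}_{i \in I} \subseteq L_C$, the join
\[
\bigvee_{i \in I} A_i \;=\; \bigwedge \{ B \in L_C : A_i \subseteq B \text{ for all } i \in I \},
\]
which is well-defined because the collection on the right is non-empty (it contains $X$) and because arbitrary meets exist by the previous step. A short verification shows this is indeed the least upper bound in $L_C$: it contains each $A_i$ by construction, and any closed set containing every $A_i$ lies in the meeting family and hence dominates the meet.

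There is no real obstacle here; the only point that needs a moment's care is ensuring the empty join and empty meet are handled, which is exactly why I make the top element $X$ explicit and why the join is defined via an intersection over a non-empty index set. Equivalently, one may note that $\bigvee_{i} A_i = C\bigl(\bigcup_i A_i\bigr)$, since this is a closed set containing each $A_i$ and is contained in every closed upper bound by extensivity and isotonicity.
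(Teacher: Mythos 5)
Your proof is correct and complete: closure of $L_C$ under arbitrary intersections follows exactly as you argue from extensivity and isotonicity, $X$ itself is closed and serves as the top element, and joins are then obtained either as meets of upper bounds or directly as $C\bigl(\bigcup_i A_i\bigr)$. The paper states this theorem without proof, as it is a classical fact about closure operators, and your argument is precisely the standard one, with the empty meet and join handled correctly.
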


\begin{definition}
An \emph{algebraic closure operator} on $X$ is a closure operator $C$ such that for every $A \subseteq X$ we have $C(A) = \bigcup \{ C(B) : B \subseteq A \text{ and } B \text{ is finite} \}$.
\end{definition}

\begin{definition}
An element $a$ of lattice $L$ is \emph{compact} if whenever $a \leq \vee A$ for some subset $A$ of $L$ for which $\bigvee A$ exists, then there exists a finite subset $B \subseteq A$ such that $a \leq \bigvee B$. A lattice is \emph{compactly generated} if every element is the $\sup$ of compact elements. An \emph{algebraic lattice} is one that is both complete and compactly generated.
\end{definition}

\begin{theorem}
If $C$ is an algebraic closure operator on $X$ then $L_{C}$ is an algebraic lattice, and the compact elements of $L_{C}$ are precisely the closed sets $C(A)$ where $A$ is a finite subset of $X$.
\end{theorem}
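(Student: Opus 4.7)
The plan is to leverage the previous theorem (which gives that $L_C$ is a complete lattice under the closure operator $C$, with joins computed as $\bigvee_{i} F_i = C(\bigcup_i F_i)$) and then establish (a) that every closed set is a join of sets of the form $C(A)$ with $A$ finite, and (b) that these latter sets are exactly the compact elements.

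First, I would show that for any finite $A \subseteq X$, the closed set $C(A)$ is compact in $L_C$. Suppose $C(A) \subseteq \bigvee_{i \in I} F_i = C\bigl(\bigcup_{i \in I} F_i\bigr)$ for some family of closed sets $\{F_i\}_{i \in I}$. By the algebraicity of $C$, every element of $C\bigl(\bigcup_i F_i\bigr)$ lies in $C(B)$ for some finite $B \subseteq \bigcup_i F_i$. Applying this to each of the finitely many elements $a \in A$ yields a finite list of finite sets $B_a$, each meeting only finitely many $F_i$; collecting their indices gives a finite $J \subseteq I$ with $A \subseteq C\bigl(\bigcup_{j \in J} F_j\bigr)$. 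Taking closure and using idempotence/isotony gives $C(A) \subseteq \bigvee_{j \in J} F_j$, so $C(A)$ is compact.

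Next, I would show every closed set $F \in L_C$ is a join of compact elements. By the algebraicity of $C$, $F = C(F) = \bigcup\{C(A) : A \subseteq F,\ A \text{ finite}\}$, and this union is itself closed (it equals $F$), so it coincides with the join $\bigvee \{C(A) : A \subseteq F,\ A \text{ finite}\}$ in $L_C$. Each $C(A)$ in this join is compact by the previous step, so $L_C$ is compactly generated; combined with completeness (from \tref{ThmClosOpCompLatt}), this makes $L_C$ an algebraic lattice. Finally, to identify the compact elements, let $F$ be compact. Then $F = \bigvee\{C(A) : A \subseteq F,\ A \text{ finite}\}$ and compactness yields a finite subcollection $A_1, \ldots, A_n$ with $F = C(A_1) \vee \cdots \vee C(A_n) = C(A_1 \cup \cdots \cup A_n)$, exhibiting $F$ as $C(A)$ for the finite set $A = A_1 \cup \cdots \cup A_n$.

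The only place where care is needed is the first step, where one must be disciplined in distinguishing the set-theoretic union $\bigcup F_i$ from its closure (the actual join in $L_C$) and in using algebraicity to pass from the containment $A \subseteq C(\bigcup_i F_i)$ down to a containment involving only finitely many $F_i$. Everything else is a bookkeeping consequence of extensivity, isotony, and idempotence of $C$, so I do not anticipate a substantive obstacle beyond that.
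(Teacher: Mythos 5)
Your proof is correct: the paper states this as a standard background result (quoted without proof), and your argument is exactly the classical one — finite-subset compactness of $C(A)$ via algebraicity, compact generation from $F = \bigcup\{C(A) : A \subseteq F,\ A \text{ finite}\}$, and the converse identification by applying compactness to that join. One small wording fix: in the first step the finite set $B_a$ need not \emph{meet} only finitely many $F_i$; rather, since $B_a$ is finite you can \emph{choose} one index per element to cover $B_a$ by finitely many $F_i$, which is what your argument actually uses.
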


\begin{definition}[\cite{maclane71}]
Let $A$ and $B$ be posets and $F : A \rightarrow B$ and $G : B \rightarrow A$ be two antitone functions. The pair $(F, G)$ is said to be an \emph{antitone Galois connection} if for all $a \in A, b \in B$,
$$b \leq F(a) \Leftrightarrow a \leq G(b).$$
\end{definition}

\begin{theorem} \label{ThmAntiGal}
Given an antitone Galois connection $(F, G)$ of posets $A$ and $B$, the composite functions $FG : B \rightarrow B$ and $GF : A \rightarrow A$ form closure operators and are called the associated closure operators. Further, $FGF = F$ and $GFG = G$.
\end{theorem}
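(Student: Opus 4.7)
The plan is to prove the three assertions in the following order: (i) extensivity and isotonicity of both composites $GF$ and $FG$; (ii) the identities $FGF = F$ and $GFG = G$; and (iii) idempotency of $GF$ and $FG$, which will follow at once from (ii). The reason for this ordering is that idempotency is the only nontrivial closure-operator axiom, and it drops out cleanly once $FGF = F$ is in hand.

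For step (i), extensivity of $GF$ is extracted from the Galois biconditional
\begin{equation*}
b \leq F(a) \; \Longleftrightarrow \; a \leq G(b)
\end{equation*}
by substituting $b = F(a)$: the left-hand side becomes the tautology $F(a) \leq F(a)$, so the right-hand side gives $a \leq GF(a)$. Exchanging the roles of $A$ and $B$ (take $a = G(b)$) yields $b \leq FG(b)$, i.e.\ extensivity of $FG$. Isotonicity of $GF$ is immediate: if $a_1 \leq a_2$ in $A$, then $F(a_2) \leq F(a_1)$ by antitonicity of $F$, and then $GF(a_1) \leq GF(a_2)$ by antitonicity of $G$; the argument for $FG$ is symmetric.

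For step (ii), I would prove $FGF = F$ by a two-sided inequality. Applying extensivity of $GF$ to an arbitrary $a \in A$ yields $a \leq GF(a)$; since $F$ is antitone, this gives $FGF(a) \leq F(a)$. For the reverse direction, apply extensivity of $FG$ (from step (i)) to the element $F(a) \in B$: this yields $F(a) \leq FG(F(a)) = FGF(a)$. Antisymmetry of the order in $B$ then forces $FGF(a) = F(a)$. A mirror-image argument (using extensivity of $FG$ applied to $b$, then antitonicity of $G$, then extensivity of $GF$ applied to $G(b)$) gives $GFG = G$.

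For step (iii), composing $FGF = F$ with $G$ on the left yields $GFGF = GF$, which is precisely idempotency of $GF$; composing $GFG = G$ with $F$ on the left yields idempotency of $FG$. The main obstacle is keeping the inequality directions straight under the two antitone arrows, but this is a routine diagram-chase once extensivity is established on both sides. No deeper ingredient (such as the $C$-algebra machinery developed earlier) is needed for this theorem, as it is a purely order-theoretic fact.
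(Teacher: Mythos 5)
Your proof is correct. The paper states this theorem as standard background (it is recalled without proof, in the section collecting known facts about closure operators and Galois connections), and your argument is exactly the standard one: extensivity of $GF$ and $FG$ from the defining biconditional by the substitutions $b = F(a)$ and $a = G(b)$, isotonicity from composing the two antitone maps, the two-sided inequality giving $FGF = F$ and $GFG = G$, and idempotency by composing these identities with $G$ or $F$ on the left. All inequality directions are handled correctly, so there is nothing to add.
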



\section{An application of if-then-else}

In \ssref{SectionAnnihilators} we introduce a notion of annihilators in $C$-algebras through the {\tt if-then-else} action. The notion of Galois connection yields a closure operator in terms of annihilator, which in turn, yields closed sets. In \ssref{SectionClosedSet3X} we give a characterisation for the closed sets in the $C$-algebra $\mathbb{3}^{X}$ (cf. \tref{Thm-Closed-3-X}) and show that this collection forms a complete Boolean algebra (cf. \tref{BooleanAlg}). We also obtain a classification of the elements of $\mathbb{3}^{X}$ where the elements of the Boolean algebra $\mathbb{2}^{X}$ form a distinct class (cf. \tref{ThmPartition}).

In this section, unless stated otherwise, $M$ is a $C$-algebra with $T, F, U$.


\subsection{Annihilators} \label{SectionAnnihilators}

In this section we show that the presence of the {\tt if-then-else} action on the $C$-algebra $M$ delineates a mechanism to define a notion of annihilators akin to the concept of annihilators in modules.

Henceforth we consider the $C$-set $(M, M)$ where $M$ is a $C$-algebra with $T, F, U$ with the action $\alpha \llbracket \beta, \gamma \rrbracket = (\alpha \wedge \beta) \vee (\neg \alpha \wedge \gamma)$. Since $\alpha \llbracket \; \_ \;, \_ \; \rrbracket$ can be treated as a binary operation for each $\alpha \in M$ we define the notion of the annihilator of an element $a \in M$ to be all the binary operations $\alpha \llbracket \; \_ \;, \_ \; \rrbracket$ which map the pair $(a, a)$ to $U$. We state the definition explicitly in \dref{DefAnnElm}. Hereafter we use $\alpha, \beta, \gamma, \delta$ to denote elements of $M$ treated as binary operations while $a, b, c$ is used otherwise. The elements of the $C$-algebra $\mathbb{3}^{X}$ will also be denoted by $\alpha, \beta, \gamma, \delta$. Recall that the constants $T, F, U$ of $M \leq \mathbb{3}^{X}$ are denoted by ${\bf T}, {\bf F}, {\bf U}$ respectively (cf. \nref{Nota3XConstants}).

\begin{definition} \label{DefAnnElm}
 For $a \in M$, $$Ann(a) = \{ \alpha \in M : \alpha \llbracket a, a \rrbracket = U \}.$$
\end{definition}

We overload the notation of $Ann$ in a natural manner to subsets of $M$.

\begin{definition}
 The operator $Ann : \powerset(M) \rightarrow \powerset(M)$ is given by $$Ann(S) = \displaystyle \bigcap_{a \in S} Ann(a).$$ In other words $Ann(S) = \{ \alpha \in M : \text{ for all } a \in S, \alpha \llbracket a, a \rrbracket = U \}$.
\end{definition}

It is desirable that the operator $Ann$ has some fundamental properties. For instance every element $\alpha \in M$ should annihilate $U$. Conversely every element must be annihilated by $U$. In the following we ascertain these and other properties of the operator $Ann$ which may be deemed natural.

\begin{proposition} \label{PropAnnList}
 The following hold in any $C$-algebra with $T, F, U$:
 \begin{enumerate}[\rm(i)]
  \item $Ann(U) = M$.
  \item For any $a \in M$, $U \in Ann(a)$.
  \item For any $a \in M_{\#}$, $Ann(a) = \{ U \}$.
  \item $Ann(M) = \{ U \}$.
  \item $b \in Ann(a) \Leftrightarrow a \in Ann(b)$.
  \item $B \subseteq Ann(A) \Leftrightarrow A \subseteq Ann(B)$.
  \item $A \subseteq B \Rightarrow Ann(B) \subseteq Ann(A)$.
 \end{enumerate}
\end{proposition}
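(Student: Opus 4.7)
The plan is to embed $M \leq \mathbb{3}^X$ via \tref{SubdirIrredCAlg} and compute $\alpha \llbracket a, a \rrbracket = (\alpha \wedge a) \vee (\neg \alpha \wedge a)$ pointwise. A direct case check in $\mathbb{3}$ (only three values for $\alpha(x)$) yields
\[
 \alpha \llbracket a, a \rrbracket(x) = \begin{cases} a(x), & \text{if } \alpha(x) \in \{T, F\}; \\ U, & \text{if } \alpha(x) = U. \end{cases}
\]
Consequently $\alpha \in Ann(a)$ iff at every coordinate $x$ one of $\alpha(x), a(x)$ equals $U$. This condition is manifestly symmetric in $\alpha$ and $a$, and is the single observation that drives the rest of the proof.

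Given this criterion, items (i)--(v) follow almost immediately. For (i), $U$ has value $U$ at every coordinate, so the criterion is satisfied by any $\alpha \in M$. For (ii), taking $\alpha = U$ makes the criterion trivially hold. For (iii), $a \in M_{\#}$ means $a \vee \neg a = T$, which pointwise forces $a(x) \in \{T, F\}$ at every $x$; the criterion then forces $\alpha(x) = U$ at every $x$, i.e.\ $\alpha = U$, and combined with (ii) this gives $Ann(a) = \{U\}$. Item (iv) is then immediate since $T \in M_{\#}$, so by (iii) one has $Ann(M) \subseteq Ann(T) = \{U\}$, with the reverse inclusion $\{U\} \subseteq Ann(M)$ given by (ii). Item (v) is precisely the symmetry observed above.

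Items (vi) and (vii) are purely set-theoretic and require no further input beyond (v). For (vi), unfold $B \subseteq Ann(A) = \bigcap_{a \in A} Ann(a)$ to the statement ``$b \in Ann(a)$ for every $a \in A$ and $b \in B$''; by (v) this is equivalent to ``$a \in Ann(b)$ for every such pair'', which is exactly $A \subseteq Ann(B)$. For (vii), $A \subseteq B$ gives $Ann(B) = \bigcap_{b \in B} Ann(b) \subseteq \bigcap_{a \in A} Ann(a) = Ann(A)$ by monotonicity of intersection.

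The only genuine work lies in verifying the pointwise formula; every subsequent claim is bookkeeping. I do not anticipate a real obstacle—the key point is that $\alpha \llbracket a, a \rrbracket$ reduces pointwise either to $a(x)$ or to $U$, depending solely on whether $\alpha(x) = U$, which makes the annihilation condition intrinsically symmetric and the rest of the list mechanical.
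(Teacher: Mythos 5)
Your proposal is correct and follows essentially the same route as the paper: embed $M \leq \mathbb{3}^{X}$ and evaluate $\alpha \llbracket a, a \rrbracket$ pointwise, which is exactly how the paper handles (iii) and (v), with (vi) and (vii) as the same set-theoretic bookkeeping. The only cosmetic difference is that you derive everything from the single symmetric criterion ``at each coordinate $\alpha(x)=U$ or $a(x)=U$,'' whereas the paper dispatches (i) and (ii) via the $C$-set facts $\alpha[\bot,\bot]=\bot$ and the $U$-axiom instead of the pointwise formula; both are immediate.
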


\begin{proof}$\;$
 \begin{enumerate}[(i)]
  \item Using \pref{PropCset} for the $C$-set $(M, M)$ we have $\alpha \llbracket U, U \rrbracket = U$ for each $\alpha \in M$ so that $Ann(U) = M$.
  \item Using \eqref{EC1} on the $C$-set $(M, M)$ we have $U \llbracket a, a \rrbracket = U$ so that $U \in Ann(a)$ for all $a \in M$.
  \item Using \pref{PropAnnList}(ii) it is clear that $\{ U \} \subseteq Ann(a)$. For the reverse inclusion since $M \leq \mathbb{3}^{X}$ for some set $X$, for $a \in M_{\#}$ we have $a(x) \in \{ T, F \}$ for all $x \in X$. Suppose that $\alpha(x_{o}) \in \{ T, F \}$ for some $x_{o} \in X$ so that $(\alpha(x_{o}) \wedge a(x_{o})) \vee (\neg \alpha(x_{o}) \wedge a(x_{o})) \in \{ T, F \}$. However $Ann(a) = \{ \alpha \in M : (\alpha(x) \wedge a(x)) \vee (\neg \alpha(x) \wedge a(x)) = {\bf U} \text{ for all } x \in X \}$, a contradiction. Consequently $\alpha(x) = U$ for all $x \in X$ hence $\alpha = {\bf U}$.
  \item Since $Ann(M) = \displaystyle \bigcap_{a \in M} Ann(a)$, using \pref{PropAnnList}(iii) we have $Ann(M) = \{ U \}$.
  \item Consider $M \leq \mathbb{3}^{X}$ for some set $X$ and $b \in Ann(a)$ so that $(b(x) \wedge a(x)) \vee (\neg b(x) \wedge a(x)) = U$ for all $x \in X$. For $x \in X$ we have the following cases for $(a(x) \wedge b(x)) \vee (\neg a(x) \wedge b(x))$:
      \begin{description}
       \item [$b(x) = T$] Since $(b(x) \wedge a(x)) \vee (\neg b(x) \wedge a(x)) = U$ we have $(T \wedge a(x)) \vee (F \wedge a(x)) = a(x) \vee F = a(x) = U$. Thus $(a(x) \wedge b(x)) \vee (\neg a(x) \wedge b(x)) = (U \wedge b(x)) \vee (U \wedge b(x)) = U$.
       \item [$b(x) = F$] Along similar lines since $(b(x) \wedge a(x)) \vee (\neg b(x) \wedge a(x)) = U$ we have $(F \wedge a(x)) \vee (T \wedge a(x)) = F \vee a(x) = a(x) = U$. Hence $(a(x) \wedge b(x)) \vee (\neg a(x) \wedge b(x)) = (U \wedge b(x)) \vee (U \wedge b(x)) = U$.
       \item [$b(x) = U$] Then $(a(x) \wedge b(x)) \vee (\neg a(x) \wedge b(x)) = (a(x) \wedge U) \vee (\neg a(x) \wedge U) = U$ for $a(x) \in \{ T, F, U \}$.
      \end{description}
     Hence $a \llbracket b, b \rrbracket = U$ so that $a \in Ann(b)$. The converse follows along similar lines.
  \item This follows as a direct consequence of \pref{PropAnnList}(v).
  \item Let $A \subseteq B$, $\beta \in Ann(B)$ and $a \in A$. Since $\beta \in Ann(b)$ for each $b \in B$ and $a \in A \subseteq B$ we have $\beta \in Ann(a)$. Thus $Ann(B) \subseteq Ann(A)$.
 \end{enumerate}
\end{proof}

We have the following result which follows from \pref{PropAnnList}(vi) and \pref{PropAnnList}(vii).

\begin{proposition} \label{PropAntitoneGal}
The pair $(Ann, Ann)$ is an antitone Galois connection.
\end{proposition}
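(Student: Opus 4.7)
The plan is to verify directly that the pair $(Ann, Ann)$ on the poset $(\powerset(M), \subseteq)$ satisfies both requirements in the definition of an antitone Galois connection, and the good news is that both requirements have already been packaged as parts of the preceding Proposition \ref{PropAnnList}.

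First I would check that $Ann$ is antitone as a map $\powerset(M) \to \powerset(M)$. This is exactly the content of \pref{PropAnnList}(vii), which states that $A \subseteq B$ implies $Ann(B) \subseteq Ann(A)$, so no further work is required for this half. Second, I would verify the defining biconditional of an antitone Galois connection: for all $A, B \subseteq M$,
\[
B \subseteq Ann(A) \iff A \subseteq Ann(B).
\]
This is precisely the statement of \pref{PropAnnList}(vi), whose symmetry is itself inherited from the symmetric statement $b \in Ann(a) \iff a \in Ann(b)$ in \pref{PropAnnList}(v) (which in turn came from a case analysis on the value of $b(x)$ in a representation $M \leq \mathbb{3}^{X}$).

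The final step is simply to match these two observations to \dref{ThmAntiGal}: taking $F = G = Ann$ and the order on both copies of $\powerset(M)$ to be $\subseteq$, the antitone hypothesis on $F$ and $G$ is covered by (vii), and the Galois biconditional is covered by (vi). There is no real obstacle here — the conceptual work was done when proving \pref{PropAnnList}, and this proposition is essentially a repackaging of parts (vi) and (vii) in the language of Galois connections, so the proof should occupy no more than two or three lines citing those items.
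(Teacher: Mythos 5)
Your proposal is correct and takes essentially the same route as the paper, which likewise obtains the result directly from \pref{PropAnnList}(vi) (the biconditional $B \subseteq Ann(A) \Leftrightarrow A \subseteq Ann(B)$) and \pref{PropAnnList}(vii) (antitonicity). The only slip is a citation: the defining property should be matched against the definition of an antitone Galois connection rather than against \tref{ThmAntiGal}, which concerns the induced closure operators.
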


The following are consequences of \tref{ThmAntiGal} and \pref{PropAntitoneGal}.

\begin{corollary} \label{CorClosOp}
The function $Ann^{2} : \powerset(M) \rightarrow \powerset(M)$ is a closure operator.
\end{corollary}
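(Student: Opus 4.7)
The plan is to obtain Corollary \ref{CorClosOp} as an immediate instantiation of Theorem \ref{ThmAntiGal}. The poset $(\powerset(M), \subseteq)$ plays the role of both $A$ and $B$, and by Proposition \ref{PropAntitoneGal} the pair $(Ann, Ann)$ is an antitone Galois connection from this poset to itself. Theorem \ref{ThmAntiGal} then guarantees that the composites $FG$ and $GF$ are closure operators; since here $F = G = Ann$, these composites both coincide with $Ann \circ Ann = Ann^{2}$, which is exactly the conclusion sought.

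If one were to unpack this rather than cite Theorem \ref{ThmAntiGal} as a black box, the three defining properties of a closure operator would follow as follows. Extensivity, $S \subseteq Ann^{2}(S)$, is immediate from Proposition \ref{PropAnnList}(vi), since $Ann(S) \subseteq Ann(S)$ transports across the Galois connection to $S \subseteq Ann(Ann(S))$. Isotonicity follows from two applications of Proposition \ref{PropAnnList}(vii): if $S \subseteq T$ then $Ann(T) \subseteq Ann(S)$, and applying $Ann$ once more reverses the inclusion a second time to yield $Ann^{2}(S) \subseteq Ann^{2}(T)$. Idempotence is the subtlest of the three, but it is precisely the identity $Ann \circ Ann \circ Ann = Ann$ supplied by Theorem \ref{ThmAntiGal} (the $FGF = F$ clause), from which $Ann^{4} = Ann^{2}$ is immediate.

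There is essentially no obstacle here; the only matter requiring any care is the trivial observation that the two-variable Galois machinery specialises correctly when source and target posets coincide and $F = G$. I would accordingly write the proof as a single sentence invoking Proposition \ref{PropAntitoneGal} and Theorem \ref{ThmAntiGal}, with no further computation.
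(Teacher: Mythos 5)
Your proposal is correct and is exactly the paper's route: the paper states this corollary as an immediate consequence of Theorem \ref{ThmAntiGal} applied to the antitone Galois connection $(Ann, Ann)$ of Proposition \ref{PropAntitoneGal}, with $F = G = Ann$ so that the associated closure operator is $Ann^{2}$. Your optional unpacking (extensivity from \pref{PropAnnList}(vi), isotonicity from two applications of \pref{PropAnnList}(vii), idempotence from $Ann^{3} = Ann$) is also sound, but the one-sentence citation suffices, just as in the paper.
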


\begin{corollary} \label{PropCubeOp}
 Consider $Ann^{3} : \powerset(M) \rightarrow \powerset(M)$. Then $Ann^{3} = Ann$.
\end{corollary}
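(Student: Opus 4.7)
The plan is to derive this immediately from the machinery already in place. Proposition \ref{PropAntitoneGal} establishes that the pair $(Ann, Ann)$ is an antitone Galois connection on $\powerset(M)$. Theorem \ref{ThmAntiGal} asserts that for any antitone Galois connection $(F, G)$ one has the identities $FGF = F$ and $GFG = G$. Specialising with $F = G = Ann$ yields $Ann^3 = Ann$ in one stroke, and this is the route I would take.

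If a self-contained argument is preferred (avoiding citing the full Galois-connection identity), I would instead proceed by double inclusion, using only the antitone property from Proposition \ref{PropAnnList}(vii) and the extensivity half of Corollary \ref{CorClosOp}. Fix $A \subseteq M$. Extensivity of $Ann^2$ applied to $A$ gives $A \subseteq Ann^2(A)$; applying the antitone operator $Ann$ reverses this to $Ann^3(A) \subseteq Ann(A)$. For the reverse inclusion, extensivity of $Ann^2$ applied to the subset $Ann(A)$ gives $Ann(A) \subseteq Ann^2(Ann(A)) = Ann^3(A)$. Combining yields $Ann^3(A) = Ann(A)$ for every $A \subseteq M$, hence $Ann^3 = Ann$.

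There is no real obstacle: the statement is purely formal once the Galois-connection framework of Section~\ref{SecCsetsandclosureop} is in hand. All the content sits in Proposition \ref{PropAntitoneGal}, whose proof is the genuine computation; after that, $Ann^3 = Ann$ is automatic.
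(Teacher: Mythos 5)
Your primary route is exactly the paper's: the corollary is stated there as an immediate consequence of Theorem \ref{ThmAntiGal} applied to the antitone Galois connection $(Ann, Ann)$ of Proposition \ref{PropAntitoneGal}, with no further argument given. Your optional double-inclusion argument is also correct, but the main derivation matches the paper, so nothing more is needed.
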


Using \tref{ThmClosOpCompLatt} and \cref{CorClosOp} we have the following.

\begin{corollary} \label{CorClosedSets}
The collection of closed sets $\mathfrak{I} = \{ I \subseteq M : Ann^{2} (I) = I \}$ forms a complete lattice.
\end{corollary}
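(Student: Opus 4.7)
The statement is a direct corollary of two results established earlier in the excerpt, so my plan is essentially to cite them and verify that the quoted hypotheses apply.

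First I would invoke \cref{CorClosOp}, which asserts that $Ann^{2} : \powerset(M) \rightarrow \powerset(M)$ is a closure operator on $M$. Consequently, $\mathfrak{I}$ is precisely the poset $L_{Ann^{2}}$ of closed sets of this closure operator, ordered by inclusion. Then I would apply \tref{ThmClosOpCompLatt}, which says that for any closure operator $C$ on a set $X$, the collection $L_{C}$ of closed sets forms a complete lattice. Taking $X = M$ and $C = Ann^{2}$ yields the conclusion immediately.

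If one wanted to make the lattice structure explicit (though the statement does not require it), I would note that for any family $\{ I_{k} \}_{k \in K} \subseteq \mathfrak{I}$, the meet is given by $\bigwedge_{k} I_{k} = \bigcap_{k} I_{k}$, since an intersection of closed sets is closed under the closure operator $Ann^{2}$ (by the isotone property, $Ann^{2}(\bigcap I_{k}) \subseteq Ann^{2}(I_{k}) = I_{k}$ for each $k$, so $Ann^{2}(\bigcap I_{k}) \subseteq \bigcap I_{k}$; the reverse inclusion is extensivity). The join is then $\bigvee_{k} I_{k} = Ann^{2}(\bigcup_{k} I_{k})$, which lies in $\mathfrak{I}$ by idempotence.

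There is no real obstacle here, since the hard work has been done in establishing that $Ann^{2}$ is a closure operator (\cref{CorClosOp}, which in turn rested on the antitone Galois connection \pref{PropAntitoneGal}). The only thing to be careful about is recognising that the definition of $\mathfrak{I}$ as $\{ I \subseteq M : Ann^{2}(I) = I \}$ is exactly the definition of the poset $L_{Ann^{2}}$ of closed subsets, so that \tref{ThmClosOpCompLatt} applies verbatim.
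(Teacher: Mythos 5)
Your proposal is correct and follows exactly the paper's route: the corollary is stated there as an immediate consequence of \tref{ThmClosOpCompLatt} applied to the closure operator $Ann^{2}$ from \cref{CorClosOp}. The explicit description of meets and joins you add is sound but not needed for the statement.
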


The closed sets in $\mathfrak{I}$ have the following property.

\begin{proposition}
 Let $I \in \mathfrak{I}$ such that $I \neq M$. Then $I \cap M_{\#} = \emptyset$.
\end{proposition}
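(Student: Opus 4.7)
The plan is a short proof by contrapositive, leveraging Proposition 5.1 extensively. Suppose for contradiction that there exists $a \in I \cap M_{\#}$. The key ingredient is Proposition 5.1(iii), which says $Ann(a) = \{U\}$ whenever $a \in M_{\#}$.

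First I would use the antitone property of $Ann$ (Proposition 5.1(vii)): since $\{a\} \subseteq I$, we get $Ann(I) \subseteq Ann(\{a\}) = Ann(a) = \{U\}$. Combining this with Proposition 5.1(ii), which guarantees $U \in Ann(b)$ for every $b \in M$ and hence $U \in Ann(I)$, I conclude $Ann(I) = \{U\}$.

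Next I apply $Ann$ once more and use Proposition 5.1(i): $Ann^{2}(I) = Ann(\{U\}) = Ann(U) = M$. But $I$ is closed, so by hypothesis $I = Ann^{2}(I) = M$, contradicting $I \neq M$.

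The argument is essentially one line given the machinery of Proposition 5.1, and I do not anticipate any obstacle. The only subtlety is remembering that $Ann$ on a singleton set $\{a\}$ reduces to $Ann(a)$ (which follows directly from the definition $Ann(S) = \bigcap_{a \in S} Ann(a)$), but this is immediate.
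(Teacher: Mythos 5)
Your argument is correct and is essentially the paper's own proof: both derive $Ann(I)=\{U\}$ from $a\in I\cap M_{\#}$ via Proposition \ref{PropAnnList}(iii) (the paper uses the intersection definition of $Ann(I)$ where you invoke antitonicity, which is the same observation), then apply Proposition \ref{PropAnnList}(i) to get $Ann^{2}(I)=M$ and contradict $I\neq M$ using closedness. No gaps; nothing further needed.
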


\begin{proof}
 Suppose that there exists $a \in I \cap M_{\#}$ where $I \in \mathfrak{I}$ such that $I \neq M$. For any $\alpha \in Ann(a)$ we have $\alpha = U$ using \pref{PropAnnList}(iii) since $a \in M_{\#}$. Further, since $Ann(I) = \displaystyle \bigcap_{a \in I} Ann(a)$ we have $Ann(I) = \{ U \}$. Using \pref{PropAnnList}(i) we have $Ann^{2}(I) = Ann(U) = M$. It follows that $I = Ann^{2}(I) = M$ since $I \in \mathfrak{I}$, which is a contradiction. Thus $I \cap M_{\#} = \emptyset$.
\end{proof}

\begin{remark}
 We established that $Ann^{2}$ is a closure operator. In fact, the following shows that, in general, $Ann^{2}$ need not be an algebraic closure operator.

 Let $M = \mathbb{3}^{\mathbb{N}}$ where $\mathbb{N}$ is the set of natural numbers, viz., $\mathbb{N} = \{ 1, 2, 3, \ldots \}$. Consider the subset $A \subseteq M$ given by $$A = \{ (T, U, U, U, \ldots ), (U, T, U, U, \ldots ), (U, U, T, U, \ldots), \ldots \}.$$ Then
 \begin{align*}
  Ann(A) & = \{ (U, x_{2}, x_{3}, x_{4}, \ldots) : x_{i} \in \{ T, F, U \} \text{ for } i \in \mathbb{N} \setminus \{ 1 \} \} \\
         & \cap \{ (y_{1}, U, y_{3}, y_{4}, \ldots) : y_{i} \in \{ T, F, U \} \text{ for } i \in \mathbb{N} \setminus \{ 2 \} \} \\ & \cap \{ (z_{1}, z_{2}, U, z_{4}, \ldots) : z_{i} \in \{ T, F, U \} \text{ for } i \in \mathbb{N} \setminus \{ 3 \} \} \cap \ldots \\
         & = \{ (U, U, U, U, \ldots) \}
 \end{align*}
 Thus $Ann^{2}(A) = M$.

 If $Ann^{2}$ is an algebraic closure operator then for $i \in I$ where $I$ is some index set, consider $B_{i} \subseteq A$ where $B_{i}$ is finite for each $i$ and $\bigcup Ann^{2}(B_{i}) = Ann^{2}(A) = M$. Then for each $i$, $B_{i}$ comprises elements of the form $\beta(j) = \begin{cases}
                                   T, & \text{ for fixed } j = k_{\beta}; \\
                                   U, & \text{ otherwise.}                                                                                                                           \end{cases}$ \\
 Along similar lines as above it follows that $Ann(B_{i})$ will have elements whose coordinates do not take value $U$ at infinitely many places. Thus all the elements in $Ann^{2}(B_{i})$ will have infinitely many coordinates that take value $U$. If $\bigcup Ann^{2}(B_{i}) = M$ then the element ${\bf T} = (T, T, T, T, \ldots)$ must belong in some $Ann^{2}(B_{i})$, a contradiction since ${\bf T}$ does not take value $U$ in infinitely many coordinates. Thus $Ann^{2}$ is not an algebraic closure operator.
\end{remark}

\subsection{Closed sets of $\mathbb{3}^{X}$} \label{SectionClosedSet3X}

We consider the $C$-algebra $\mathbb{3}^{X}$ and give a characterisation of the closed sets in $\mathfrak{I}$ with respect to operator $Ann^{2}$. To that aim in this section we consider the $C$-algebra in question to be precisely $\mathbb{3}^{X}$ for some set $X$.

\begin{theorem} \label{Thm-Closed-3-X}
Let $I \subseteq \mathbb{3}^{X}$. $I \in \mathfrak{I}$ if and only if there exists $Y \subseteq X$ such that
 \begin{enumerate}
  \item[\rm(P1)] for all $\alpha \in I$, for all $y \in Y$, $\alpha(y) = U$
  \item[\rm(P2)] for all $f : Y^{c} \rightarrow \mathbb{3}$ there exists $\alpha \in I$ such that $\alpha|_{Y^{c}} = f$.
 \end{enumerate}
\end{theorem}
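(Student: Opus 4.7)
The plan is to reduce the operator $Ann$ to a pointwise condition in $\mathbb{3}^{X}$ and then to track ``$U$-supports'' through the double application.

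First I would compute $Ann(a)$ concretely. Working pointwise, $\alpha \in Ann(a)$ means $(\alpha(x) \wedge a(x)) \vee (\neg \alpha(x) \wedge a(x)) = U$ for every $x \in X$. A short case analysis on $a(x)$ shows that when $a(x) \in \{T, F\}$ this value is $a(x)$ if $\alpha(x) \in \{T, F\}$, and $U$ only when $\alpha(x) = U$; while for $a(x) = U$ the value is $U$ regardless. Hence
\[
\alpha \in Ann(a) \iff \alpha(x) = U \text{ for every } x \text{ with } a(x) \neq U.
\]
Extending via $Ann(S) = \bigcap_{a \in S} Ann(a)$, if I define the \emph{non-$U$ support} $N(S) = \{x \in X : a(x) \neq U \text{ for some } a \in S\}$, then
\[
Ann(S) = \{\alpha \in \mathbb{3}^{X} : \alpha(x) = U \text{ for all } x \in N(S)\}.
\]

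Next I would compute $Ann^{2}(S)$. Applying the formula again, $Ann^{2}(S) = \{\alpha : \alpha(x) = U \text{ for all } x \in N(Ann(S))\}$, so the task reduces to identifying $N(Ann(S))$. I claim $N(Ann(S)) = N(S)^{c}$: if $x \in N(S)$ then every $\beta \in Ann(S)$ has $\beta(x) = U$, so $x \notin N(Ann(S))$; conversely, for $x \in N(S)^{c}$, the element $\beta$ defined by $\beta(x) = T$ and $\beta(y) = U$ for $y \neq x$ lies in $Ann(S)$ (vacuously, since $N(S)$ excludes $x$), witnessing $x \in N(Ann(S))$. Therefore
\[
Ann^{2}(S) = \{\alpha \in \mathbb{3}^{X} : \alpha(y) = U \text{ for all } y \in N(S)^{c}\},
\]
which consists of exactly those functions that are $U$ on the set $Y := N(S)^{c}$ and arbitrary on $Y^{c}$.

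For the forward direction of the theorem, suppose $I = Ann^{2}(I)$, and set $Y = N(I)^{c}$. Then (P1) holds by the definition of $N(I)$, and the explicit description of $Ann^{2}(I)$ above says precisely that every $f : Y^{c} \to \mathbb{3}$ is realised by some $\alpha \in Ann^{2}(I) = I$, giving (P2). For the converse, assume $Y$ satisfies (P1) and (P2). Then (P1) yields $N(I) \subseteq Y^{c}$, and (P2) (applied to any $f$ that takes a non-$U$ value at a given $x \in Y^{c}$) yields $Y^{c} \subseteq N(I)$, so $N(I) = Y^{c}$ and $Ann^{2}(I) = \{\alpha : \alpha|_{Y} \equiv U\}$. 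The inclusion $I \subseteq Ann^{2}(I)$ is immediate from (P1); for the reverse, take $\alpha$ with $\alpha|_{Y} \equiv U$, use (P2) to find $\beta \in I$ with $\beta|_{Y^{c}} = \alpha|_{Y^{c}}$, and observe that (P1) forces $\beta|_{Y} \equiv U$, so $\beta = \alpha$.

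The computational step that actually requires care is the pointwise evaluation of $\alpha \llbracket a, a \rrbracket$ and the identification $N(Ann(S)) = N(S)^{c}$; everything afterwards is set-theoretic bookkeeping. I anticipate no serious obstacle, but the cleanest write-up will probably isolate the ``$U$-support'' lemma as a standalone computation so that both directions of the theorem become transparent applications of a single formula.
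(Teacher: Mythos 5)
Your proof is correct, and at its core it runs on the same two facts the paper uses: the pointwise evaluation of $\alpha \llbracket a, a \rrbracket$ (forcing $\alpha(x)=U$ wherever $a(x)\neq U$) and the witness functions that are $T$ at a chosen point, or on a chosen set, and $U$ elsewhere. The difference is organisational: you first establish a closed form for $Ann(S)$ and $Ann^{2}(S)$ of an \emph{arbitrary} subset via the non-$U$ support $N(S)$, with the key identity $N(Ann(S)) = N(S)^{c}$, after which both directions of the theorem are set-theoretic bookkeeping; the paper instead verifies the two directions directly from the definitions, using the extension $\hat{f}$ (equal to $f$ on $Y^{c}$, $U$ on $Y$) for ($\Rightarrow$) and the element $\delta$ ($T$ on $Y$, $U$ elsewhere) to derive a contradiction for ($\Leftarrow$). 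Your packaging buys a little more: the formula $Ann^{2}(S) = \{\alpha : \alpha|_{N(S)^{c}} \equiv U\}$ immediately yields the paper's subsequent Proposition \ref{Prop-Closed-3-X} ($\mathfrak{I} = \{I_{A} : A \subseteq X\}$) and Lemma \ref{LemmaClosedBool}(i) ($Ann(I_{A}) = I_{A^{c}}$) as corollaries, whereas the paper proves those separately afterwards; the paper's in-line argument, on the other hand, avoids stating any auxiliary lemma. One small point to keep in mind when writing it up: your formula for $Ann(S)$ silently uses the convention $Ann(\emptyset) = \mathbb{3}^{X}$ (empty intersection), which is consistent with your argument but worth a word.
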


\begin{proof}
 ($\Leftarrow$) Let $I \subseteq \mathbb{3}^{X}$ such that there exists $Y \subseteq X$ which satisfies both the given conditions (P1) and (P2). We show that $Ann^{2}(I) = I$. In view of the fact that $Ann^{2}$ is extensive it suffices to show that $Ann^{2}(I) \subseteq I$.

 Let $\beta \in Ann^{2}(I)$. For $\beta |_{Y^{c}} : Y^{c} \rightarrow \mathbb{3}$ there exists $\alpha \in I$ such that $\alpha |_{Y^{c}} = \beta |_{Y^{c}}$. Moreover, $\alpha(y) = U$ for all $y \in Y$. We show that $\beta(y) = U$ for all $y \in Y$ so that $\beta = \alpha$ from which it follows that $\beta \in I$.

 Suppose if possible, that $\beta(y_{o}) \in \{ T, F \}$ for some $y_{o} \in Y$. Since $\beta \in Ann^{2}(I)$ we have $(\beta \llbracket \gamma, \gamma \rrbracket)(y_{o}) = U$ for all $\gamma \in Ann(I)$ and so $\gamma(y_{o}) = U$ for all $\gamma \in Ann(I)$. Consider $\delta \in \mathbb{3}^{X}$ given by
 \begin{equation*}
  \delta(x) = \begin{cases}
              T, & \text{ if } x \in Y, \\
              U, & \text{ otherwise.}                                                                                                                                                                                                                                                                                                                                                                                                                                                                                                                                                                                                                                                                                                                                                                                                                                                                                                                                                                                                   \end{cases}
 \end{equation*}
 Due to the fact that $\alpha(y) = U$ for all $y \in Y$, for all $\alpha \in I$, we infer that $\delta \llbracket \alpha, \alpha \rrbracket = {\bf U}$ so that $\delta \in Ann(I)$. However $\delta(y_{o}) = T \neq U$, a contradiction. Hence $\beta \in I$ and so $I \in \mathfrak{I}$.

 ($\Rightarrow$) Let $I \in \mathfrak{I}$. Consider the following.
 \begin{align*}
  A & = \{ x \in X : \alpha(x) = U \text{ for all } \alpha \in I \} \\
  B & = \{ x \in X : \alpha(x) \in \{ T, F \} \text{ for some } \alpha \in I \} = X \setminus A.
 \end{align*}
 We show that $Y = A$ is the required set. It is clear that $\alpha(y) = U$ for all $\alpha \in I$ and for all $y \in A$. Let $f : B \rightarrow \mathbb{3}$. Consider its extension $\hat{f} : X \rightarrow \mathbb{3}$ given by the following:
 \begin{equation*}
  \hat{f}(x) = \begin{cases}
                f(x), & \text{ if } x \in B; \\
                U, & \text{ if } x \in A.
               \end{cases}
 \end{equation*}
 Thus $\hat{f} |_{B} = \hat{f} |_{A^{c}} = f$. Let $\beta \in \mathbb{3}^{X}$. It is clear that $$\beta \in Ann(I) \Leftrightarrow \beta(z) = U \text{ for all } z \in B.$$ Consider $\beta \in Ann(I)$. It follows that $(\hat{f} \llbracket \beta, \beta \rrbracket) (z) = U$ for all $z \in B$. Also since $\hat{f}(y) = U$ for all $y \in A$ we have $(\hat{f} \llbracket \beta, \beta \rrbracket) (y) = U$ and so $\hat{f} \llbracket \beta, \beta \rrbracket = U$ from which it follows that $\hat{f} \in Ann^{2}(I) = I$. This completes the proof.
\end{proof}

\tref{Thm-Closed-3-X} equips us with a mechanism to identify the collection of closed sets in $\mathfrak{I}$ with respect to $Ann^{2}$.

\begin{definition} \label{Def-Closed-3-X}
 For $A \subseteq X$ define $I_{A} \subseteq \mathbb{3}^{X}$ by
 \begin{equation} \label{Def-Closed-3-X-Eqn}
  I_{A} = \{ \alpha \in \mathbb{3}^{X} : \alpha(y) = U \text{ for all } y \in A \}.
 \end{equation}
\end{definition}

\begin{proposition} \label{Prop-Closed-3-X}
 $\mathfrak{I} = \{ I_{A} : A \subseteq X \}$.
\end{proposition}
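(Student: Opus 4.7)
The plan is to derive Proposition \ref{Prop-Closed-3-X} as a straightforward corollary of Theorem \ref{Thm-Closed-3-X}, by verifying both set inclusions.

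For the inclusion $\{I_A : A \subseteq X\} \subseteq \mathfrak{I}$, I would fix an arbitrary $A \subseteq X$ and show that $I_A$ satisfies the two conditions (P1) and (P2) of \tref{Thm-Closed-3-X} with $Y = A$. Condition (P1) is immediate from the defining equation \eqref{Def-Closed-3-X-Eqn}. For (P2), given any $f : A^c \to \mathbb{3}$, I would simply extend $f$ to $\hat{f} : X \to \mathbb{3}$ by setting $\hat{f}(y) = U$ for all $y \in A$; then $\hat{f} \in I_A$ and $\hat{f}|_{A^c} = f$. Therefore $I_A \in \mathfrak{I}$.

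For the reverse inclusion $\mathfrak{I} \subseteq \{I_A : A \subseteq X\}$, I would take $I \in \mathfrak{I}$ and apply \tref{Thm-Closed-3-X} to obtain a subset $Y \subseteq X$ witnessing (P1) and (P2). The goal is then to identify $I$ with $I_Y$. The inclusion $I \subseteq I_Y$ is immediate from (P1) and the definition of $I_Y$. For $I_Y \subseteq I$, I would take $\alpha \in I_Y$ and consider the restriction $\alpha|_{Y^c} : Y^c \to \mathbb{3}$. By (P2), there exists $\beta \in I$ with $\beta|_{Y^c} = \alpha|_{Y^c}$, and by (P1), $\beta(y) = U$ for every $y \in Y$; since $\alpha \in I_Y$ we also have $\alpha(y) = U$ on $Y$, so $\alpha = \beta \in I$. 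Hence $I = I_Y$.

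There is no real obstacle here — all the substantive work has already been done inside the proof of \tref{Thm-Closed-3-X}. The only observation worth flagging is that the set $Y$ produced in the ``$\Rightarrow$'' direction of that theorem is precisely the set $\{x \in X : \alpha(x) = U \text{ for all } \alpha \in I\}$, so the correspondence $I \mapsto Y$ is canonical and gives a well-defined parametrisation of $\mathfrak{I}$ by subsets of $X$.
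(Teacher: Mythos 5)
Your proposal is correct and follows essentially the same route as the paper: both directions are obtained directly from Theorem \ref{Thm-Closed-3-X}, with $I_A$ verified to satisfy (P1) and (P2) for $Y=A$, and any $I \in \mathfrak{I}$ shown to equal $I_Y$ via the same (P1)/(P2) argument. No gaps.
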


\begin{proof}
 For $A \subseteq X$ consider $I_{A}$ as defined by \eqref{Def-Closed-3-X-Eqn}. It follows in a straightforward manner that $I_{A}$ satisfies (P1) and (P2) of \tref{Thm-Closed-3-X} for $Y = A$ so that $I_{A} \in \mathfrak{I}$.

 Conversely for $I \in \mathfrak{I}$ using \tref{Thm-Closed-3-X} we have $Y \subseteq X$ such that (P1) and (P2) are satisfied. We show that $I = I_{Y}$. Clearly $I \subseteq I_{Y}$ due to (P1). Conversely assume that $\alpha \in I_{Y}$ that is $\alpha(y) = U$ for all $y \in A$. Using (P2) of \tref{Thm-Closed-3-X} we have $\beta |_{Y^{c}} = \alpha |_{Y^{c}}$ for some $\beta \in I$. Property (P1) of \tref{Thm-Closed-3-X} ascertains that $\beta(y) = U$ for all $y \in Y$. It follows that $\alpha = \beta$ so that $\alpha \in I$.
\end{proof}

We make use of the following result to prove \tref{BooleanAlg}.

\begin{lemma} \label{LemmaClosedBool}
 For $A \subseteq X$ and $I_{A} \in \mathfrak{I}$ the following hold:
 \begin{enumerate}[\rm(i)]
  \item $Ann(I_{A}) = I_{A^{c}}$.
  \item $I_{A} \cap I_{B} = I_{A \cup B}$.
  \item $Ann( Ann(I_{A}) \cap Ann(I_{B})) = I_{A \cap B}$.
 \end{enumerate}
\end{lemma}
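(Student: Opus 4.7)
The plan is to first isolate a single pointwise computation that will unlock all three parts. Using the $C$-set action from \eref{e-m-m}, I would unpack $\alpha \llbracket \beta, \beta \rrbracket = (\alpha \wedge \beta) \vee (\neg \alpha \wedge \beta)$ coordinate-wise in $\mathbb{3}^X$ and check each of the three possible values of $\alpha(x)$ against the table for $\mathbb{3}$. The expected outcome is the clean criterion that $(\alpha \llbracket \beta, \beta \rrbracket)(x) = U$ if and only if $\alpha(x) = U$ or $\beta(x) = U$. Consequently, $\alpha \in Ann(\beta)$ iff for every $x \in X$ at least one of $\alpha(x), \beta(x)$ equals $U$, and this reduces membership in annihilators to a disjointness condition on the ``non-$U$ supports'' of the two functions.

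For (i), given this criterion, $\alpha \in Ann(I_A)$ means that for every $\beta \in I_A$ and every $x \in X$, either $\alpha(x) = U$ or $\beta(x) = U$. The condition is automatic on $A$ (since every $\beta \in I_A$ is $U$ there), so the content lies on $A^c$: for each $z \in A^c$ the element $\beta_z \in I_A$ defined by $\beta_z(z) = T$, $\beta_z(y) = U$ for $y \in A$ and arbitrarily on the rest (for concreteness, $U$) is in $I_A$ and witnesses $\beta_z(z) \neq U$, forcing $\alpha(z) = U$. Conversely any $\alpha$ with $\alpha|_{A^c} = U$ clearly annihilates all of $I_A$. Thus $Ann(I_A) = I_{A^c}$.

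Part (ii) is a direct unwinding of \dref{Def-Closed-3-X}: $\alpha \in I_A \cap I_B$ iff $\alpha(y) = U$ for every $y \in A$ and every $y \in B$, which is precisely $\alpha \in I_{A \cup B}$. Part (iii) then follows formally by chaining (i) and (ii): $Ann(I_A) \cap Ann(I_B) = I_{A^c} \cap I_{B^c} = I_{A^c \cup B^c} = I_{(A \cap B)^c}$, and a final application of (i) yields $Ann(I_{(A \cap B)^c}) = I_{A \cap B}$.

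There is no real obstacle; the only step that requires care is the opening pointwise case analysis — essentially redoing the computation already foreshadowed in the proof of \pref{PropAnnList}(v) — and the minor nuisance of producing an explicit witness $\beta_z \in I_A$ with $\beta_z(z) \neq U$ for each $z \in A^c$ in order to get the nontrivial inclusion in (i).
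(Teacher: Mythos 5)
Your proposal is correct and follows essentially the same route as the paper: part (i) uses the identical witnesses $\beta_z$ (value $T$ at $z$, $U$ elsewhere) for the forward inclusion and the same pointwise reasoning for the converse, while (ii) is the same direct unwinding and (iii) the same formal chaining of (i) and (ii). Your only cosmetic difference is stating the pointwise criterion $(\alpha \llbracket \beta, \beta \rrbracket)(x) = U \Leftrightarrow \alpha(x) = U \text{ or } \beta(x) = U$ explicitly up front, which the paper leaves implicit in its case analysis.
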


\begin{proof}$\;$
 \begin{enumerate}[\rm(i)]
  \item Let $\alpha \in Ann(I_{A})$. In view of \dref{Def-Closed-3-X} and \pref{Prop-Closed-3-X} it suffices to show that $\alpha(y) = U$ for all $y \in A^{c}$. For each $y \in A^{c}$ consider $\beta_{y} \in \mathbb{3}^{X}$ given by
 \begin{equation*}
  \beta_{y}(x) = \begin{cases}
                  T, & \text{ if } x = y; \\
                  U, & \text{ otherwise.}
                 \end{cases}
 \end{equation*}
 It is straightforward to see that $\beta_{y} \in I_{A}$ for all $y \in A^{c}$. Thus $\alpha \llbracket \beta_{y}, \beta_{y} \rrbracket = U$ for all $y \in A^{c}$ and so $(\alpha \llbracket \beta_{y}, \beta_{y} \rrbracket) (y) = U$ for all $y \in A^{c}$. Since $\beta_{y}(y) = T$ it follows that $\alpha(y) = U$ for all $y \in A^{c}$.

 For the reverse inclusion consider $\alpha \in I_{A^{c}}$ and $\beta \in I_{A}$. Using \dref{Def-Closed-3-X} and \pref{Prop-Closed-3-X} we have $\alpha(y) = U$ for all $y \in A^{c}$, so that $(\alpha \llbracket \beta, \beta \rrbracket) (y) = U$ for all $y \in A^{c}$. Thus $\beta(y) = U$ for all $y \in A$ so that $(\alpha \llbracket \beta, \beta \rrbracket) (y) = U$ for all $y \in A$. Thus $\alpha \in Ann(I_{A})$ and consequently $Ann(I_{A}) = I_{A^{c}}$.

 \item Consider $\alpha \in I_{A} \cap I_{B}$ and $y \in A \cup B$. It suffices to show that $\alpha(y) = U$. If $y \in A$ then $\alpha(y) = U$ since $\alpha \in I_{A}$. Along similar line $\alpha(y) = U$ if $y \in B$ so that $\alpha \in I_{A \cup B}$.

     For the reverse inclusion consider $\alpha \in I_{A \cup B}$. For $y \in A \subseteq A \cup B$ we have $\alpha(y) = U$ so that $\alpha \in I_{A}$. Proceeding along similar lines we can show that $\alpha \in I_{B}$ from which the result follows.

 \item Using \lref{LemmaClosedBool}(i) and \lref{LemmaClosedBool}(ii) we have $Ann( Ann(I_{A}) \cap Ann(I_{B})) = Ann( I_{A^{c}} \cap I_{B^{c}}) = Ann( I_{A^{c} \cup B^{c}}) = I_{(A^{c} \cup B^{c})^{c}} = I_{A \cap B}$.
 \end{enumerate}
\end{proof}

\begin{theorem} \label{BooleanAlg}
The set $\mathfrak{I}$ of closed sets of $\mathbb{3}^{X}$ with respect to $Ann^{2}$ is a Boolean algebra with respect to the operations
\begin{align*}
 \neg I & = Ann(I) \\
 I_{1} \wedge I_{2} & = I_{1} \cap I_{2} \\
 I_{1} \vee I_{2} & = Ann( Ann(I_{1}) \cap Ann(I_{2}))
\end{align*}
and $\{ {\bf U} \}$ and $\mathbb{3}^{X}$ as the constants $0$ and $1$ respectively. Moreover, $\mathfrak{I} \cong \mathbb{2}^{X}$ and is therefore complete.
\end{theorem}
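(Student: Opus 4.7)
The plan is to transport the Boolean-algebra structure of the power set $\mathbb{2}^{X} = \mathcal{P}(X)$ through an explicit bijection with $\mathfrak{I}$, using \pref{Prop-Closed-3-X} and \lref{LemmaClosedBool}. I would define $\Phi : \mathbb{2}^{X} \to \mathfrak{I}$ by $\Phi(A) = I_{A^{c}}$, the complement being inserted so that the Boolean zero $\emptyset$ maps to the bottom element $\{ {\bf U} \} = I_{X}$ of $\mathfrak{I}$. Once $\Phi$ is shown to be a bijective homomorphism, it will follow simultaneously that the operations described are well-defined on $\mathfrak{I}$, that the Boolean-algebra axioms hold, that $\mathfrak{I} \cong \mathbb{2}^{X}$, and that $\mathfrak{I}$ is complete (inherited from the power-set lattice).

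First I would verify bijectivity of $\Phi$. Surjectivity is exactly \pref{Prop-Closed-3-X}. For injectivity it suffices to show $A \mapsto I_{A}$ is injective: given $A \neq B$, pick (without loss of generality) $y \in A \setminus B$ and consider $\beta_{y} \in \mathbb{3}^{X}$ with $\beta_{y}(y) = T$ and $\beta_{y}(x) = U$ for $x \neq y$. Then $\beta_{y} \in I_{B}$ while $\beta_{y} \notin I_{A}$, so $I_{A} \neq I_{B}$, as required.

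Next I would verify that $\Phi$ intertwines the operations by direct computation from \lref{LemmaClosedBool}. By part (i), $\neg \Phi(A) = Ann(I_{A^{c}}) = I_{A} = \Phi(A^{c})$. By part (ii), $\Phi(A) \wedge \Phi(B) = I_{A^{c}} \cap I_{B^{c}} = I_{A^{c} \cup B^{c}} = \Phi(A \cap B)$. By part (iii), $\Phi(A) \vee \Phi(B) = Ann(Ann(I_{A^{c}}) \cap Ann(I_{B^{c}})) = I_{A^{c} \cap B^{c}} = \Phi(A \cup B)$. For constants, $\Phi(\emptyset) = I_{X} = \{ {\bf U} \}$ and $\Phi(X) = I_{\emptyset} = \mathbb{3}^{X}$, matching $0$ and $1$ respectively. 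Transport of structure through the bijection $\Phi$ then delivers the Boolean-algebra axioms on $\mathfrak{I}$ and the claimed isomorphism, whence completeness.

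I do not foresee a substantive obstacle here: all of the genuine work has been packaged into \pref{Prop-Closed-3-X} and \lref{LemmaClosedBool}, so the proof is a mostly mechanical assembly. The only small subtlety worth flagging is the book-keeping with complements in the definition of $\Phi$: the naive parameterisation $A \mapsto I_{A}$ is an anti-isomorphism (it sends $\cup$ to $\wedge$ and $\cap$ to $\vee$), and one must either compose with set-complementation, as above, or invoke the self-duality of Boolean algebras to conclude.
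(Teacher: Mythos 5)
Your proof is correct and follows essentially the same route as the paper: both arguments reduce everything to \pref{Prop-Closed-3-X} and \lref{LemmaClosedBool}, which translate the operations on $\mathfrak{I}$ into complementation, union and intersection of the index sets $A$. The only difference is organisational: the paper verifies the Boolean axioms directly for the sets $I_{A}$ and then remarks that $I_{A} \mapsto A^{c}$ is an isomorphism onto $\mathbb{2}^{X}$, whereas you set up the bijection (your $\Phi$ is its inverse) first and transport the structure of $\mathbb{2}^{X}$, which spares the explicit axiom checks and makes the injectivity argument explicit.
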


\begin{proof}
 We rely on the representation of $\mathfrak{I}$ as given in \pref{Prop-Closed-3-X}. In view of \lref{LemmaClosedBool} we show that the operations given as follows define a Boolean algebra on $\mathfrak{I}$:
  \begin{align*}
  \neg I_{A} & = I_{A^{c}} \\
 I_{A} \wedge I_{B} & = I_{A \cup B} \\
 I_{A} \vee I_{B} & = I_{A \cap B}
 \end{align*}

 The verification is straightforward and involves set theoretic arguments.

 Let $A, B \subseteq X$. Using \lref{LemmaClosedBool} we have $I_{A} \wedge I_{B} = I_{A \cup B} = I_{B \cup A} = I_{B} \wedge I_{A}$ and similarly $I_{A} \vee I_{B} = I_{A \cap B} = I_{B \cap A} = I_{B} \vee I_{A}$. Along similar lines the axioms of associativity, idempotence, absorption and distributivity can be verified so that $\langle \mathfrak{I}, \vee, \wedge \rangle$ is a distributive lattice.

 Note that $I_{X} = \{ {\bf U} \}$ while $I_{\emptyset} = \mathbb{3}^{X}$. Using \lref{LemmaClosedBool} we have $I_{A} \wedge I_{X} = I_{A \cup X} = I_{X}$ and $I_{A} \vee I_{\emptyset} = I_{A \cap \emptyset} = I_{\emptyset}$ for all $A \subseteq X$. Also $I_{A} \wedge Ann(I_{A}) = I_{A} \wedge I_{A^{c}} = I_{A \cup A^{c}} = I_{X}$ and $I_{A} \vee Ann(I_{A}) = I_{A} \vee I_{A^{c}} = I_{A \cap A^{c}} = I_{\emptyset}$.

 Hence $\langle \mathfrak{I}, \vee, \wedge, \neg, I_{X}, I_{\emptyset} \rangle$ is a Boolean algebra. It is a straightforward verification to ascertain that the assignment given by $I_{A} \mapsto A^{c}$ from $\mathfrak{I}$ to $\mathbb{2}^{X}$ is a Boolean algebra isomorphism.
\end{proof}

We now give a classification of elements of $M = \mathbb{3}^{X}$ which segregates the elements of $\mathbb{2}^{X} (= M_{\#})$ into one class.

\begin{theorem} \label{ThmPartition}
For each $A \subseteq X$ define $S_{A} = \{ \alpha \in \mathbb{3}^{X} : Ann(\alpha) = I_{A} \}$. The collection $\{ S_{A} : A \subseteq X \}$ forms a partition of $\mathbb{3}^{X}$ in which all the elements of $\mathbb{2}^{X}$ form a single equivalence class.
\end{theorem}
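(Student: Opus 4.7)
The plan is to show two things: that every element of $\mathbb{3}^X$ lies in some $S_A$ and that different $S_A$'s are disjoint, and then to identify which class $\mathbb{2}^X$ sits in. Disjointness is automatic: if $\alpha \in S_A \cap S_B$ then $I_A = Ann(\alpha) = I_B$, which forces $A = B$ since the assignment $A \mapsto I_A$ is injective by \pref{Prop-Closed-3-X} (equivalently, by the isomorphism $\mathfrak{I} \cong \mathbb{2}^X$ in \tref{BooleanAlg}). So the nontrivial content is that the $S_A$'s cover $\mathbb{3}^X$, i.e., that $Ann(\alpha)$ is always a closed set of the specific form $I_A$ for some $A \subseteq X$.

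First I would compute $Ann(\alpha)$ for a single element $\alpha \in \mathbb{3}^X$ explicitly. Fixing $\beta \in \mathbb{3}^X$, the condition $\beta \in Ann(\alpha)$ translates pointwise: for every $x \in X$, one needs $(\beta(x) \wedge \alpha(x)) \vee (\neg\beta(x) \wedge \alpha(x)) = U$. A short case analysis in $\mathbb{3}$ shows that this holds automatically when $\alpha(x) = U$, and that it forces $\beta(x) = U$ whenever $\alpha(x) \in \{T, F\}$ (since both $T$ and $F$ would make the expression evaluate to $\alpha(x)$, not $U$). Letting $A_\alpha = \{x \in X : \alpha(x) \neq U\}$, this shows $Ann(\alpha) = I_{A_\alpha}$. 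Hence $\alpha \in S_{A_\alpha}$, and every element of $\mathbb{3}^X$ is covered. Combined with disjointness, $\{S_A : A \subseteq X\}$ is a partition (with some $S_A$ possibly empty, namely when $A$ is not realised as $A_\alpha$ for any $\alpha$; but this does not affect the partition statement for the nonempty classes, and in fact every $A \subseteq X$ is realised by taking any $\alpha$ with $\alpha^{-1}(\{T,F\}) = A$).

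Finally, to locate $\mathbb{2}^X$, observe that $\alpha \in \mathbb{2}^X$ precisely when $\alpha(x) \in \{T, F\}$ for all $x$, i.e., $A_\alpha = X$. Thus every $\alpha \in \mathbb{2}^X$ lies in $S_X$. Conversely, if $\alpha \in S_X$ then $A_\alpha = X$, meaning $\alpha(x) \in \{T, F\}$ for all $x$, so $\alpha \in \mathbb{2}^X$. Therefore $S_X = \mathbb{2}^X$ is a single block of the partition, as claimed.

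The only mildly delicate step is the pointwise case analysis in the first paragraph, specifically verifying that $\beta(x) \in \{T, F\}$ is incompatible with the annihilation identity whenever $\alpha(x) \in \{T, F\}$; this is what pins down $A_\alpha$ exactly as the set of non-$U$ coordinates of $\alpha$ and is what makes the whole classification work.
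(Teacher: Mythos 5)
Your proof is correct, but it reaches the conclusion by a more computational route than the paper. The paper never computes $Ann(\alpha)$ explicitly: it gets coverage abstractly, noting that $Ann^{2}(Ann(\alpha)) = Ann^{3}(\alpha) = Ann(\alpha)$ by \cref{PropCubeOp}, so $Ann(\alpha)$ is a closed set and hence equals some $I_{A}$ by \pref{Prop-Closed-3-X}; it then handles nonemptiness of each $S_{A}$ by a separate witness construction ($T$ on $A$, $U$ on $A^{c}$) and proves $S_{X} = \mathbb{2}^{X}$ by yet another ad hoc argument (a contradiction using a function that is $T$ at a single point). You instead do one pointwise case analysis showing $Ann(\alpha) = I_{A_{\alpha}}$ with $A_{\alpha} = \alpha^{-1}(\{T,F\})$, and this single computation delivers coverage, nonemptiness of every class, and the identification $S_{X} = \mathbb{2}^{X}$ uniformly; it is more elementary (no appeal to the Galois-connection identity $Ann^{3}=Ann$ or to the classification of closed sets beyond the definition of $I_{A}$) and more informative, since it tells you exactly which block each $\alpha$ lands in. The only point to spell out slightly more carefully is the injectivity of $A \mapsto I_{A}$, which underlies disjointness: \pref{Prop-Closed-3-X} as stated only says every closed set is of this form, and \tref{BooleanAlg} implicitly assumes injectivity rather than proving it; but it is a one-line check (for $x \in A \setminus B$, the element that is $T$ at $x$ and $U$ elsewhere lies in $I_{B}$ but not in $I_{A}$), and the paper itself is no more explicit on this point.
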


\begin{proof}
 We first show that $S_{A} \neq \emptyset$ for any $A \subseteq X$. To that aim consider $\alpha \in \mathbb{3}^{X}$ given by
 \begin{equation*}
  \alpha(x) = \begin{cases}
               U, & \text{ if } x \in A^{c}; \\
               T, & \text{ otherwise.}
               \end{cases}
 \end{equation*}
 Using \dref{Def-Closed-3-X} and \pref{Prop-Closed-3-X} it is straightforward to verify that $Ann(\alpha) = I_{A}$. Consequently $S_{A} \neq \emptyset$.

 It is self-evident that $\alpha \in S_{A} \cap S_{B}$ is a violation of the well-definedness of $Ann(\alpha)$ from which it follows that $S_{A} \cap S_{B} = \emptyset$ for $A, B \subseteq X$ where $A \neq B$.

 Note that for any $\alpha \in \mathbb{3}^{X}$ we have $Ann(\alpha) \in \mathfrak{I}$ that is $Ann(\alpha) = I_{A}$ for some $A \subseteq X$, since $Ann^{2}(Ann(\alpha)) = Ann^{3}(\alpha) = Ann(\alpha)$ using \cref{PropCubeOp}. Thus $Ann(\alpha) = I_{A}$ for some $A \subseteq X$ so that $\alpha \in S_{A}$. Therefore $\displaystyle \bigcup_{A \subseteq X} \{ S_{A} : A \subseteq X \} = \mathbb{3}^{X}$ and hence the collection $\{ S_{A} : A \subseteq X \}$ forms a partition of $\mathbb{3}^{X}$.

 Further, for $\alpha \in \mathbb{2}^{X}$ we have $Ann(\alpha) = \{ {\bf U } \} = I_{X}$ so that $\alpha \in S_{X}$. Conversely any $\alpha \in S_{X}$ would satisfy $Ann(\alpha) = I_{X} = \{ {\bf U } \}$. If $\alpha(x_{o}) = U$ for some $x_{o} \in X$ then it follows that $\beta \in \mathbb{3}^{X}$ given by
 \begin{equation*}
  \beta(x) = \begin{cases}
             T, & \text{ if } x = x_{o}; \\
             U, & \text{ otherwise.}
             \end{cases}
 \end{equation*}
 satisfies $\beta \llbracket \alpha, \alpha \rrbracket = {\bf U}$ and so ${\bf U} \neq \beta \in Ann(\alpha)$ which is a contradiction. Thus $\alpha(x) \in \{ T, F \}$ for all $x \in X$ from which it follows that $\alpha \in \mathbb{2}^{X}$. Hence the equivalence class $S_{X} = \mathbb{2}^{X}$.
\end{proof}

We conclude this section with some remarks on annihilators.

\begin{remark}$\;$
 \begin{enumerate}[(i)]
  \item The statement $Ann(\alpha) = \{ U \} \Leftrightarrow \alpha \in M_{\#}$ holds in $\mathbb{3}^{X}$ but need not be true in general.

  Consider $M = \{ (T, T), (F, F), (U, U), (F, U), (T, U) \} \leq \mathbb{3}^{2}$ and $(T, U) \in M$. Then $Ann(T, U) = \{ \beta \in M : \beta \llbracket (T, U), (T, U) \rrbracket = (U, U) \}$. Hence for $(x, y) \in Ann(T, U)$ we have $((x \wedge T) \vee (\neg x \wedge T), (y \wedge U) \vee (\neg y \wedge U)) = (U, U)$ and so $(x \vee \neg x, U) = (U, U)$. It follows that $x = U$ and so $Ann(T, U) = \{ (U, U) \}$. However $(T, U) \notin M_{\#}$.

  \item The only closed sets of $M$ are $M$ and $\{ (U, U) \}$. Again note that the collection of closed sets is a Boolean algebra.

  \item For $I \subseteq M$ where $M \leq \mathbb{3}^{X}$ we have $Ann_{M}(I) = Ann_{\mathbb{3}^{X}}(I) \cap M$.

  Let $\alpha \in Ann_{M}(I)$. Clearly $\alpha \in M$ and $\alpha \in Ann_{\mathbb{3}^{X}}(I)$. Conversely suppose $\alpha \in Ann_{\mathbb{3}^{X}}(I) \cap M$. Then it is clear that $\alpha \in Ann_{M}(I)$.

  \item Thus on applying $Ann$ to the previous statement and making appropriate substitutions we have $Ann_{M}^{2}(I) = Ann_{\mathbb{3}^{X}}(Ann_{\mathbb{3}^{X}}(I) \cap M) \cap M$.
 \end{enumerate}
\end{remark}


\section{Future work} \label{SectionChpt6Concl}

A point of interest would be to enquire whether the representation of elements through atoms by $\oplus$ as defined in this work is unique. Further, by the definition of atomic $C$-algebras proposed by us, we note that $\mathbb{3}^{X}$ is not atomic for infinite $X$. It is therefore desirable to obtain a suitable definition for atomicity so that $\mathbb{3}^{X}$ is atomic for any set $X$. It remains to be seen what characterisation may be achieved for the closed sets of an arbitrary $C$-algebra with $T, F, U$, and whether the closed sets in such a $C$-algebra always form a Boolean algebra.

\end{document}